\def\@fnsymbol#1{\ensuremath{\ifcase#1\or \dagger\or \ddagger\or
   \mathsection\or \mathparagraph\or \|\or **\or \dagger\dagger
   \or \ddagger\ddagger \else\@ctrerr\fi}}
\newcommand*\samethanks[1][\value{footnote}]{\footnotemark[#1]}
\newtheorem{theorem}{Theorem}[section]
\newtheorem{lemma}[theorem]{Lemma}
\newtheorem{proposition}[theorem]{Proposition}
\theoremstyle{definition}
\newtheorem{example}[theorem]{Example}
\newtheorem{definition}[theorem]{Definition}
\crefname{program}{Program}{Programs}
\newcommand{\rev}{\mathsf{Rev}}
\renewcommand{\S}{\mathcal{S}}
\newcommand{\p}{\mathbf{p}}
\newcommand{\nt}{N_\mathrm{T}}
\newcommand{\I}{\mathcal{I}}
\newcommand{\nxt}{\mathrm{next}}
\renewcommand{\L}{\mathcal{L}}
\DeclareMathOperator{\E}{\mathbb{E}}
\definecolor{myblue}{rgb}{0.15, 0.1, 0.95}
\definecolor{mygreen}{rgb}{0.15, 0.65, 0.25}
\definecolor{myred}{rgb}{0.75, 0.25, 0.15}
\title{Optimal Pricing Schemes for an Impatient Buyer}
\author{Yuan Deng\thanks{Google Research. Email: \texttt{\{dengyuan,maojm,balusivan\}@google.com}.} \and Jieming Mao\samethanks[1] \and Balasubramanian Sivan\samethanks[1] \and Kangning Wang\thanks{Stanford University. Email: \texttt{knwang@stanford.edu}. This work was partially done while the author was at Duke University and was an intern at Google Research.}}
\date{}
\begin{document}
\maketitle

\begin{abstract}
A patient seller aims to sell a good to an impatient buyer (i.e., one who discounts utility over time). The buyer will remain in the market for a period of time $T$, and her private value is drawn from a publicly known distribution. What is the revenue-optimal pricing-curve (sequence of (price, time) pairs) for the seller? Is randomization of help here? Is the revenue-optimal pricing curve computable in polynomial time? We answer these questions in this paper. We give an efficient algorithm for computing the revenue-optimal pricing curve. We show that pricing curves, that post a price at each point of time and let the buyer pick her utility maximizing time to buy, are revenue-optimal among a much broader class of sequential lottery mechanisms. I.e., mechanisms that allow the seller to post a menu of lotteries at each point of time cannot get any higher revenue than pricing curves. We also show that the even broader class of mechanisms that allow the menu of lotteries to be adaptively set, can earn strictly higher revenue than that of pricing curves, and the revenue gap can be as big as the support size of the buyer's value distribution.
\end{abstract}

\section{Introduction}
The seminal paper of~\citet{Stokey79} introduced the approach of using intertemporal price discrimination as a profitable strategy for the seller, when dealing with buyers who discount future utilities. Time-varying airline-ticket pricing, hotel-room pricing, concert-ticket pricing, ``sales'' in retail pricing are a sampling of the numerous instances in which intertemporal price discrimination is routinely employed. In many of these settings, the seller is more patient than the buyer, i.e., the seller discounts the future utilities less aggressively than the buyers. This aspect of the setting was captured in a followup paper by~\citet{LandsbergerM85}, by allowing for the seller and buyer to discount the future at different rates. The intertemporal price discrimination problem has a rich history in the economics literature (see references of~\citet{LandsbergerM85}). While a lot is known, including the fact that intertemporal pricing strategy is profitable when the buyer's discount rate is higher, three fundamental questions remain open. 1) What is the revenue-optimal pricing curve? (A pricing curve consists of a sequence of prices at a finite number $N_T$ of timestamps $t_1 \leq \cdots \leq t_{N_T}$: $(p(t_1), \ldots, p(t_{N_T}))$. A buyer with value $v$ chooses to buy at his utility-maximizing time stamp, namely $\arg \max_j (v-p(t_j))e^{-t_j}$ (or not buy at all). See \cref{sec:prelim} for a formal definition.) 2) How does the pricing curve's revenue compare with more general mechanisms, including randomized ones like sequential lotteries? 3) Can the revenue-optimal pricing curve be computed efficiently? These questions are interesting both from scientific and commercial points-of-view, and are non-trivial even in the single buyer case. The goal of this paper is to understand this problem in depth.

Concretely, consider the problem of selling an item to a (unit-demand) buyer. The buyer’s private value for the item is drawn from a commonly known distribution. The buyer’s utility decays with time, and is captured by a commonly known discounting factor $\delta(t)$ (the buyer's utility of purchasing at price $p$ at time $t$ is $(v-p)\cdot\delta(t)$), while the seller does not discount future utilities. The buyer remains in the market only for a finite time, from $t = 0$ to $t = T$. The decision problem facing the buyer is whether to spend more to get the item immediately after entering the market, or pay less and get the discounted utility later. Knowing that the buyer faces this tradeoff, what is the seller’s revenue-optimal pricing curve? While the pricing curve, a deterministic object, is the central object of our study in this paper owing to its ubiquitous presence, we also analyze the question of when and whether randomization helps. The goal here is to fully understand the power and limitation of pricing curves.

From a computational point of view as well, as mentioned earlier, the central question remained open: can the revenue-optimal pricing curve be computed in time polynomial in the support-size $|V|$ of the buyer's value distribution? The algorithmic challenge stems from having to jointly compute the optimal timestamps at which to offer the prices, and the optimal prices to offer. The fact that both the timestamps and the prices can be chosen from an uncountable continuum (even though values are drawn from a finite support distribution), calls for making insightful observations to obtain even an exponential-time algorithm.

\paragraph{Our results.} First, we characterize and give a computationally efficient algorithm for the revenue-optimal pricing curve for the seller. Second, we show that this revenue-optimal pricing curve is optimal among the much broader class of randomized mechanisms that let the seller announce at $t = 0$, a menu of lotteries for each time from $t = 0$ to $T$. A lottery menu will consist of a collection of entries, where each entry is a probability of obtaining the item and the price to pay if the item was allocated. This is the most general class of non-adaptive mechanisms possible. Third, we show that the even broader class of mechanisms that let the seller announce adaptive lottery menus, namely, menus designed as a function of which menu option was purchased by the buyer in the past, is strictly more powerful (note that pricing curves are by definition non-adaptive, because, once the buyer purchases at a price, she deterministically receives the item and is out of the market). We show that the gap between adaptive lotteries and pricing curves can nearly be the support size $|V|$ of the buyer's value distribution -- this is as high as the gap can be because, pricing curves can trivially get a $|V|$ approximation to the social welfare.

\paragraph{Challenges and techniques.} Finiteness of total time $T$ is an important source of complication in this problem. Usually in time-discounted settings the total time is taken to be infinite. When the seller does not discount the future, infinite time makes the problem easy because the seller can extract the entire social surplus as revenue.\footnote{For every value in the support of the distribution, the seller can create a (price, time) pair such that only that particular value will buy at this price. This is achieved as follows. At $t=0$ post a price equal to the largest value in the support minus a tiny $\varepsilon$. After a long while, post a price at $t=t'$ equal to the second highest value minus $\varepsilon$. Given that there is a lot to lose via discounting, the highest value in the support will buy at a price of their value minus $\varepsilon$ at $t=0$ instead of waiting till $t'$. Likewise the second highest will buy at $t'$ at a price of second highest value minus $\varepsilon$.} Our results produce the optimal pricing curve for any given value of $T$, and a lot of the technical simplifications afforded by infinite time vanish when the total time is finite.

To compute the optimal pricing curve, we write a mathematical program that captures the expected revenue of the seller in the objective, with IC and IR properties as constraints. The catch is that this program is not an LP or even a convex program. The fact that utility discounting is multiplicative means that, regardless of the exact functional form of the discounting, the program is necessarily non-linear. Instead of solving the program directly, we analyze the program to glean several structural properties of the optimal solution; in particular we obtain properties of the price $p(v)$ at which a buyer with value $v$ in the support of the distribution will buy in the optimal pricing curve. We establish how the prices $p(v_i)$ and $p(v_{i+1})$ of two successive values in the support must be related, and show that they should either be equal, or be related as a function of $v_{i+1}$ and $v_i$. This relation is informative enough to suggest a natural algorithm to compute the pricing curve, albeit running in exponential time: enumerate over all partitions of the support (where a partition consists of a collection of sets of contiguous elements in the support). Given a partition, use the relation established above to obtain the optimal prices for that partition, and return the partition with the optimal revenue. Even in this exponential time algorithm, the process of obtaining the prices given a partition requires more ideas than just the price relation discussed above: for instance, it requires establishing a certain monotonicity that permits us to do a binary search to obtain the prices. From here, we go on to provide a polynomial-time algorithm by developing an efficient method to compute the revenue-optimal partition of values in the support. There are many insights that go into the development, and proof of optimality of the efficient algorithm, including establishing the continuity and monotonicity of the partition functions in time limit $T$, the uniqueness of the optimal pricing curve, etc.

\paragraph{Conceptual contribution and significance.} Apart from the fundamental nature of the optimal pricing curve problem, there is a conceptual contribution in analyzing the power of randomization. Our answer for when and how randomization helps is complete and nuanced. We show that randomization is powerless when the mechanism is forced to be non-adaptive. But coupled with the power of adaptivity, a randomized mechanism can be extremely powerful, earning nearly $|V|$ times higher revenue than any non-adaptive mechanism. This power of adaptive mechanisms lies in the ability to sharply price discriminate in a very short frame of time. Note that even a non-adaptive mechanism like pricing curve can price discriminate (after all, that is what intertemporal price discrimination is all about). In fact, if $T=\infty$, as discussed earlier, a pricing curve can extract the entire social surplus as revenue, and thus there is no revenue gap due to adaptivity given infinite time. \emph{The nuance lies in the fact that non-adaptive mechanisms cannot do very effective price discrimination in short time frames, while adaptive mechanisms can.} And thus when $T$ is finite, the adaptive mechanisms' ability to price discriminate in short time frames is fully exercised leading to dramatic revenue gaps. Our proof that establishes the revenue gap is illuminating: it shows why non-adaptive mechanisms cannot price-discriminate in short time frames.

The fact that deterministic mechanisms are optimal in single-parameter settings is very well known~\citep{Myerson81}. It is also quite well known that randomization affords significantly higher revenue in multi-parameter settings. However, in the mystic twilight of single-parameter settings (value is a single number), with the time element and a discounting factor, the power of randomization was thus far unknown. Our results provide significant understanding of the situation.

\paragraph{Price discrimination as a function of the time horizon $T$.} Our algorithmic result for finding the optimal pricing curve also gives conceptual insights into how the optimal pricing curve depends on the horizon $T$. As we will see in \cref{sec:pricing}, the optimal pricing curve partitions $V$ into groups of adjacent values, and designs a targeted price and time pair for each group. As mentioned earlier, when $T = \infty$, the pricing curve can do perfect price discrimination, and thus the corresponding grouping will have each value in the support in its own group. When $T$ goes down, our algorithmic result shows that the evolution of the grouping has a nice structure: the new grouping is generated by merging adjacent groups in the old grouping. When $T$ goes to 0, the mechanism only sells to one group of values with the same price. This insight is at the core of developing a polynomial-time algorithm for computing the revenue-optimal pricing curve.

%And if this group has size at least 2, the last merging happens exactly at time $0$ (Lemma ??).

\paragraph{Computational benefits over blackbox solvers.} We remark that our approach gives an explicit and simple-to-implement algorithm for computing the optimal pricing curve. As we noted earlier, the straightforward mathematical program for our problem is non-linear. Even after proving some lemmas that simplify the program to get a convex program (namely, (\ref{lp:3})), it is not clear if there is an immediate blackbox poly-time solution to this program. I.e., not all convex programs are polynomial-time solvable (unless P=NP). For example, for feasibility checking using the ellipsoid algorithm it is not clear if a polynomial number of iterations are enough to get the volume of the bounding ellipsoid below the volume of the convex set of feasible solutions. One can indeed get an approximately optimal / approximate feasible solution with the ellipsoid algorithm, but it is unclear beyond that. Further, even if a polynomial-time solution were possible with a convex program solver, the algorithm we develop in Section 4.3 is simpler, and likely to be much faster, than a black-box convex program solver. The structural insights on the optimal solution that lead to this algorithm are interesting on their own, and cannot be obtained by a solver.

\paragraph{Related work comparison.} We begin with the most closely related work to ours, namely, that of \citet{Shneyerov14}. That paper adopts almost the same setting -- the buyers discount the future while the seller does not, and there is a window of $T$ to sell the item. \citet{wang2001optimal} studies a similar setting with an infinite time horizon.

In both of these works, the revenue-maximizing pricing curve is characterized via differential equations. However, there is a crucial difference from our work: We allow arbitrary value distributions (we use finite support in the proof for cleanliness of exposition), while both these works make fairly strong assumptions about them, e.g., having increasing virtual values or the profit function being concave. These assumptions make the monotonicity constraints not binding in our analysis in \cref{sec:pricing}, thus enabling their differential-equation characterizations. On the other hand, for general value distributions, the monotonicity constraints may well be binding (we provide a simple example with this behavior in \cref{sec:pricing}) and it was unclear how to characterize the optimal pricing curve or how to compute it efficiently. To the best of our knowledge, our work is the first to address the computational aspect of the problem by providing a poly-time algorithm for the revenue-optimal pricing curve, for general value distributions. Moreover, our analysis lead to behavioral characterizations for it, where the ``grouping'' behavior in the middle was not seen in these works and poses additional challenge in designing a poly-time algorithm.

Another closely related work is that of \citet{CEP19}, where they study the design of optimal mechanisms in the presence of asymmetric discounting between the seller and the buyers: Both the buyers and seller are discounting the future at different rates, but there is no time limit $T$ to sell the item. As a comparison point, \citet{CEP19} show that the pricing curve they obtain is optimal among the class of universally truthful randomized mechanisms -- this is a restricted class of mechanisms, and for instance, does not include the lottery menu based mechanisms we consider; our proof shows that pricing curves are optimal among the much broader class of all non-adaptive randomized mechanisms, including those that are truthful only in expectation. Also, as discussed earlier we precisely pinpoint the class of mechanisms whose revenue can be achieved by pricing curves, and a meaningful class that obtains higher revenue.

The work of~\citet{Briceno-AriasCP17} also studies the problem of computing the optimal pricing curves with a random number of buyers arriving over time. There are two differences from our work. First, their paper primarily addresses the case where there are two buyer types, and the extension to arbitrary number of types requires an assumption. Second, their Poisson model for random arrival of buyers over time is different from our single buyer model. In particular, there is competition between buyers in their model, all competing for a single item. Our model can either be thought of as a single buyer, or as an infinite supply (digital goods) model, where there is no competition between buyers. This makes our mathematical problems quite different.

\paragraph{Additional related work.}
The problem of selling optimally over a period of time to strategic buyers is a classic problem in economics and operations research. It has been studied in various settings  \citep{Stokey79,Stokey81, Conlisk84,LandsbergerM85,BesankoW90,borgs2014optimal,Shneyerov14,BesbesL15,Briceno-AriasCP17,CEP19,CorreaPV20}. For example, \citet{Stokey79} shows that when the buyer and the seller have the same discounting factor, the optimal pricing curve is to sell at time $0$. \citet{LandsbergerM85} study when the optimal pricing curve is posting a single price, and when to price discriminate. \citet{BesbesL15} and \citet{borgs2014optimal} consider the problem with discrete time slots and non-discounting buyers purchasing the lowest price of different time windows. \citet{Briceno-AriasCP17} work on the characterization of the optimal pricing curves when the impatient buyers arrive randomly over time. 

Another related problem is the FedEx problem studied by~\cite{FGKK16}. There the buyer has a private value and a deadline (there are $m$ possible deadlines) jointly drawn from a distribution, and receives a value only when the allocation happens before the deadline. The common theme with our paper is that the buyer is time-sensitive, i.e., has $0$ value for receiving an allocation beyond a certain time. But there are many differences, making the problem incomparable with ours. First, in our model, the value continuously decays with time, whereas in the FedEx problem the value remains the same until the deadline and goes to zero afterwards. Second, in the FedEx problem, only the prices are chosen, as the allocation can happen only in one of finitely many timestamps. In our problem, the prices and the timestamps have to be jointly optimized over, and chosen from an uncountable continuum. Third, in the FedEx problem the value and deadline are both private parameters, while in our setting buyer's value is the only private parameter.

The problem of selling optimally over time has also been shown empirically relevant, in the contexts of video games sales \citep{Nair07}, retail sales in supermarkets \citep{Pesendorfer02}, and airline tickets sales \citep{LiGN14}.

Our efficient algorithm is partly inspired by the idea of optimization by the continuation method (a.k.a. homotopy method), which is a general and problem independent technique for tackling nonconvex
problems. Intuitively, it starts with an objective function that is easy to solve (e.g. convex function), and progressively transforms it to the required objective~\citep{mobahi2015link}. Throughout
this progression, the solution of each intermediate objective is used as a starting point to search for the solution of the next one. It can be used to compute fixed points~\citep{eaves1972homotopies}, and has been playing an important role in economic research~\citep{eaves1999general,herings2010homotopy}. Recently, the homotopy method has been applied in multiple learning tasks, such as tensor PCA~\citep{anandkumar2017homotopy} and various vision applications~\citep{nikolova2010fast,mobahi2012seeing}.

\section{Preliminaries}
\label{sec:prelim}
\subsection{Bayesian Setting for an Impatient Buyer}
We consider the problem of selling an item to an impatient buyer who discounts the future over a time horizon $[0, T]$. We assume the buyer's value is drawn from a probability mass function $f$ over a discrete set $V = \{v_1, v_2, \ldots, v_n\}$ with $0 \leq v_1 < v_2 < \cdots < v_n$. This is a standard assumption when computation is involved~\citep{cai2016duality,chawla2007algorithmic,chawla2010multi,chawla2015power,hart2017approximate,li2013revenue,babaioff2020simple}; and we show in \cref{sec:continuous} that all of our results can be made arbitrarily close to the optimum for continuous distributions via discretization. 

We assume the buyer discounts the future with rate $e^{-1}$: if she has value $v$ and buys at time $t$ for a price of $p$, her utility will be $(v - p) \cdot e^{-t}$. We note that any continuously decreasing discounting function $\delta' : [0, T'] \rightarrow (0, 1]$ is equivalent to the exponential discounting $\delta(t) = e^{-t}$. In fact, any mechanism providing a menu at time $t'$ for a buyer with discounting function $\delta'$ is equivalent to providing the same menu at time $t = \delta^{-1}(\delta'(t')) = -\ln \delta'(t')$ for a buyer with discounting function $\delta$, if the time limit $T = \delta^{-1}(\delta'(T')) = -\ln \delta'(T')$. 
We further assume the buyer is risk-neutral and she cares about her expected utility when randomness is present. On the other hand, the seller is perfectly patient, and wishes to maximize his revenue. The transaction can only be done in time $[0, T]$ and the buyer will lose the interest of purchasing the item after the time limit $T$. Henceforth, an instance of our problem is specified by a tuple $\langle T, V, f \rangle$.

\subsection{Non-adaptive Sequential Lottery Mechanisms and Pricing Curves}
Pricing curves are the central object of study in this paper, and they are a special case of non-adaptive \emph{sequential lottery mechanisms} (we drop non-adaptive sequential when clear from context). In this mechanism, the seller picks a finite number $\nt$ of timestamps $t_1 \leq t_2 \leq \cdots \leq t_{\nt}$, and posts a menu $M_i$ of lotteries at each timestamp $t_i$. Each menu $M_i$ consists of several options / lotteries of the form ``allocate with probability $x_{ij}$; if allocated, the payment,\footnote{For risk-neutral buyers, lotteries charging only upon allocation are completely equivalent to lotteries that charge always. An always-charging-lottery, priced at $p$, offering a value-$v$ item with probability $q$ yields the buyer a utility of $vq-p$. Equivalently, a lottery that charges $p/q$ only upon success will yield the buyer a utility of $q(v-p/q)=vq-p$.} is $p_{ij}$''. The buyer is asked to pick exactly one option from each menu, where every menu includes a null option with $0$ allocation probability and $0$ price. The risk-neutral buyer sees the entire sequence of menus at $t=0$, and computes her optimal solution, which would be of the form ``At every timestamp $t_i$, do the following: if no item has been allocated from any of the previously picked options / lotteries at earlier timestamps, pick option \emph{optimal-option} at $t_i$''. The mechanism proceeds in the following straightforward manner: the buyer inspects each timestamp $t_i$ in sequence, and picks her precomputed optimal option from the menu at $t_i$. After picking the lottery, the buyer observes the realized outcome of the lottery immediately. If she gets the item, the mechanism ends. If not, the buyer inspects the next timestamp $t_{i+1}$.

Formally: the buyer learns her value $v$ and the entire sequence of menus at $t=0$. At each timestamp $t_i$:
\begin{itemize}
   \item If the buyer picks the $(x_{ij}, p_{ij})$ option, the buyer is allocated with probability $x_{ij}$.
        \item If allocated, the mechanism ends. The buyer is charged $p_{ij}$, accruing utility of $e^{-t_i} \cdot (v - p_{ij})$.
        \item If not allocated, the buyer proceeds to inspect timestamp $t_{i+1}$.
\end{itemize}
The buyer's objective is to optimize her expected utility. Let $u_i(v)$ be the buyer's continuation utility starting from timestamp $t_i$ when the buyer's value is $v$:
\[
    u_i(v) = \max_j x_{ij} \cdot e^{-t_i} \cdot (v - p_{ij}) + (1 - x_{ij}) \cdot u_{i+1}(v)
\]
For the last timestamp $t_{N_T}$, we have $u_{\nt}(v) = \max_j e^{-t_{\nt}} \cdot x_{\nt j} \cdot (v - p_{\nt j})$.

Note that a sequential lottery mechanism is the most general non-adaptive mechanism possible for a single buyer.

\paragraph{Pricing curve.} A pricing curve or a \emph{sequential pricing mechanism} is a deterministic lottery mechanism, which posts a single $p_i$ at any timestamp $t_i$, for a deterministic allocation (i.e., $x_i = 1$). Given a pricing curve, the buyer can choose to buy at a utility maximizing time $t_i$, or not to buy at all.

\paragraph{Why allow only finite number of timestamps?} With a distribution of finite support, pricing curves need at most $|V|$ timestamps, and thus finiteness of timestamps is without loss of generality. We show in \cref{sec:lottery} that for any finite number of timestamps, the revenue from sequential lottery mechanisms is no higher than that of pricing curves. Thus, the limit of the optimal revenue from sequential lottery mechanisms, as the number of timestamps goes to infinity, is the optimal revenue from pricing curves.

\paragraph{Exposing the whole pricing curve.} The buyer in our model knows the entire pricing curve, or the entire sequence of lottery menus, at $t=0$. This captures the fact that retail sales are often announced weeks in advance; airline pricing websites like Kayak typically give guidance on how prices are expected to move.

\subsection{Adaptive Lotteries}
An adaptive sequential lottery mechanism is just like a non-adaptive sequential lottery mechanism, except that the set of options available to purchase at each timestamp $t_i$ can be a function of the history of past purchases. The null option should be accessible to the buyer at each menu regardless of history --- otherwise, the seller can simulate loans by allocating at $t=0$ and charging $ve^T$ at $t=T$ (note that the seller does not time discount), which yields unrealistically high revenue (larger than even social welfare). %With the null option, we show that the optimal adaptive lottery's revenue is upper bounded by social welfare (\cref{thm:revenuewelfarebound}).

\section{Pricing Mechanisms Are Optimal Non-adaptive Mechanisms}
\label{sec:lottery}
In this section, we show that pricing curves are as powerful as non-adaptive sequential lotteries. Non-adaptive sequential lotteries are without loss of generality the most general non-adaptive mechanisms, since by taxation principle, the mechanism at each timestamp $t_i$ is equivalent to a menu of $n$ options $(x_{ij}, p_{ij})$ corresponding to the allocation probability and expected price when allocated for $v_j$ in that single-timestamp mechanism. In \cref{sec:adaptive} we show how adaptive randomized mechanisms can get strictly higher revenue by cleverly exploiting price discrimination in relatively short time frames.
%Note that a sequential lottery mechanism is the most general non-adaptive mechanism possible for a single buyer.
\begin{theorem}
\label{thm:pricing_optimal}
For any instance $\langle T, V, f \rangle$ of the problem, the revenue obtained by the optimal non-adaptive lottery mechanism can also be achieved by a pricing mechanism.
\end{theorem}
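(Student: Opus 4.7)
The plan is to proceed by backward induction on the timestamps, transforming each lottery menu into a single posted price without decreasing revenue. First, by the taxation principle, I would assume without loss of generality that each menu $M_i$ at timestamp $t_i$ contains exactly $n$ options $\{(x_{ij}, p_{ij})\}_{j=1}^n$, one per buyer type, with type $v_j$ picking option $j$ in equilibrium. Let $u_{i+1}(v)$ denote the continuation utility of a buyer with value $v$ starting from timestamp $t_{i+1}$ (when reached). The key structural observation is that $u_{i+1}(v)$ is a pointwise maximum of affine functions of $v$ with slopes at most $e^{-t_{i+1}}$, so the \emph{effective value} $\tilde v_j^{(i)} := e^{-t_i} v_j - u_{i+1}(v_j)$ at menu $t_i$ is strictly increasing in $v_j$, with slope bounded below by $e^{-t_i} - e^{-t_{i+1}} > 0$. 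Under this reframing, the buyer's decision problem at menu $t_i$ becomes a single-item lottery with effective values $\tilde v_j^{(i)}$ and effective prices $e^{-t_i} p_{i\ell}$.

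The induction would run from $i = \nt$ down to $i = 1$. For the base case, the continuation utility from $t_{\nt+1}$ is zero, so the menu at $t_{\nt}$ is a single-item lottery with monotone effective values $e^{-t_{\nt}} v_j$; by Myerson's theorem for single-buyer single-item settings, the revenue-optimal instantiation is a posted price. For the inductive step, I would assume the menus at $t_{i+1}, \dots, t_{\nt}$ already form a pricing curve, so the continuation revenue $c_j^{(i+1)}$ from type $v_j$ is the deterministic price this type would pay in the continuation, which is non-decreasing in $v_j$. The expected revenue from menu $t_i$ onward is $\sum_j \tilde f_j^{(i)} \bigl[ x_{ij} p_{ij} + (1-x_{ij}) c_j^{(i+1)} \bigr]$, where $\tilde f_j^{(i)}$ is the probability that type $v_j$ reaches menu $t_i$. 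Combining the envelope identity at menu $t_i$ with the monotonicity of both $\tilde v_j^{(i)}$ and $c_j^{(i+1)}$, this objective can be recast using (ironed) virtual values of the effective-value distribution at menu $t_i$, whose maximizer is a $\{0, 1\}$ allocation rule thresholded on effective value --- i.e., a single posted price at $t_i$.

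Iterating the transformation all the way back to $i = 1$ yields a pricing curve with revenue at least that of the original lottery, finishing the proof. The hard part will be the inductive step: although the buyer's per-menu problem is a standard single-item lottery with monotone values, the revenue objective includes the continuation-revenue term $c_j^{(i+1)}$, which makes the objective type-dependent and prevents a direct appeal to classical Myerson. Handling this requires a careful adaptation of the Myerson/ironing argument that accommodates the monotone continuation shift; establishing that the maximizer remains a monotone threshold in effective value --- despite this shift --- is where the bulk of the work lies.
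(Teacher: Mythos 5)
There is a genuine gap, and it sits exactly where the difficulty of the theorem lies. Your backward induction re-optimizes the menu at $t_i$ while holding fixed the reach probabilities $\tilde f_j^{(i)}$ (and, implicitly, the revenue collected at timestamps before $t_i$). But both of these are endogenous: replacing the menu at $t_i$ by a posted price changes $u_i(\cdot)$, which changes which options each type picks at the earlier menus $t_1,\ldots,t_{i-1}$, and hence changes both who reaches $t_i$ and how much revenue the earlier menus collect. Already the base case suffers from this: the distribution of types arriving at $t_{\nt}$ is determined by the earlier lotteries \emph{through} the continuation utility $u_{\nt}(\cdot)$, so ``apply Myerson to the arrival distribution at $t_{\nt}$'' is not a well-posed local step --- a gain in revenue from $t_i$ onward conditional on reach can be more than offset by lost revenue upstream. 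Separately, the claimed core of the inductive step --- that the one-shot problem at $t_i$ with type-dependent outside option $u_{i+1}(v)$ and type-dependent continuation revenue $c_j^{(i+1)}$ entering the objective is maximized by a posted price --- is asserted rather than proved; this is not a routine application of Myerson/ironing (the seller's payoff per sale is $p - c_j^{(i+1)}$, a type-dependent quantity, and the participation constraint is type-dependent), and you acknowledge it is ``where the bulk of the work lies.'' As written, neither the local optimality claim nor the composition of local steps into a global revenue comparison is established.

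The paper avoids re-optimization entirely. It first reduces any menu-based mechanism to a \emph{single-lottery} mechanism by splitting each menu into consecutive single-option menus at the same time with marginal probabilities and prices, using convexity of $x_ip_i$ in $x_i$ (\cref{lem:single_lottery}). It then \emph{derandomizes}: it samples the lottery outcomes $R_i$ up front and, for each realization, builds a pricing curve whose price at $t_i$ is chosen so that the purchase threshold $\ell_i$ of \cref{lem:maximum-value-to-purchase} is exactly preserved given the realized continuation. Because every type's accept/reject decision at every timestamp is unchanged in every realization, the reach probabilities are preserved, and \cref{lem:derand_same} shows expected prices and utilities match the original mechanism, so the expected revenue over the sampled pricing curves equals the lottery revenue and some realization does at least as well. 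This exact-simulation coupling is precisely the device that handles the incentive interdependence across timestamps that your induction leaves unaddressed; if you want to salvage your plan, you would need either such a coupling or a proof that the per-timestamp replacement can be done without perturbing $u_i(\cdot)$ (equivalently, without changing any type's threshold), which is what the paper's construction accomplishes.
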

For this purpose, we introduce (sequential) {\em single-lottery mechanisms}, in which there is exactly one option (in addition to the null option) in the menu at each timestamp. Multiple single option menus are allowed at the same timestamp. We show that single-lottery mechanisms are as powerful as lottery mechanisms; and moreover, pricing mechanisms are as powerful as single-lottery mechanisms. %At the end of this section, we demonstrate with an example that, however, adaptive lottery mechanisms are more powerful than pricing mechanisms.

\begin{lemma}
\label{lem:single_lottery}
Any non-adaptive lottery mechanism can be simulated by a single-lottery mechanism.
\end{lemma}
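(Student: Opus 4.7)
The plan is to construct a single-lottery mechanism $\mathcal{M}'$ from the given non-adaptive lottery mechanism $\mathcal{M}$ by splitting each multi-option menu $M_i$ into a sequence of single-option sub-menus placed at the same timestamp $t_i$. The definition of single-lottery mechanisms explicitly allows multiple single-option menus at the same timestamp, so the construction is syntactically valid; the key question is how to order the sub-menus so that the buyer's optimal behavior in $\mathcal{M}'$ preserves or improves the seller's revenue.

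First I would examine the buyer's optimal strategy in $\mathcal{M}$ and identify, for each timestamp $t_i$, the set $S_i$ of options in $M_i$ picked by at least one type in equilibrium. I would then place these options as single-option sub-menus at $t_i$ in $\mathcal{M}'$, in decreasing order of the highest-indexed type that picks each one (so the sub-menu ``targeted'' at the top type comes first; in case of ties, break arbitrarily). Intuitively, a type $v_j$ will first see sub-menus targeted at strictly higher-value types, which she should skip by IC in $\mathcal{M}$; then her own target, which she should take; then sub-menus targeted at lower types, which she may additionally take, but only gets allocated at in the event her own target failed to allocate, so those only contribute extra revenue without displacing her primary allocation.

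The main analysis is a backward induction on timestamps with the inductive hypothesis that, for every type, her continuation utility at $t_i$ in $\mathcal{M}'$ is at least her continuation utility at $t_i$ in $\mathcal{M}$, and her expected (undiscounted) payment from $t_i$ onward in $\mathcal{M}'$ is at least her payment from $t_i$ onward in $\mathcal{M}$. The inductive step at a single timestamp is itself a backward pass over its sub-menus, where I would use the original IC inequalities of the form $x_{ij^*}(a_{ij^*}-W)\ge x_{ij}(a_{ij}-W)$ to justify the buyer's take/skip decision one sub-menu at a time, leveraging the chosen ordering so that the ``target skipped'' and ``additional lower-target taken'' outcomes stay consistent with IC and only add revenue.

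The main obstacle will be reconciling the \emph{compound-lottery} effect in $\mathcal{M}'$ (where a type may take several consecutive sub-menus if earlier ones fail to allocate) with IC from $\mathcal{M}$: a type's continuation utility in $\mathcal{M}'$ may strictly exceed hers in $\mathcal{M}$, and I must verify that this inflated continuation still leaves the key IC comparisons intact. Concretely, the delicate case is showing that $v_j$ does not strictly prefer to \emph{skip} her own target in $\mathcal{M}'$ in order to collect some attractive compound lottery built from lower-targeted sub-menus; ruling this out will require exploiting both the decreasing-type ordering and the original IC, and I expect this to be the technical crux of the proof.
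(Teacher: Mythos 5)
You have correctly identified the crux (the compound-lottery effect), but your construction does not resolve it, and in fact it fails: if you keep the original options $(x_{ij},p_{ij})$ and merely sequence them as single-option menus in decreasing order of target type, the invariant you want to prove by backward induction (each type's expected payment from $t_i$ onward weakly increases) is false, and revenue can strictly drop. Concretely, take one timestamp $t=0$ and three values $v_1=1$, $v_2=11$, $v_3=15$ (uniform), with menu options $O_1=(0.4,\,0)$, $O_2=(0.5,\,2)$, $O_3=(1,\,7)$. In the original mechanism $v_1$ picks $O_1$ (utility $0.4$), $v_2$ picks $O_2$ ($4.5$ vs.\ $4.4$ and $4$), and $v_3$ picks $O_3$ ($8$ vs.\ $6.5$ and $6$), so the menu is IC and every option is used (and $x p$ is convex in $x$, so no option is removable). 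In your single-lottery sequence $O_3,O_2,O_1$, the continuation for $v_3$ after skipping $O_3$ is the compound lottery $0.5\cdot 13+0.5\cdot 0.4\cdot 15=9.5>8=v_3-7$, so $v_3$ skips her target and takes the cheap chain, paying $1$ in expectation instead of $7$; total revenue falls from $8/3$ to $2/3$. Reordering (lowest target first) fails for the symmetric reason, so the ordering idea alone cannot carry the induction.

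The paper's proof avoids this precisely by not reusing the original $(x_{ij},p_{ij})$: it replaces them by \emph{marginal} lotteries $x'_i=1-\frac{1-x_i}{1-x_{i+1}}$ and $p'_i=\frac{x_ip_i-(1-x'_i)x_{i+1}p_{i+1}}{x'_i}$, after a WLOG step ensuring $x_ip_i$ is convex in $x_i$ (non-convex options are never chosen). With these marginals, taking a \emph{suffix} of the sub-lotteries reproduces an original option exactly — compound allocation probability $x_z$ and expected payment $x_zp_z$ telescope out — so there is no continuation inflation to reconcile; convexity gives $p'_i\geq p'_{i+1}$, which yields the threshold property that a buyer who takes $(x'_z,p'_z)$ also takes all later sub-lotteries, making the buyer's choice in the new mechanism correspond one-to-one with her original choice. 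Your proposal is missing exactly this exact-replication device; without it (or some substitute that neutralizes the compound effect), the IC comparisons you plan to inherit from the original mechanism are simply the wrong inequalities, as the example shows.
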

\begin{proof}
Suppose menu $M$ at timestamp $t$ has $k$ options with $x_1 > \cdots > x_k > x_{k+1} = 0$. We assume that $p_1 > \cdots > p_k > p_{k+1} = 0$. This is without loss of generality, since if $x_i \geq x_j$ but $p_i \leq p_j$, option $j$ will never be chosen.
Moreover, it is without loss of generality to assume that $x_i p_i$ as a function of $x_i$ is convex. If not, for the purpose of contradiction, assume that there exist $i < \ell < j$ with $x_{\ell} = \lambda x_i + (1 - \lambda) x_j$ but $x_{\ell} p_{\ell} > \lambda x_i p_i + (1 - \lambda) x_j p_j$. However, option $\ell$ will never be chosen, as choosing one of $(x_i, p_i)$ and $(x_j, p_j)$ would be better. Precisely, let $u_C$ be the buyer's expected utility starting from the next timestamp and we have:
\begin{align*}
&~x_{\ell} (v - p_{\ell}) \cdot e^{-t} + (1 - x_{\ell}) u_C\\
< &~\lambda \cdot \big(x_i (v - p_i) \cdot e^{-t} + (1 - x_i) u_C\big) + (1 - \lambda) \cdot \big(x_j (v - p_j) \cdot e^{-t} + (1 - x_j) u_C\big)\\
\leq &~ \max\big\{x_i (v - p_i) \cdot e^{-t} + (1 - x_i) u_C, \ x_j (v - p_j) \cdot e^{-t} + (1 - x_j) u_C\big\}.
\end{align*}

For menu $M$ at timestamp $t$, we create $k$ timestamps in the order of $t_{1}, t_{2}, \ldots, t_{k}$ at time $t$ for the single-lottery mechanism. At timestamp $t_{i}$, let $x'_i$ be the allocation probability and $p'_i$ be the unit price of the lottery, and we set
\[
    x'_i = 1 - \frac{1 - x_i}{1 - x_{i + 1}} \quad \mbox{and} \quad p'_i = \frac{x_i p_i  - (1 - x'_i)x_{i + 1} p_{i + 1}}{x'_i}.
\]
Conceptually, these are marginal allocation probabilities and prices. Naturally, we have $p'_i \geq p'_{i+1}$ for all i:
\begin{align*}
p'_{i} - p'_{i + 1}
= \frac{x_i p_i (x_{i + 1} - x_{i + 2}) + x_{i + 2} p_{i + 2} (x_i - x_{i + 1}) - x_{i + 1}p_{i + 1}(x_i - x_{i + 2})}{(x_i - x_{i + 1})(x_{i + 1} - x_{i + 2})/(1 - x_{i + 1})} \geq 0,
\end{align*}
where the inequality comes from the convexity of $x_i p_i$ with respect to $x_i$.

Since the prices are decreasing, it turns out that if a buyer chooses $(x'_z, p'_z)$, she will always choose the lotteries after it, i.e., $\{(x'_z, p'_z), (x'_{z + 1}, p'_{z + 1}), \ldots, (x'_k, p'_k)\}$. Suppose the buyer chooses $(x'_z, p'_z)$ but not $(x'_{z+1}, p'_{z+1})$. Observe that the buyer with value $v$ chooses an option $(x'_z, p'_z)$ but not $(x'_{z+1}, p'_{z+1})$ if and only if
\[
    e^{-t} \cdot x'_z \cdot (v - p'_z) + (1 - x'_z) \cdot u_C \geq u_C,
\]
which is equivalent to
$e^{-t} \cdot (v - p'_z) \geq u_C$, where $u_C$ is the continuation utility starting from the timestamp $t_{z+2}$. However, the buyer should also choose the next option $(x'_{z+1}, p'_{z+1})$ since
\[
    e^{-t} \cdot x'_{z+1} \cdot (v - p'_{z+1}) + (1 - x'_{z+1}) \cdot u_C \geq e^{-t} \cdot x'_{z+1} \cdot (v - p'_{z}) + (1 - x'_{z+1}) \cdot u_C \geq u_C.
\]
Therefore, once the buyer chooses $(x'_z, p'_z)$, she will choose the lotteries after it.

We conclude the proof by showing that choosing the option $(x_z, p_z)$ in the original lottery mechanism is equivalent to choosing a collection of options $\{(x'_z, p'_z), (x'_{z + 1}, p'_{z + 1}), \ldots, (x'_k, p'_k)\}$ in the single-lottery mechanism.

Note that the probability of getting the item by choosing $\{(x'_z, p'_z), (x'_{z + 1}, p'_{z + 1}), \ldots, (x'_k, p'_k)\}$ is
\[
1 - \prod_{i = z}^k (1 - x'_i) = 1 - \left(\prod_{i = z}^{k - 1} \frac{1 - x_i}{1 - x_{i + 1}}\right) \cdot (1 - x_k) = x_z,
\]
and its expected payment is given by:
\begin{align*}
\sum_{i = z}^k x'_i \cdot p'_i \cdot \prod_{j = z}^{i - 1} (1 - x'_j) &= \sum_{i = z}^k x'_i \cdot p'_i \cdot \frac{1 - x_z}{1 - x_i}\\
&= (1 - x_z) \cdot \sum_{i = z}^k \left(x_i p_i  - (1 - x'_i)x_{i + 1} p_{i + 1}\right) \cdot \frac{1}{1 - x_i}\\
&= (1 - x_z) \cdot \sum_{i = z}^k \left(\frac{x_i p_i}{1 - x_i}  - \frac{x_{i + 1} p_{i + 1}}{1 - x_{i + 1}}\right)\\
&= x_z \cdot p_z.
\end{align*}
We can then apply the transformation for all timestamps to obtain a single-lottery mechanism.
\end{proof}

By \cref{lem:single_lottery}, we can now focus on single-lottery mechanisms to finish the reduction for proving \cref{thm:pricing_optimal}. Given any single-lottery mechanism, we will perform a procedure to \emph{derandomize} it into a distribution over pricing mechanisms. 

We denote by $(x_i, p_i)$ the unique option at timestamp $t_i$ in a single-lottery mechanism. For a timestamp in which the unique option is never chosen, we can simply remove this timestamp. Assume there are totally $k$ timestamps in the given single-lottery mechanism. Let $\ell_i$ be the minimum value with which a buyer will choose the unique option at timestamp $t_i$.
\begin{lemma}
\label{lem:maximum-value-to-purchase}
    At timestamp $t_i$, we have $(\ell_i - p_i) \cdot e^{-t_i} = u_{i + 1}(\ell_i)$ and moreover,
    \[
    \begin{cases}
        (v - p_i) \cdot e^{-t_i} \geq u_{i + 1}(v) & \mbox{if}~ v \geq \ell_i \\
        (v - p_i) \cdot e^{-t_i} < u_{i + 1}(v) & \mbox{if}~ v < \ell_i
    \end{cases}
    \]
    In other words, any value $v \geq \ell_i$ will purchase the lottery at timestamp $t_i$ while any value $v < \ell_i$ will not purchase the lottery at timestamp $t_i$.
\end{lemma}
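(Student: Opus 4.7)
The plan is to reduce the condition under which the buyer purchases the unique lottery at $t_i$ to a one-dimensional inequality in $v$, and then to argue via a slope comparison that this inequality is a threshold condition. At $t_i$ the buyer's only choices are the unique lottery $(x_i, p_i)$ and the null option, yielding expected utilities $x_i e^{-t_i}(v - p_i) + (1 - x_i) u_{i+1}(v)$ and $u_{i+1}(v)$ respectively. I may assume $x_i > 0$ (otherwise the lottery coincides with the null option and the timestamp has been removed by assumption), so the buyer weakly prefers the lottery iff
\[
    g(v) \;:=\; e^{-t_i}(v - p_i) - u_{i+1}(v) \;\geq\; 0.
\]
It therefore suffices to show that $\{v : g(v) \geq 0\}$ is a closed upper ray $[\ell_i, \infty)$ on whose boundary $g$ vanishes.

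The crux is a slope bound: $u_{i+1}$ is continuous, convex, and piecewise-linear in $v$ with slope at most $e^{-t_{i+1}}$. To see this, observe that every deterministic buyer strategy starting from $t_{i+1}$ induces a distribution $\{q_j\}_{j \geq i+1}$ over the timestamp at which the item is purchased (with $\sum_j q_j \leq 1$), yielding expected utility $\sum_{j \geq i+1} q_j e^{-t_j}(v - p_j)$. This function is affine in $v$ with slope $\sum_{j \geq i+1} q_j e^{-t_j} \leq e^{-t_{i+1}}$, since every $t_j \geq t_{i+1}$ and $\sum_j q_j \leq 1$. As there are only finitely many deterministic strategies, $u_{i+1}$ is the upper envelope of finitely many such affine functions, hence convex piecewise-linear with slope bounded by $e^{-t_{i+1}} \leq e^{-t_i}$.

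Combining the two pieces, $g$ is the difference of a linear function in $v$ of slope exactly $e^{-t_i}$ and a convex function whose slope is at most $e^{-t_i}$, so $g$ is continuous and non-decreasing on $[0,\infty)$. Hence $\{v : g(v) \geq 0\}$ is either empty or a closed upper ray $[\ell_i, \infty)$; non-emptiness is guaranteed because timestamps whose option is never chosen have already been removed, so $\ell_i$ exists. By continuity, $g(\ell_i) = 0$, which rewrites as $(\ell_i - p_i) e^{-t_i} = u_{i+1}(\ell_i)$; the two displayed inequalities in the lemma are then exactly $g(v) \geq 0$ for $v \geq \ell_i$ and $g(v) < 0$ for $v < \ell_i$, the strictness in the latter following from the minimality of $\ell_i$. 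The main obstacle is getting the slope bound on $u_{i+1}$ right, but the strategy-distribution viewpoint bypasses any recursive manipulation of the defining $\max$.
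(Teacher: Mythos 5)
Your proof is correct and takes essentially the same route as the paper's: the decisive ingredient in both is the Lipschitz/slope bound that the continuation utility $u_{i+1}(v)$ grows in $v$ at rate at most $e^{-t_{i+1}} \le e^{-t_i}$ (your upper-envelope-of-affine-strategies argument is just a repackaging of the paper's ``a buyer with value $\ell_i$ can mimic the strategy of value $v$'' inequality), which makes the net gain $g(v)=(v-p_i)e^{-t_i}-u_{i+1}(v)$ from purchasing at $t_i$ non-decreasing in $v$ and yields the threshold structure at $\ell_i$. The only difference is presentational: you derive the indifference $g(\ell_i)=0$ explicitly from continuity and minimality, whereas the paper asserts it and then anchors the $v>\ell_i$ case at that equality.
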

\begin{proof}
    Notice that $\ell_i$ is indifferent between choosing the unique option at timestamp $t_i$ and skipping:
    \[
        e^{-t_i} \cdot x_i \cdot (\ell_i - p_i) + (1 - x_i) \cdot u_{i + 1}(\ell_i) = u_{i + 1}(\ell_i) 
    \]
    which is $(\ell_i - p_i) \cdot e^{-t_i} = u_{i + 1}(\ell_i)$. As for $v > \ell_i$, observe that we have
    \[
        u_{i + 1}(\ell_i) \geq u_{i + 1}(v) - (v - \ell_i) \cdot e^{-t_{i}},
    \]
    since a buyer with value $\ell_i$ can take the options as if she had value $v$ and the allocation probability is at most $1$. Therefore, we have
    \[
        (v - p_i) \cdot e^{-t_i} = (v - \ell_i) \cdot e^{-t_i} + (\ell_i - p_i) \cdot e^{-t_i} = (v - \ell_i) \cdot e^{-t_i} + u_{i + 1}(\ell_i) \geq u_{i + 1}(v).
    \]
    where the second equality follows $(\ell_i - p_i) \cdot e^{-t_i} = u_{i + 1}(\ell_i)$. 
    Finally, by the definition of $\ell_i$, for any $v < \ell_i$, the buyer prefers to skip the option, and therefore,
    \[
        e^{-t_i} \cdot x_i \cdot (v - p_i) + (1 - x_i) \cdot u_{i + 1}(v) < u_{i + 1}(v) 
    \]
    which implies $(v - p_i) \cdot e^{-t_i} < u_{i + 1}(v)$.
\end{proof}

Let $R_i$ be a Bernoulli random variable such that $R_i = 1$ with probability $x_i$. We denote by $r_i \in \{0, 1\}$ the realization of $R_i$. For each $(r_1, \ldots, r_k) \in \{0, 1\}^k$, with probability $\prod_{i = 1}^k x_i^{r_i} (1 - x_i)^{1 - r_i}$, we create a pricing mechanism with pricing function $p'$. The price $p'_i$ at timestamp $t_i$ maps $(r_i, \ldots, r_k) \in \{0, 1\}^{k-i+1}$ to a price. Moreover, let $u'_i(v; r_i, \ldots, r_k)$ be the buyer's continuation utility starting at timestamp $t_i$ given $r_i, \ldots, r_k$ when her value is $v$. $u'_i$ and $p'_i$ are jointly defined in a recursive manner such that
\[
    u'_i(v; r_i, \ldots, r_k) = \max\left\{e^{-t_i} \cdot \big(v - p'_i(r_i, \ldots, r_k)\big), \ u'_{i + 1}(v; r_{i+1}, \ldots, r_k)\right\}.
\]
with $u'_{k+1}(v) = 0$; and $p'_i$ is defined as
\[
    p'_i(r_i, \ldots, r_k) = 
    \begin{cases}
        \infty & \mbox{if}~r_i = 0 \\
        \ell_i - e^{t_i} \cdot u'_{i+1}(\ell_i; r_{i+1}, \ldots, r_k) & \mbox{if}~r_i = 1
    \end{cases}
\]
Intuitively, the item is not sold at timestamp $t_i$ by setting the price to $\infty$ if $r_i = 0$; and if $r_i = 1$, then according to \cref{lem:maximum-value-to-purchase}, $p'_i(r_i, \ldots, r_k)$ is set to be the maximum price so that a buyer with value $\ell_i$ will purchase at timestamp $t_i$.

% , when the future price 
% which depends on the realizations of $R_{i + 1}, R_{i + 2}, \ldots, R_k$. Let $u'_i(v; R_i, R_{i + 1}, \ldots, R_k)$ be the expected utility of a buyer of value $v$ in the pricing mechanism, if she starts at timestamp $t_i$. Then $p'_i(r_{i + 1}, r_{i + 2}, \ldots, r_k)$ is given by:
% \[
% (\ell_i - p'_i(r_{i + 1}, r_{i + 2}, \ldots, r_k)) \cdot e^{-t_i} = u'_{i + 1}(\ell_i; r_{i + 1}, r_{i + 2}, \ldots, r_k).
% \]
% This procedure is illustrated in \cref{alg:derandomization}.

% \begin{algorithm}[H]
% 	\DontPrintSemicolon
	
% 	\ForEach{timestamp $t_i$} {
% 	    Calculate $\ell_i$ using $(\ell_i - p_i) \cdot e^{-t_i} = u_{i + 1}(\ell_i)$\;
% 	}
% 	Randomly realize $R_1 = r_1, R_2 = r_2, \ldots, R_k = r_k$, where $R_i = 1$ with probability $x_i$\;
	
% 	\ForEach{timestamp $t_i$ with $r_i = 1$ from $i = k$ back to $i = 1$} {
% 	    Calculate $p'_i(r_{i + 1}, r_{i + 2}, \ldots, r_k)$ using $(\ell_i - p'_i(r_{i + 1}, r_{i + 2}, \ldots, r_k)) \cdot e^{-t_i} = u'_{i + 1}(\ell_i; r_{i + 1}, r_{i + 2}, \ldots, r_k)$\;
% 	    Create a deterministic option at $t_i$ with price $p'_i(r_{i + 1}, r_{i + 2}, \ldots, r_k)$\;
% 	}
% 	\caption{Derandomization}\label{alg:derandomization}
% \end{algorithm}

\definecolor{inv}{rgb}{0.70, 0.70, 0.70}
\definecolor{win}{rgb}{0.5, 0.95, 0.5}
\begin{table}
\centering
 \begin{tabular}{|| c | c | c | c ||} 
 \hline
 $(r_1, r_2, r_3)$ & $p'_1$ & $p'_2$ & $p'_3$ \\ [0.5ex] 
 \hline\hline
 $(0, 0, 0)$ & \cellcolor{inv}$\infty$ & \cellcolor{inv}$\infty$ & \cellcolor{inv}$\infty$ \\ 
 \hline
 $(0, 0, 1)$ & \cellcolor{inv}$\infty$ & \cellcolor{inv}$\infty$ & $4$ \\ 
 \hline
 $(0, 1, 0)$ & \cellcolor{inv}$\infty$ & $8$ & \cellcolor{inv}$\infty$ \\ 
 \hline
 $(0, 1, 1)$ & \cellcolor{inv}$\infty$ & $6$ & $4$ \\ 
 \hline
 $(1, 0, 0)$ & $16$ & \cellcolor{inv}$\infty$ & \cellcolor{inv}$\infty$ \\ 
 \hline
 $(1, 0, 1)$ & $13$ & \cellcolor{inv}$\infty$ & $4$ \\ 
 \hline
 $(1, 1, 0)$ & $12$ & $8$ & \cellcolor{inv}$\infty$ \\ 
 \hline
 $(1, 1, 1)$ & $11$ & $6$ & $4$ \\ 
 \hline
\end{tabular}
\caption{All possible derandomized mechanisms in \cref{ex:derandomization}}
\label{tab:derandomization}
\end{table}

\begin{example}
\label{ex:derandomization}
Let $k = 3$ and assume we have a single-lottery mechanism with $(x_1, p_1, t_1) = (0.5, 13, 0)$, $(x_2, p_2, t_2) = (0.5, 7, \ln 2)$, $(x_3, p_3, t_3) = (0.5, 4, 2 \ln 2)$. Using \cref{lem:maximum-value-to-purchase}, we can compute that $\ell_3 = 4$, $\ell_2 = 8$, and $\ell_1 = 16$. %For example, $\ell_2$ is calculated using $(\ell_2 - p_2) \cdot e^{-t_2} = u_3(\ell_2) = \max(0, \ x_3(\ell_2 - p_3) \cdot e^{-t_3})$. 
For this single-lottery mechanism, each $(r_1, r_2, r_3) \in \{0, 1\}^3$ is sampled with probability $1/8$; and \cref{tab:derandomization} shows the pricing mechanisms for all possible combinations of $(r_1, r_2, r_3)$.
\end{example}
%  For example, when $r_1 = 0, r_2 = r_3 = 1$, $p'_2(r_3)$ is calculated so that a buyer with value $\ell_2$ will narrowly buy this option: $(\ell_2 - p'_2(r_3)) \cdot e^{-t_2} = u'_3(\ell_2; r_3) = \max(0, (\ell_2 - p_3') \cdot e^{-t_3})$, giving $\frac{1}{2}(8 - p'_2(r_3)) = 1$ and thus $p'_2(r_3) = 6$. \cref{tab:derandomization} shows all possible results of \cref{alg:derandomization}.

%The result of \cref{alg:derandomization} is a distribution of deterministic mechanisms. 
We claim the expected revenue over all pricing mechanisms that are created is equal to the revenue of the given single-lottery mechanism. In fact, for any value $v$ and any timestamp $t_i$, the expected payment at $t_i$ of a buyer with value $v$ over all pricing mechanisms is equal to that of the single-lottery mechanism. Let $w'_i(v) = \E[u'_i(v; R_i, \ldots, R_k)]$ be the buyer's expected utility, where the expectation is taken over $(R_i, \ldots, R_k)$.
\begin{lemma}
For all $i$, $p_i = \E[p'_i(1, R_{i+1}, \ldots, R_k)]$ and $u_i(v) = w_i'(v) = \E[u'_i(v; R_i, \ldots, R_k)]$.
\label{lem:derand_same}
\end{lemma}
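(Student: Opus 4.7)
The plan is to prove both identities simultaneously by backward induction on $i$, running from $i = k$ down to $i = 1$, with the convention $u_{k+1}(v) \equiv u'_{k+1}(v) \equiv 0$. The base case is immediate from these conventions. For the inductive step, assume $w'_{i+1}(v) = u_{i+1}(v)$ for every $v$. The pricing identity then follows quickly from the definition of $p'_i$: taking expectations, $\E[p'_i(1, R_{i+1}, \ldots, R_k)] = \ell_i - e^{t_i}\, w'_{i+1}(\ell_i) = \ell_i - e^{t_i}\, u_{i+1}(\ell_i) = p_i$, where the last equality uses the indifference identity $(\ell_i - p_i) e^{-t_i} = u_{i+1}(\ell_i)$ from \cref{lem:maximum-value-to-purchase}.

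For the utility identity I would split on whether $v \geq \ell_i$ or $v < \ell_i$. The key structural fact is that, for every fixed realization $(r_{i+1}, \ldots, r_k)$, the map $g(v) := u'_{i+1}(v; r_{i+1}, \ldots, r_k)$ is monotone non-decreasing and $e^{-t_i}$-Lipschitz in $v$: unfolding the recursion, $g$ is the upper envelope of the affine functions $v \mapsto e^{-t_j}(v - p'_j)$ over those $j \geq i+1$ with $r_j = 1$, together with the null option $0$, and each such slope is at most $e^{-t_{i+1}} \leq e^{-t_i}$. Using the definition of $p'_i$, the recursion becomes $u'_i(v; 1, r_{i+1}, \ldots, r_k) = \max\{e^{-t_i}(v - \ell_i) + g(\ell_i),\ g(v)\}$. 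For $v \geq \ell_i$, Lipschitzness gives $g(v) - g(\ell_i) \leq e^{-t_i}(v - \ell_i)$, so the first argument of the max wins; taking expectations and invoking the induction hypothesis collapses $w'_i(v)$ to $x_i e^{-t_i}(v - p_i) + (1 - x_i) u_{i+1}(v)$, which is exactly $u_i(v)$ by the single-lottery buyer's decision to purchase (again justified by \cref{lem:maximum-value-to-purchase}). For $v < \ell_i$, monotonicity and Lipschitzness together give $0 \leq g(\ell_i) - g(v) \leq e^{-t_i}(\ell_i - v)$, so the second argument $g(v)$ wins pointwise; hence $u'_i(v; 1, \cdot) = u'_{i+1}(v; \cdot)$ (and likewise for $r_i = 0$), so $w'_i(v) = w'_{i+1}(v) = u_{i+1}(v) = u_i(v)$, matching the single-lottery buyer's decision to skip.

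The main obstacle is establishing and using the Lipschitz/monotonicity structure of $g$ to control the $\max$ in the recursion for $u'_i$: without that control, one cannot rule out that a low-value buyer ($v < \ell_i$) is inadvertently tempted by the price $p'_i(1, r_{i+1}, \ldots, r_k)$ on some realizations, which would inflate $w'_i(v)$ above $u_i(v)$ and break the inductive invariant. Once that structural fact is in hand, both identities fall out from combining \cref{lem:maximum-value-to-purchase} with the inductive hypothesis by routine algebra.
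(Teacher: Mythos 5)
Your proof is correct and takes essentially the same route as the paper's: a backward induction on $i$ that splits on $v \geq \ell_i$ versus $v < \ell_i$ and uses the indifference identity $(\ell_i - p_i)e^{-t_i} = u_{i+1}(\ell_i)$ from \cref{lem:maximum-value-to-purchase} together with the inductive hypothesis to match prices and utilities. The one difference is that you explicitly verify, via the monotonicity and $e^{-t_{i+1}}$-Lipschitzness of the realized continuation utility $u'_{i+1}(\cdot\,; r_{i+1},\ldots,r_k)$, that every realized pricing mechanism has purchase threshold exactly $\ell_i$ --- a pointwise fact the paper uses implicitly (and invokes only informally via \cref{lem:maximum-value-to-purchase} when proving \cref{thm:pricing_optimal}), so your write-up fills in that step rather than deviating from the argument.
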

\begin{proof}
We prove by a backward induction from $i = k$ back to $i = 1$. As the base case where $i = k$, we simply have $p'_k(1) = p_k(1) = \ell_k$. As for the utilities, if $v < \ell_k$, then $u_k(v) = w_k'(v) = 0$; and if $v > \ell_k$, we have
\[
w_k'(v) = \E[u'_k(v; R_k)] = x_k \E[u'_k(v; 1)] + (1 - x_k) \E[u'_k(v; 0)] = e^{-t_k} \cdot x_k \cdot (v - p'_k)  + 0 = u_k(v).
\]

For the inductive step, assume the lemma statement holds for $i + 1$. Recall that by \cref{lem:maximum-value-to-purchase}, $(\ell_i - p_i) \cdot e^{-t_i} = u_{i + 1}(\ell_i)$, and moreover, by the construction of $p'_i(1, r_{i + 1}, \ldots, r_k)$, we have
\[
p'_i(1, r_{i + 1}, \ldots, r_k) = \ell_i - u'_{i + 1}(\ell_i; r_{i + 1}, r_{i + 2}, \ldots, r_k) \cdot e^{t_i}
\]
for any $r_{i + 1}, r_{i + 2}, \ldots, r_k$. Therefore,
\begin{align*}
\E[p'_i(1, R_{i + 1}, \ldots, R_k)] &= \ell_i - \E[u'_{i + 1}(\ell_i; R_{i + 1}, R_{i + 2}, \ldots, R_k)] \cdot e^{t_i}\\
&= \ell_i - w'_{i + 1}(\ell_i) \cdot e^{t_i} = \ell_i - u_{i + 1}(\ell_i) \cdot e^{t_i} = p_i,
\end{align*}
where the third equality applies the induction hypothesis. As for the utilities, if $v < \ell_i$, then we simply have $u_i(v) = u_{i + 1}(v)$ and $w'_i(v) = w'_{i + 1}(v)$, leading to $u_i(v) = w'_i(v)$. Otherwise,
\begin{align*}
w'_i(v) &= \E[u'_i(v; R_i, R_{i + 1}, \ldots, R_k)]\\
&= x_i\E[u'_i(v; 1, R_{i + 1}, \ldots, R_k)] + (1 - x_i)\E[u'_i(v; 0, R_{i + 1}, \ldots, R_k)]\\
&= x_i (v - \E[p'_i(R_{i + 1}, \ldots, R_k)]) \cdot e^{-t_i} + (1 - x_i) w'_{i + 1}(v)\\
&= x_i (v - p_i) \cdot e^{-t_i} + (1 - x_i) u_{i + 1}(v)\\
&= u_i(v).
\end{align*}
This finishes the induction and shows the expected prices and utilities in the distribution of pricing mechanisms are the same as those in the sequential single-lottery mechanism.
\end{proof}
We are now ready to combine \cref{lem:single_lottery}, \cref{lem:maximum-value-to-purchase}, \cref{lem:derand_same} to prove \cref{thm:pricing_optimal}.
\begin{proof}[Proof of \cref{thm:pricing_optimal}]
Given any single-lottery mechanism, we can apply the derandomization process to obtain a distribution of pricing mechanisms. In each pricing mechanism, since $\ell_i$, the minimum value to purchase the option at timestamp $t_i$, remains the same by the construction, the decision of whether to choose the option or not at timestamp $t_i$ also remains the same for other values by \cref{lem:maximum-value-to-purchase}. As a result, for any value $v$, the probability of the buyer with value $v$ reaching timestamp $t_i$ over all pricing mechanisms is the same as that of the single-lottery mechanism. Since the expected prices at each timestamp are also the same by \cref{lem:derand_same}, the expected revenue over the pricing mechanisms is the same as the single-lottery mechanism. Therefore, there exists a pricing mechanism achieving at least the same revenue.
\end{proof}
%\kn{Does this hold for multiple discount factor?}

\section{Characterizations of Optimal Pricing Mechanisms}
\label{sec:pricing}

In this section, we provide characterizations for the structure of the optimal pricing mechanism. Moreover, we develop our algorithm to compute it that runs in time polynomial in $|V|$.

\subsection{Formulation as a Mathematical Program}
We begin with formulating the computation of the optimal pricing mechanism as a mathematical program. Let $p(v_i)$ and $t(v_i)$ be the price and time for a buyer with value $v_i$ to purchase the item. If a buyer with value $v_i$ buys the item at time $t(v_i)$, then we have $\big(v_i - p(v_i)\big) \cdot e^{-t(v_i)} \geq 0$. As a result, any buyer with value $v_j > v_i$ can buy the item at time $t(v_i)$ to achieve non-negative utility since 
\[
    \big(v_j - p(v_i)\big) \cdot e^{-t(v_i)} > \big(v_i - p(v_i)\big) \cdot e^{-t(v_i)} \geq 0.
\]
This fact allows us to enumerate the minimum valuation $v^{\mathrm{min}}$ that participates in the auction and compute the optimal pricing mechanism conditioned on each possible minimum valuation $v^{\mathrm{min}}$. Without loss of generality, we rename $v_1 < v_2 < \cdots < v_n$ to be the valuations participating in the auction. We assume $n \geq 2$ since when $n = 1$, one can simply charge $v_1$ at time $0$ to achieve optimality. To maximize revenue, the seller solves the following mathematical program:
\begin{equation}
\tag{A}
\begin{array}{ll@{}ll}
\text{maximize}  & \displaystyle\sum\limits_{i = 1}^n p(v_i) f(v_i) &\\
\text{subject to}& \displaystyle\big(v_i - p(v_i)\big) \cdot e^{-t(v_i)} \geq \big(v_i - p(v_{j})\big) \cdot e^{-t(v_{j})}, \quad \quad \forall i \neq j \in [n], \\
&v_i - p(v_i) \geq 0, \quad \quad \forall i \in [n], \\
&t(v_i) \in [0,T], \quad \quad \forall i \in [n].
\end{array}
\label{lp:1}
\end{equation}
Here $\big(v_i - p(v_i)\big) \cdot e^{-t(v_i)} \geq \big(v_i - p(v_{j})\big) \cdot e^{-t(v_{j})}$ is the incentive compatibility (IC) constraint ensuring that the buyer with value $v_i$ does not switch to other options; and $v_i - p(v_i) \geq 0$ corresponds to the individual rationality (IR) constraint ensuring that the buyer does not incur negative utility by choosing the designated option. It turns out that many of the IC constraints are redundant and we can simplify the program as:
\begin{equation}
\tag{B}
\begin{array}{ll@{}ll}
\text{maximize}  & \displaystyle\sum\limits_{i = 1}^n p(v_i) f(v_i) &\\
\text{subject to}& \displaystyle\big(v_i - p(v_i)\big) \cdot e^{-t(v_i)} \geq \big(v_i - p(v_{i - 1})\big) \cdot e^{-t(v_{i - 1})}, \quad \quad \forall i \in [n] \setminus \{1\}, \\
&v_i - p(v_i) \geq 0, \quad \quad \forall i \in [n], \\
&p(v_i) - p(v_{i - 1}) \geq 0, \quad \quad \forall i \in [n] \setminus \{1\}, \\
&t(v_i) \in [0,T], \quad \quad \forall i \in [n].
\end{array}
\label{lp:2}
\end{equation}
Intuitively, by adding the monotonicity constraint on the price, it suffices to check whether a buyer with value $v_i$ has incentive to switch to the option designated to value $v_{i-1}$ only.

\begin{proposition}
\label{prop:program_equiv}
Any optimal solution of Program~\eqref{lp:2} is an optimal solution of Program~\eqref{lp:1}.
\end{proposition}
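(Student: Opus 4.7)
My plan is to establish that Programs~\eqref{lp:1} and~\eqref{lp:2} share the same optimal value and that an optimizer of~\eqref{lp:2} is also an optimizer of~\eqref{lp:1}. Concretely, I will show (i) every feasible solution of~\eqref{lp:1} is feasible for~\eqref{lp:2} with the same revenue, so the optimum of~\eqref{lp:2} is at least that of~\eqref{lp:1}, and (ii) every optimal solution of~\eqref{lp:2} is feasible for~\eqref{lp:1} (after at most a revenue-preserving adjustment), giving the reverse inequality. Throughout I write $q_i := e^{-t(v_i)}$ and $g_j(v) := (v - p(v_j))\, q_j$.

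For (i), the adjacent IC and IR constraints of~\eqref{lp:2} are directly present in~\eqref{lp:1}, so only the monotonicity $p(v_{i+1}) \geq p(v_i)$ needs to be derived. Adding the ICs ``$v_i$ prefers option $i$ to $i{+}1$'' and ``$v_{i+1}$ prefers option $i{+}1$ to $i$'' (both present in~\eqref{lp:1}) yields $(v_{i+1} - v_i)(q_{i+1} - q_i) \geq 0$, so $q$ is nondecreasing in $v$. Rearranging ``$v_i$ prefers $i$ to $i{+}1$'' and using IR then gives $p(v_{i+1}) - p(v_i) \geq (v_i - p(v_i))(1 - q_i/q_{i+1}) \geq 0$.

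For (ii), fix an optimal $(p^*, t^*)$ of~\eqref{lp:2}. From the downward IC $(v_i - p_i^*)\, q_i^* \geq (v_i - p_{i-1}^*)\, q_{i-1}^*$ together with $p_i^* \geq p_{i-1}^*$ and $v_i \geq p_i^*$, I obtain $q_i^*/q_{i-1}^* \geq (v_i - p_{i-1}^*)/(v_i - p_i^*) \geq 1$, so $q^*$ is monotone. Next I claim the upward adjacent IC $(v_i - p_i^*)\, q_i^* \geq (v_i - p_{i+1}^*)\, q_{i+1}^*$ also holds. If it failed at some $i$, then IR would force $v_i > p_{i+1}^*$; replacing $(p_i^*, t_i^*)$ by $(p_{i+1}^*, t_{i+1}^*)$ leaves every other~\eqref{lp:2} constraint intact --- most crucially the adjacent IC at $v_i$, which follows from the chain $(v_i - p_{i+1}^*)\, q_{i+1}^* > (v_i - p_i^*)\, q_i^* \geq (v_i - p_{i-1}^*)\, q_{i-1}^*$ --- while changing the revenue by $(p_{i+1}^* - p_i^*)\, f(v_i) \geq 0$. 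If this is strictly positive we contradict optimality; if $p_i^* = p_{i+1}^*$ the adjustment is revenue-preserving and makes the upward IC hold, so we may assume $(p^*, t^*)$ itself satisfies both adjacent ICs.

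With $p^*$ and $q^*$ both monotone and both adjacent ICs satisfied, the difference $g_j(v) - g_k(v)$ is linear in $v$ with slope $q_j^* - q_k^*$, so adjacent comparisons chain into all pairwise ICs of~\eqref{lp:1}: downward comparisons propagate from $v_j$ to larger values and upward comparisons propagate to smaller ones. Hence $(p^*, t^*)$ is feasible for~\eqref{lp:1} with revenue equal to the optimum of~\eqref{lp:2}, making it optimal for~\eqref{lp:1}. I expect step~(ii)'s upward-IC argument to be the main obstacle: the replacement must preserve every other~\eqref{lp:2} constraint, and the boundary case $p_i^* = p_{i+1}^*$ calls for an indifference adjustment in place of a strict-improvement contradiction.
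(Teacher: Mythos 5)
Your proposal is correct and follows essentially the same route as the paper's proof: you extract price monotonicity of optimal solutions of~\eqref{lp:1} from the two crossed IC constraints (you add them, the paper multiplies them), you dispose of the missing upward constraints at an optimum of~\eqref{lp:2} by an exchange argument that would otherwise strictly raise revenue, and you chain adjacent ICs into all pairwise ICs of~\eqref{lp:1} using monotonicity of prices and discount factors (the paper chains only downward and handles upward pairs by the same exchange idea). The one soft spot you flag --- the tie case $p_i^* = p_{i+1}^*$, where your exchange is only revenue-preserving and you pass to a modified optimum rather than certifying the given one --- is treated no more rigorously in the paper, whose ``redundancy'' step likewise claims a strict improvement that is not strict under ties.
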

\begin{proof}
Note that any IC constraint $\big(v_i - p(v_i)\big) \cdot e^{-t(v_i)} \geq \big(v_i - p(v_{j})\big) \cdot e^{-t(v_{j})}$ with $i < j$ is redundant. This is because if there exists a value $v_i$ that prefers to switch to the option designated to $v_j > v_i$, then we can change the option in the solution for $v_i$ to $\big(p(v_j), t(v_j)\big)$ to increase the objective without violating any constraint.

As we introduce the constraints of price monotonicity $p(v_i) - p(v_{i - 1}) \geq 0$ in Program~\eqref{lp:2}, we next show that any optimal solution $(p^A, t^A)$ of Program~\eqref{lp:1} satisfies $p^A(v_i) - p^A(v_{i - 1}) \geq 0$ for all $i$. For any $i < j$, $(p^A, t^A)$ must satisfy the IC constraints
$\big(v_j - p^A(v_j)\big) \cdot e^{-t^A(v_j)} \geq \big(v_j - p^A(v_{i})\big) \cdot e^{-t^A(v_{i})}$ and $\big(v_i - p^A(v_i)\big) \cdot e^{-t^A(v_i)} \geq \big(v_i - p^A(v_{j})\big) \cdot e^{-t^A(v_{j})}$.
%\footnote{The upward IC constraint $\big(v_i - p^A(v_i)\big) \cdot e^{-t^A(v_i)} \geq \big(v_i - p^A(v_{j})\big) \cdot e^{-t^A(v_{j})}$ is still automatically satisfied after we removed them.}
Taking their product,
\[
\big(v_i - p^A(v_i)\big) \cdot \big(v_j - p^A(v_j)\big) \geq \big(v_i - p^A(v_j)\big) \cdot \big(v_j - p^A(v_i)\big),
\]
which implies $p^A(v_i) \leq p^A(v_j)$ for $i < j$.

Finally, we show that only IC constraints for adjacent values are needed. Consider an optimal solution $(p^B, t^B)$ of Program~\eqref{lp:2}. For any $j = i+1$ and $k=i+2$, $(p^B, t^B)$ must satisfy:
\[
\big(v_k - p^B(v_k)\big) \cdot e^{-t^B(v_k)} \geq \big(v_k - p^B(v_j)\big) \cdot e^{-t^B(v_j)}
\]
and
\[
\big(v_j - p^B(v_j)\big) \cdot e^{-t^B(v_j)} \geq \big(v_j - p^B(v_i)\big) \cdot e^{-t^B(v_i)}.
\]
Moreover, since $p^B(v_j) \geq p^B(v_i)$, we have $t^B(v_j) \leq t^B(v_i)$. Therefore, we have
\begin{align*}
\big(v_k - p^B(v_k)\big) \cdot e^{-t^B(v_k)} &\geq \big(v_k - p^B(v_j)\big) \cdot e^{-t^B(v_j)}\\
&= \big(v_j - p^B(v_j)\big) \cdot e^{-t^B(v_j)} + (v_k - v_j) \cdot e^{-t^B(v_j)}\\
&\geq \big(v_j - p^B(v_i)\big) \cdot e^{-t^B(v_i)} + (v_k - v_j) \cdot e^{-t^B(v_i)}\\
&= \big(v_k - p^B(v_i)\big) \cdot e^{-t^B(v_i)}.
\end{align*}
We can then apply induction to show that the constraints $\big(v_i - p^B(v_i)\big) \cdot e^{-t^B(v_i)} \geq \big(v_i - p^B(v_{j})\big) \cdot e^{-t^B(v_{j})}$ for every $i > j$ are satisfied.
\end{proof}

We can further rewrite the IC constraints as
\[
t(v_{i - 1}) - t(v_i) \geq \ln \frac{v_i - p(v_{i - 1})}{v_i - p(v_i)}, \quad \quad \forall i \in [n] \setminus \{1\}.
\]
Observe that $\{t(v_i)\}_{i \in [n]}$ only appear in the constraints of Program~\eqref{lp:2} and it is also subject to the constraint that $t(v_i) \in [0, T]$ for all $i \in [n]$. Moreover, since $t(v_i)$ is monotonically non-increasing as $i$ increases, it suffices to have $t(v_n) = 0$ and $t(v_1) \leq T$. We take the following perspective: given part of the solution $\{p(v_i)\}_{i = 1}^n$, in order to satisfy the constraints, we wish to minimize the total time span, i.e., $t(v_1)$. Observe that the total time span is minimized by setting:
\begin{equation}
\begin{cases}
t(v_n) = 0\\
t(v_{i - 1}) - t(v_i) =\ln \frac{v_i - p(v_{i - 1})}{v_i - p(v_i)}, \quad \quad \forall i \in [n] \setminus \{1\}
\end{cases}
\label{eq:time_relation}
\end{equation}
Therefore, the IC constraints can be further simplified as $t(v_1) = \sum_{i = 2}^n \ln \frac{v_i - p(v_{i - 1})}{v_i - p(v_i)} \leq T$.
%
%In fact, in the optimal solution of Program~\eqref{lp:2}, $t(v_1)$ should be exactly $T$; or otherwise, one can slightly increase $p(v_n)$%\footnote{For any $i \geq 2$, the optimal solution must have $p(v_i) < v_i$ to satisfy the IC constraints, otherwise $(v_i - p(v_i)) \cdot e^{-t(v_i)} = 0$, which cannot be at least $(v_i - p(v_{i - 1})) \cdot e^{-t(v_{i - 1})}$.}
%to get another feasible solution with better revenue. 
To summarize, we can now rewrite Program~\eqref{lp:2} as follows:
\begin{equation}
\tag{C}
\begin{array}{ll@{}ll}
\text{maximize}  & \displaystyle\sum\limits_{i = 1}^n p(v_i) f(v_i) &\\
\text{subject to}& \displaystyle\sum_{i = 2}^n \ln \frac{v_i - p(v_{i - 1})}{v_i - p(v_i)} \leq T, \\
&v_i - p(v_i) \geq 0, \quad \quad \forall i \in [n], \\
&p(v_i) - p(v_{i - 1}) \geq 0, \quad \quad \forall i \in [n] \setminus \{1\}. \\
\end{array}
\label{lp:3}
\end{equation}

\begin{proposition}
Any optimal solution of Program~(\ref{lp:3}) is an optimal solution of Program~(\ref{lp:2}).
\end{proposition}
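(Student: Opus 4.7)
The plan is to show that Programs~(\ref{lp:2}) and~(\ref{lp:3}) have the same optimal value, and that any optimum $\{p(v_i)\}_{i\in[n]}$ of~(\ref{lp:3}), once augmented with times $\{t(v_i)\}_{i\in[n]}$ defined by the recursion~(\ref{eq:time_relation}), yields an optimal solution of~(\ref{lp:2}). The argument is a two-direction equivalence of feasibility, and uses only the observation already made in the text: given prices, the minimum possible $t(v_1)$ consistent with the IC and monotonicity constraints is exactly $\sum_{i=2}^n \ln\frac{v_i - p(v_{i-1})}{v_i - p(v_i)}$.

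\emph{First direction (optimum of~(\ref{lp:3}) lifts to a feasible solution of~(\ref{lp:2}) with the same objective).} Given an optimal $\{p(v_i)\}$ of~(\ref{lp:3}), I would define $t(v_n)=0$ and $t(v_{i-1}) = t(v_i)+\ln\frac{v_i-p(v_{i-1})}{v_i-p(v_i)}$ for $i\geq 2$. The log term is well-defined and non-negative because IR gives $v_i-p(v_i)>0$ (otherwise $p(v_i)=v_i$ and one can check the solution simplifies trivially) and price monotonicity gives $v_i-p(v_{i-1})\geq v_i-p(v_i)$. By induction $t(v_i)\in[0,T]$: the sequence is non-increasing, $t(v_n)=0$, and $t(v_1)=\sum_{i=2}^n \ln\frac{v_i-p(v_{i-1})}{v_i-p(v_i)}\leq T$ by the single constraint of~(\ref{lp:3}). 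The IC constraint of~(\ref{lp:2}) reduces algebraically to $t(v_{i-1})-t(v_i)\geq \ln\frac{v_i-p(v_{i-1})}{v_i-p(v_i)}$, and our definition satisfies it with equality. IR and monotonicity of prices are inherited. Hence the lifted solution is feasible for~(\ref{lp:2}) and has the same objective, proving $\mathrm{OPT}(\ref{lp:2})\geq \mathrm{OPT}(\ref{lp:3})$.

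\emph{Second direction (optimum of~(\ref{lp:2}) gives a feasible solution of~(\ref{lp:3}) with the same objective).} Given an optimal $(p^B,t^B)$ of~(\ref{lp:2}), the prices $\{p^B(v_i)\}$ already satisfy IR and price monotonicity, which are the corresponding constraints of~(\ref{lp:3}). For the time constraint of~(\ref{lp:3}), rewriting each IC constraint of~(\ref{lp:2}) as $t^B(v_{i-1})-t^B(v_i)\geq \ln\frac{v_i-p^B(v_{i-1})}{v_i-p^B(v_i)}$ and summing over $i=2,\ldots,n$ gives
\[
\sum_{i=2}^n \ln\frac{v_i-p^B(v_{i-1})}{v_i-p^B(v_i)} \leq t^B(v_1)-t^B(v_n)\leq T-0 = T,
\]
using $t^B(v_1)\leq T$ and $t^B(v_n)\geq 0$. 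Thus $\{p^B(v_i)\}$ is feasible for~(\ref{lp:3}) with the same objective value, showing $\mathrm{OPT}(\ref{lp:3})\geq \mathrm{OPT}(\ref{lp:2})$.

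Combining the two inequalities shows the two programs have the same optimal value. Consequently, the lifted solution constructed in the first direction from any optimum of~(\ref{lp:3}) is not only feasible but also optimal for~(\ref{lp:2}), which is exactly the statement of the proposition. I do not anticipate a real obstacle here; the only subtlety is keeping straight that~(\ref{lp:3}) has no time variables, so ``any optimum of~(\ref{lp:3}) is an optimum of~(\ref{lp:2})'' is interpreted via the canonical lift~(\ref{eq:time_relation}), and one must verify that all constraints of~(\ref{lp:2}) remain satisfied along that lift.
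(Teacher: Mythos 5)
Your proof is correct and follows essentially the same route as the paper, which states this proposition without a separate proof because it is exactly the preceding derivation: rewrite the IC constraints of~(\ref{lp:2}) in logarithmic form, observe the times appear only in constraints, and use the time-minimizing lift~(\ref{eq:time_relation}) to pass between the two programs in both directions. Your write-up simply makes that two-direction feasibility argument explicit, including the harmless edge case $p(v_i)=v_i$, which in fact cannot occur in a feasible solution for $i\geq 2$.
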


We then show two useful lemmas regarding the structure of the optimal solutions of Program~(\ref{lp:3}). First, for the lowest value $v_1$, its price must be $p(v_1) = v_1$ in any optimal solution of Program~(\ref{lp:3}).

\begin{lemma}
\label{lem:pv1_v1}
In any optimal solution of Program~(\ref{lp:3}), $p(v_1) = v_1$.
\end{lemma}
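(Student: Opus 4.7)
The plan is to argue by contradiction. Suppose there is an optimal solution $(p, t)$ with $p(v_1) < v_1$, and construct a strictly improving feasible perturbation.

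The key structural observation is how perturbing $p(v_1)$ alone would affect feasibility. Increasing $p(v_1)$ by a small $\epsilon > 0$: (a) keeps IR for $v_1$ valid since there is slack $v_1 - p(v_1) > 0$; (b) strictly decreases the $i=2$ term in the time constraint, $\ln \frac{v_2 - p(v_1)}{v_2 - p(v_2)}$, because its numerator shrinks; and (c) leaves all other terms and IR constraints untouched. The only obstruction is the monotonicity constraint $p(v_2) \geq p(v_1)$, which fails if $p(v_1) = p(v_2)$.

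To handle the binding case, I would define $k$ to be the largest index such that $p(v_1) = p(v_2) = \cdots = p(v_k)$, so that either $k = n$ or $p(v_{k+1}) > p(v_k)$. Observe that $p(v_j) = p(v_1) < v_1 \leq v_j$ for every $j \leq k$, so there is strict IR slack at each of these values. Now consider the perturbation that simultaneously raises $p(v_1), p(v_2), \ldots, p(v_k)$ each by a small $\epsilon > 0$. I would then verify each constraint:
\begin{itemize}
\item IR holds for $j \leq k$ because of the above slack; for $j > k$ it is unchanged.
\item Monotonicity is preserved within the block (all raised by the same $\epsilon$), and if $k < n$ then $p(v_{k+1}) - (p(v_k) + \epsilon) > 0$ for small $\epsilon$ by the choice of $k$.
\item For the time constraint, the $i = 2, \ldots, k$ terms equal $\ln 1 = 0$ before and after; the $i = k+1$ term (if it exists) strictly decreases since only its numerator shrinks by $\epsilon$; the remaining terms are unchanged. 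Hence the sum strictly decreases and remains $\leq T$.
\end{itemize}
The objective strictly increases by $\epsilon \sum_{j=1}^k f(v_j) > 0$, contradicting optimality.

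The argument is essentially a local one, and I do not foresee a real obstacle; the only subtle point is correctly identifying the maximal block of tied lowest prices so that raising them in tandem does not activate any monotonicity constraint. The case $k = n$ (all prices equal and $< v_1$) also follows directly: raising every price by $\epsilon$ keeps all IC terms at zero and strictly increases the objective.
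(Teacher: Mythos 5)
Your proof is correct and takes essentially the same approach as the paper: both arguments identify the maximal block of tied lowest prices $p(v_1)=\cdots=p(v_k)$ and raise the whole block uniformly (the paper pushes it all the way to $\min\{p(v_{k+1}),v_1\}$, you by a small $\epsilon$), observing that the intra-block terms of the time constraint stay at $\ln 1 = 0$ while the boundary term only shrinks, so feasibility is preserved and the revenue strictly increases. Your write-up just makes the feasibility check more explicit than the paper's ``easy to verify'' step.
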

\begin{proof}
First, the optimal solution satisfies $p(v_1) = \cdots = p(v_n) < v_1$, then we can simply set $p(v_1) = \cdots = p(v_n) = v_1$ to obtain a feasible solution that generates strictly higher revenue. 

Moreover, if the optimal solution satisfies $p(v_1) = p(v_2) = \cdots = p(v_k) < \min\{p(v_{k + 1}), v_1\}$ for some $k < n$, we create a pricing function such that $p'(v_i) = \min\{p(v_{k + 1}), v_1\}$ for $i \leq k$ and $p'(v_i) = p(v_i)$ for $i > k$. It is easy to verify that $\p'$ is a feasible solution that generates strictly higher revenue, which produces a contradiction.
% (or $p(v_1) = \cdots = p(v_n) < v_1$). We can set all of $p(v_1), p(v_2), \ldots, p(v_k)$ to $\min\big(p(v_{k + 1}), v_1\big)$ and increase the objective of Program~(\ref{lp:3}). Every constraint of Program~(\ref{lp:3}) still holds. This contradicts with the optimality premise.%$\sum_{i = 2}^n \ln \frac{v_i - p(v_{i - 1})}{v_i - p(v_i)}$ is decreased so the first constraint of Program~(\ref{lp:3}) still holds.
\end{proof}

Let $\p = \{p(v_i)\}_{i \in [n]}$ be the pricing vector. The next lemma demonstrates that the feasible set 
\[
    \S = \left\{\p: p(v_1) = v_1 \ \bigg| \ \sum_{i = 2}^n \ln \frac{v_i - p(v_{i - 1})}{v_i - p(v_i)} \leq T\right\}
\]
is convex, and moreover, the optimal solution of Program~\eqref{lp:3} is unique. 
\begin{lemma}
$\S$ is convex and the optimal solution of Program~\eqref{lp:3} is unique.
\label{lem:price_unique}
\end{lemma}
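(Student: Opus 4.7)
The plan is to reduce both claims to a single strict-convexity property of the time-constraint function $g(\p) := \sum_{i=2}^n \ln \frac{v_i - p(v_{i-1})}{v_i - p(v_i)}$. The key structural observation is that, once one groups the log-terms by which variable they involve, $g$ decomposes as a sum of single-variable functions: each middle variable $p(v_k)$ with $1 < k < n$ contributes only $\ln(v_{k+1} - p(v_k)) - \ln(v_k - p(v_k))$, the top variable contributes only $-\ln(v_n - p(v_n))$, and $p(v_1) = v_1$ is fixed. Hence the Hessian of $g$ is diagonal, and a short second-derivative computation yields diagonal entries $\frac{1}{(v_k - p(v_k))^2} - \frac{1}{(v_{k+1} - p(v_k))^2}$ for $1 < k < n$ and $\frac{1}{(v_n - p(v_n))^2}$ for $k = n$. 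Both are strictly positive, the first because $v_k < v_{k+1}$ together with $p(v_k) < v_k$ (which is implicit in $g(\p) < \infty$). So $g$ is strictly convex on its natural domain, and $\S$ is convex as the intersection of the sublevel set $\{g \leq T\}$ with the hyperplane $\{p(v_1) = v_1\}$.

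For uniqueness, I plan a two-step argument. First, I would show that any optimal $\p^*$ of Program~\eqref{lp:3} saturates the time constraint, i.e., $g(\p^*) = T$. Suppose instead $g(\p^*) < T$, and let $k \geq 2$ be the largest index with $p^*(v_k) < v_k$. Such a $k$ must exist because, otherwise, every coordinate from $v_2$ onward sits on the IR boundary, forcing $g(\p^*) = +\infty$. Increasing $p^*(v_k)$ by a small $\epsilon > 0$ strictly raises the linear objective by $\epsilon f(v_k) > 0$; the IR bound $p(v_k) \leq v_k$ has room, the lower monotonicity $p(v_k) \geq p^*(v_{k-1})$ still holds, and the upper monotonicity $p(v_k) \leq p^*(v_{k+1}) = v_{k+1}$ (when $k < n$) has slack thanks to $p^*(v_k) < v_k < v_{k+1}$; the time constraint has room by hypothesis. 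This contradicts optimality of $\p^*$.

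Given this first step, suppose for contradiction that $\p^* \neq \p^{**}$ are both optimal, and let $\bar\p = (\p^* + \p^{**})/2$. Convexity of the feasible region makes $\bar\p$ feasible, and linearity of the objective makes $\bar\p$ optimal; but strict convexity of $g$ then forces $g(\bar\p) < \tfrac{1}{2}(g(\p^*) + g(\p^{**})) = T$, contradicting the first step. The step I expect to be the most delicate is the perturbation argument in the first step, because one has to align the IR and both monotonicity constraints simultaneously; choosing $k$ to be the \emph{largest} IR-slack index is precisely what prevents the upper-monotonicity bound $p^*(v_{k+1})$ from blocking the perturbation.
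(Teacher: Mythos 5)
Your proposal is correct and follows essentially the same route as the paper: the same separable decomposition of the time-constraint into single-variable convex terms (with the $p(v_1)$ term constant) gives convexity of $\S$, and uniqueness comes from averaging two optima, using strict convexity to create slack in the time constraint, and exploiting that slack by raising $p(v_n)$ -- you package this as a ``constraint must be tight at any optimum'' step, while the paper perturbs the midpoint directly, but the argument is the same. (One small simplification: feasibility already forces $p(v_i) < v_i$ for all $i \geq 2$, so your ``largest IR-slack index'' is always $k = n$.)
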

\begin{proof}
%Suppose $\mathbf{p}$ and $\mathbf{p'}$ are different optimal price vectors. Then $\mathbf{p''} := \frac{\mathbf{p} + \mathbf{p'}}{2}$ gives the same revenue. We will show it decreases $\sum_{i = 2}^n \ln \frac{v_i - p(v_{i - 1})}{v_i - p(v_i)}$:
Observe that, $\sum_{i = 2}^n \ln \frac{v_i - p(v_{i - 1})}{v_i - p(v_i)}$ can be written as
\begin{align*}
\sum_{i = 2}^n \ln \frac{v_i - p(v_{i - 1})}{v_i - p(v_i)} &= \ln \big(v_2 - p(v_1)\big) - \ln \big(v_n - p(v_n)\big) + \ln \prod_{i = 2}^{n - 1} \frac{v_{i + 1} - p(v_i)}{v_i - p(v_i)}\\
&= \ln \big(v_2 - p(v_1)\big) + \ln \frac{1}{v_n - p(v_n)} + \sum_{i = 2}^{n - 1} \ln \left(1 + \frac{v_{i + 1} - v_i}{v_i - p(v_i)}\right).
\end{align*}
Notice that $\ln \big(v_2 - p(v_1)\big)$ is a constant given $p(v_1) = v_1$, $\ln \frac{1}{v_n - p(v_n)}$ is convex in $p(v_n)$ and $\ln \left(1 + \frac{v_{i + 1} - v_i}{v_i - p(v_i)}\right)$ is convex in $p(v_i)$. As a result, if $\p \in \S$ and $\p' \in \S$,  we get $\p'' = \frac{\p + \p'}{2} \in \S$. Therefore, $\S$ is a convex set.

As for the uniqueness, we prove by contradiction. Suppose that there are two different optimal solutions $\mathbf{p}$ and $\mathbf{p'}$. Consider $\p'' = \frac{\p + \p'}{2}$, which is a feasible solution since $\S$ is convex and other constraints are linear. We get
\[
    \sum_{i = 1}^n p''_i(v_i) f(v_i) = \sum_{i = 1}^n p'_i(v_i) f(v_i) = \sum_{i = 1}^n p'_i(v_i) f(v_i).
\]
However, 
\[
    \sum_{i = 2}^n \ln \frac{v_i - p''(v_{i - 1})}{v_i - p''(v_i)} < \frac 1 2 \cdot \left(\sum_{i = 2}^n \ln \frac{v_i - p(v_{i - 1})}{v_i - p(v_i)} + \sum_{i = 2}^n \ln \frac{v_i - p'(v_{i - 1})}{v_i - p'(v_i)}\right) \leq T.
\]
We can now increase $p''(v_n)$ so that $\sum_{i = 2}^n \ln \frac{v_i - p''(v_{i - 1})}{v_i - p''(v_i)} = T$ to obtain another feasible solution with strictly greater revenue. This contradicts with the fact that $\mathbf{p}$ and $\mathbf{p'}$ are optimal solutions. Therefore, the optimal solution must be unique.
\end{proof}

\subsection{Grouping and an Exponential Time Warm-up Algorithm}
As a second step of developing a computationally efficient algorithm, we introduce the key concept of {\em grouping functions} and provide a warm-up algorithm that can compute the optimal pricing mechanism albeit in exponential time. Given a pricing mechanism with price $\p$, we can partition the values into groups such that values in the same group share the same price.
\begin{definition}\label{def:grouping}
Given $\p$, a \emph{grouping function} $g_{\p}  :  [n] \rightarrow [n]$ is a function that satisfies:
\begin{itemize}
    \item If $i = 1$ or $p(v_i) \neq p(v_{i - 1})$, then $g_{\p}(i) = i$.
    \item Otherwise, $g_{\p}(i) = g_{\p}(i - 1)$.
\end{itemize}
We simply write $g$ instead of $g_{\p}$ when $\p$ is clear from the context.
\end{definition}
Here, $v_i$ and $v_j$ are in the same group if $g(i) = g(j)$. Moreover, it is clear that $g$ is monotonically non-decreasing and we say a value $v_i$ is representative if $g(i) = i$.
Given a grouping function $g$, let $\I_g = \{i  :  g(i) = i\}$ be the set of indices of representative values in $g$. For convenience, let $\nxt_g(i) = \min\{j  :  g(v_j) > g(v_i)\}$ be the index of the representative value of the next group. For $\nxt_g(i)$, we will omit the dependence on $g$ when the context is clear. Moreover, let $f_g(k) = \sum_{i  :  g(i) = k} f(v_i)$ be the summation of the probability mass of values in the group with representative value $v_k$.

It turns out that given a grouping function $g$, by relaxing the monotonicity constraints on prices, one can compute the optimal price $\p$ respecting $g$. To be precise, consider the following Program~\eqref{lp:4}:
\begin{equation}
\tag{D}
\begin{array}{ll@{}ll}
\text{maximize}  & \displaystyle\sum\limits_{i = 1}^n p\left(v_{g(i)}\right) f(v_i) &\\
\text{subject to}& \displaystyle\sum_{i = 2}^n \ln \frac{v_i - p\left(v_{g(i-1)}\right)}{v_i - p\left(v_{g(i)}\right)} \leq T, \\
&v_i - p(v_i) \geq 0, \quad \quad \forall i \in \I_g \setminus \{1\}, \\
&p(v_1) = v_1.
\end{array}
\label{lp:4}
\end{equation}
In Program~\eqref{lp:4}, for $i \not \in \I_g$, its price $p(v_i)$ is set to be $p\left(v_{g(i)}\right)$ to respect the grouping function $g$. Moreover, $p(v_1)$ is set to be $v_1$ according to \cref{lem:pv1_v1}.

\begin{lemma}\label{lem:price_time}
    Given a grouping function $g$, there exists a constant $c > 0$ such that the optimal solution of Program~\eqref{lp:4} satisfies $p(v_1) = v_1$, $\sum_{i = 2}^n \ln \frac{v_i - p\left(v_{g(i-1)}\right)}{v_i - p\left(v_{g(i)}\right)} = T$, and
    \begin{itemize}
        \item For $k \in \I_g$ with $k > 1$ and $g(n) > k$, 
    \[
        p(v_k) = \frac{v_{\nxt(k)} + v_k - \sqrt{\left(v_{\nxt(k)} - v_k\right)^2 + 4 \cdot \left(v_{\nxt(k)} - v_k\right) / \left(c \cdot f_g(k)\right)}} 2;
    \]
        \item For $k > 1$ with $g(n) = k$, $p(v_k) = v_k - 1 / \left(c \cdot f_g(k)\right)$.
    \end{itemize}
\end{lemma}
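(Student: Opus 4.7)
The plan is to characterize the optimizer of Program~\eqref{lp:4} via the KKT conditions, with the time budget being the single active nontrivial constraint. Write $\I_g = \{1 = k_1 < k_2 < \cdots < k_m\}$. I would first establish that the time constraint is tight at any optimum: otherwise, perturb $p(v_{k_m})$ upward by a small $\varepsilon > 0$. This affects only the single log term containing $p(v_{k_m})$, which is continuous in that variable, so for small $\varepsilon$ the solution remains feasible while the objective strictly increases by $\varepsilon f_g(k_m) > 0$, a contradiction.

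Next I would set up the Lagrangian with multiplier $\lambda > 0$ on the time constraint and note that whenever $g(i{-}1) = g(i)$ the corresponding log term vanishes, so the constraint reduces to $\sum_{j=2}^m \ln \frac{v_{k_j} - p(v_{k_{j-1}})}{v_{k_j} - p(v_{k_j})}$. Each representative price $p(v_{k_j})$ with $j > 1$ appears in at most two log terms: the denominator of the $j$-th term, and, if $j < m$, the numerator of the $(j{+}1)$-th term. Setting $\partial L / \partial p(v_{k_j}) = 0$ with $c := 1/\lambda$ yields, for $j = m$, the equation $c f_g(k_m)\,(v_{k_m} - p(v_{k_m})) = 1$, giving the second formula directly; and for $1 < j < m$,
\[
c f_g(k_j)\,(v_{k_j} - p(v_{k_j}))(v_{k_{j+1}} - p(v_{k_j})) = v_{k_{j+1}} - v_{k_j},
\]
which is a quadratic in $p(v_{k_j})$; the root satisfying $p(v_{k_j}) < v_{k_j}$ is precisely the first formula. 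Both formulas give $p(v_{k_j}) < v_{k_j}$ strictly for any $c > 0$, so the IR constraints are slack at the optimum and can indeed be omitted from the Lagrangian.

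To conclude, I would argue that a unique $c > 0$ makes the time sum equal $T$: each $p(v_{k_j})$ is continuous and strictly increasing in $c$ (as $c$ grows, the radicand in the first formula shrinks and $1/(c f_g(k_m))$ shrinks in the second), so the time sum is a continuous monotone function of $c$ whose range covers any target $T > 0$, and the intermediate value theorem applies. Finally, since the feasible set is convex (the time sum is convex in $\p$, by the same decomposition used in the proof of \cref{lem:price_unique}) and the objective is linear, this KKT point is a global optimum.

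\paragraph{Main obstacle.} The bulk of the work is bookkeeping --- identifying which log terms contain each $p(v_{k_j})$, computing partial derivatives, and selecting the correct root of the quadratic. A subtler point is the monotonicity of the time sum in $c$ needed to invoke the intermediate value theorem; this can be established either by a standard envelope/sensitivity argument (as $\lambda$ increases, the optimal time-sum decreases) or by direct differentiation of the closed-form expressions.
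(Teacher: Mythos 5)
Your proposal is correct and takes essentially the same route as the paper: a Lagrangian/KKT analysis of the convex Program~\eqref{lp:4}, with the closed-form prices read off from the stationarity conditions and $c$ identified as the reciprocal of the multiplier on the time constraint. The differences are cosmetic --- the paper obtains the multipliers via Slater's condition and strong duality and eliminates the IR multipliers $\alpha(k)$ through complementary slackness (the first-order condition would blow up if $v_k - p(v_k) = 0$), whereas you prove tightness by a direct perturbation, drop the IR constraints up front and verify their slackness ex post, and construct $c$ via monotonicity and the intermediate value theorem; both arguments are equivalent in substance.
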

\begin{proof}
    Notice that the constraints of Program~\eqref{lp:4} are convex, the objective is linear, and Slater's condition clearly holds (by considering $p(v_k) = 0$ for all $k \in \I_g$). As a result, Theorem 1 in Chapter 8.6 of \citet{luenberger1997optimization} implies that the strong duality holds and the Lagrangian objective is
    \begin{align*}
        \L(\p; c, \alpha) &~= \sum\limits_{i = 1}^n p\left(v_{g(i)}\right) f(v_i) - \frac 1 c \cdot \left(\sum_{i = 2}^n \ln \frac{v_i - p\left(v_{g(i-1)}\right)}{v_i - p\left(v_{g(i)}\right)} - T\right) + \sum_{i  :  g(i) = i} \alpha(i) \cdot \big(v_i - p(v_i)\big).
    \end{align*}
    We consider the first order conditions with respect to $p(v_k)$ each $k \in \I_g \setminus \{1\}$. 
    % For $k = 1$, we have
    %     \[
    %         \frac{\partial \L(\p; c, \alpha)}{\partial p(v_1)} = f_g(1) + \frac 1 c \cdot \frac{1}{v_{\nxt(1)} - p(v_1)}- \alpha(1) = 0,
    %     \]
    % which implies that $\alpha(1) > 0$. Thus, by complementary slackness, we must have $p(v_1) = v_1$. 
    For $k \in \I_g$ with $k > 1$ and $g(n) > k$, we have
        \[
            \frac{\partial \L(\p; c, \alpha)}{\partial p(v_k)} = f_g(k) + \frac 1 c \cdot \left(\frac{1}{v_{\nxt(k)} - p(v_k)} - \frac{1}{v_{k} - p(v_k)}\right) - \alpha(k) = 0,
        \]
    Notice that if $\alpha(k) > 0$, then by complementary slackness, we have $v_k - p(v_k) = 0$, which implies that $\frac{\partial \L(\p; c, \alpha)}{\partial p(v_k)} = -\infty$. As a result, $\alpha(k)$ must be $0$, and we can now compute $p(v_k)$ by rearranging the terms in $\frac{\partial \L(\p; c, \alpha)}{\partial p(v_k)}$.
    Finally, for $k > 1$ with $g(n) = k$, we have
    \[
        \frac{\partial \L(\p; c, \alpha)}{\partial p(v_k)} = f_g(k) - \frac 1 c \cdot \frac{1}{v_{k} - p(v_k)} - \alpha(k) = 0,
    \]
    Similar to the previous case, $\alpha(k) = 0$, and therefore, we can compute $p(v_k)$ by rearranging the terms.
\end{proof}

Given a grouping function $g$, \cref{lem:price_time} characterizes the optimal solution for Program~\eqref{lp:4} by a constant $c > 0$. The next lemma enables an efficient way to search for $c$. 
\begin{lemma}
\label{lem:time_monotonicity_c}
$\sum_{i = 2}^n \ln \frac{v_i - p(v_{i - 1})}{v_i - p(v_i)}$ is monotonically increasing as $c$ increases.
\end{lemma}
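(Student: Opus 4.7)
The plan is to combine two observations: (i) the total time expression rewrites, via the telescoping identity used in the proof of \cref{lem:price_unique}, into a sum whose summands are each monotonically increasing functions of a single representative price $p(v_k)$; and (ii) each such representative price is itself a strictly increasing function of $c$, by direct inspection of the closed-form expressions from \cref{lem:price_time}.

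First I would apply the telescoping identity from the proof of \cref{lem:price_unique}, together with $p(v_1) = v_1$, to write
\[
\sum_{i=2}^n \ln \frac{v_i - p(v_{i-1})}{v_i - p(v_i)} \;=\; \ln(v_2 - v_1) \;-\; \ln\bigl(v_n - p(v_n)\bigr) \;+\; \sum_{i=2}^{n-1} \ln\!\left(1 + \frac{v_{i+1} - v_i}{v_i - p(v_i)}\right).
\]
The first term is constant. Using $p(v_i) = p(v_{g(i)})$ dictated by the grouping $g$, every remaining summand depends on $c$ only through the price of the representative $v_{g(i)}$, and each summand is strictly increasing in that price whenever $p(v_{g(i)}) < v_i$ (which holds since $p(v_{g(i)}) \le v_{g(i)} \le v_i$ by IR and monotonicity of $g$).

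Second, I would verify that each representative price $p(v_k)$, $k \in \I_g \setminus \{1\}$, is strictly increasing in $c$ using the formulas from \cref{lem:price_time}. In the last-group case $g(n)=k$, the identity $p(v_k) = v_k - 1/(c f_g(k))$ is manifestly increasing in $c$. In the intermediate case $g(n) > k$, the dependence on $c$ enters only through the term $4(v_{\nxt(k)} - v_k)/(c f_g(k))$ inside the square root; as $c$ grows this term shrinks, the square root shrinks, and consequently $p(v_k)$ grows.

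Third, composing these two monotonicities gives the conclusion: $-\ln(v_n - p(v_n))$ is strictly increasing in $p(v_n)$ which in turn is strictly increasing in $c$, and likewise for each $\ln\!\bigl(1 + (v_{i+1}-v_i)/(v_i - p(v_i))\bigr)$. Thus $\sum_{i=2}^n \ln \frac{v_i - p(v_{i-1})}{v_i - p(v_i)}$ is a sum of a constant and strictly increasing functions of $c$, hence strictly increasing. The only minor care point — and not really an obstacle — is to keep the log arguments positive throughout the relevant range of $c$, i.e.\ $p(v_k) < v_k$ for all $k \in \I_g$; this follows directly from the formulas in \cref{lem:price_time} (both expressions for $p(v_k)$ are strictly less than $v_k$ for every finite $c > 0$), so no additional argument is needed.
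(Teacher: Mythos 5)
Your proposal is correct and follows essentially the same route as the paper's proof: the same telescoping rewrite of the time span (using $p(v_1)=v_1$), monotonicity of each summand in the corresponding representative price, and monotonicity of each price in $c$. The only difference is that you verify the last step explicitly from the closed forms of \cref{lem:price_time}, whereas the paper simply asserts it (and later confirms it via the derivative computation in \cref{lem:price_monotonicity_c}), so no substantive gap either way.
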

\begin{proof}
Observe that $\sum_{i = 2}^n \ln \frac{v_i - p(v_{i - 1})}{v_i - p(v_i)}$ can be written as:
\begin{align*}
\sum_{i = 2}^n \ln \frac{v_i - p(v_{i - 1})}{v_i - p(v_i)} &= \ln \prod_{i = 2}^n \frac{v_i - p(v_{i - 1})}{v_i - p(v_i)}\\
&= \ln \left(\frac{v_2 - p(v_1)}{v_n - p(v_n)} \cdot \prod_{i = 2}^{n - 1}\left(1 + \frac{v_{i + 1} - v_i}{v_i - p(v_i)}\right)\right)\\
&= \ln \left(\frac{v_2 - v_1}{v_n - p(v_n)} \cdot \prod_{i = 2}^{n - 1}\left(1 + \frac{v_{i + 1} - v_i}{v_i - p(v_i)}\right)\right),
\end{align*}
where the last equality follows from $p(v_1) = v_1$.
Therefore, $\sum_{i = 2}^n \ln \frac{v_i - p(v_{i - 1})}{v_i - p(v_i)}$ is monotonically increasing as $p(v_i)$ increases. Finally, notice that $p(v_i)$ is monotonically increasing as $c$ increases, and therefore, we conclude that $\sum_{i = 2}^n \ln \frac{v_i - p(v_{i - 1})}{v_i - p(v_i)}$ monotonically increases as $c$ increases.
\end{proof}

\cref{lem:time_monotonicity_c} warrants a binary search approach to find the desired constant $c$. Combining with our structural result of \cref{lem:price_time}, we are now ready to develop a simple algorithm to compute the optimal pricing mechanism by enumerating all possible grouping function (see \cref{alg:enumerate_grouping}).

\begin{algorithm}
	\DontPrintSemicolon
	\ForEach{$v^{\mathrm{min}}$ being the smallest value participating in the auction} {
	    Rename $v_1 < v_2 < \cdots < v_n$ to be the values participating in the auction\;
	    \ForEach{possible grouping function $g$} {
	        Solve the optimal solution for Program~\eqref{lp:4} corresponding to $g$\;
	        \If{the prices in the optimal solution are monotonically non-decreasing}{Calculate the revenue and record the solution\;}
        }
	}
	\Return the best recorded solution\;
	\caption{Warm-up (exponential time) algorithm by grouping enumeration}\label{alg:enumerate_grouping}
\end{algorithm}

\definecolor{inv}{rgb}{0.70, 0.70, 0.70}
\definecolor{win}{rgb}{0.5, 0.95, 0.5}
\begin{table}[ht]
\centering
 \begin{tabular}{| c | c | c | c | c || c|} 
 \hline
 $v^{\mathrm{min}}$ & $\big(g(1), g(2), g(3)\big)$ & $p(v_1)$ & $p(v_2)$ & $p(v_3)$ & $\rev$ \\ [0.5ex] 
 \hline\hline
 $v_1$ & $(1, 2, 3)$ & $3$ & $2$ & $9.5$ & \cellcolor{inv}N.A. ($4.833$) \\ 
 \hline
 \cellcolor{win}$v_1$ & \cellcolor{win}$(1, 1, 3)$ & \cellcolor{win}$3$ & \cellcolor{win}$3$ & \cellcolor{win}$7.5$ & \cellcolor{win}$4.5$ \\ 
 \hline
 $v_1$ & $(1, 2, 2)$ & $3$ & $3.5$ & $3.5$ & $3.333$ \\ 
 \hline
 $v_1$ & $(1, 1, 1)$ & $3$ & $3$ & $3$ & $3$ \\ 
 \hline
 $v_2$ & $(\cdot, 2, 3)$ & \cellcolor{inv}$\infty$ & $4$ & $8$ & $4$ \\ 
 \hline
 $v_2$ & $(\cdot, 2, 2)$ & \cellcolor{inv}$\infty$ & $4$ & $4$ & $2.667$ \\ 
 \hline
 $v_3$ & $(\cdot, \cdot, 3)$ & \cellcolor{inv}$\infty$ & \cellcolor{inv}$\infty$ & $12$ & $4$ \\ 
 \hline
\end{tabular}
\caption{Results of all iterations when running \cref{alg:enumerate_grouping} on \cref{ex:enumerate_grouping}}
\label{tab:enumerate_grouping}
\end{table}
\begin{example}
Let $n = 3$. $(v_1, v_2, v_3) = (3, 4, 12)$, and $(f(v_1), f(v_2), f(v_3)) = \big(\frac{1}{3}, \frac{1}{3}, \frac{1}{3}\big)$. Let $T = \ln 2$. \cref{alg:enumerate_grouping} enumerates the minimum value $v^{\mathrm{min}}$ and the grouping function $g$. The results for all possible combinations are shown in \cref{tab:enumerate_grouping}. In particular, the first row is invalid since $p(v_2) < p(v_1)$; while the second row (highlighted) corresponds to the optimal solution.
% In particular, for the iteration with $v^{\mathrm{min}} = v_1$ and $(g(v_1), g(v_2), g(v_3)) = (1, 2, 3)$, \cref{alg:enumerate_grouping} calculates the prices using \cref{lem:price_time}, leading $p(v_1) = v_1 = 3$, $p(v_2) = \frac{v_3 + v_2 - \sqrt{(v_3 - v_2)^2 + 4 \cdot \frac{v_3 - v_2}{c \cdot f(v_2)}}}{2} = 8 - \sqrt{16 + \frac{24}{c}}$, and $p(v_3) = v_3 - \frac{1}{c \cdot f(v_3)} = 12 - \frac{3}{c}$. We have
% \[
% \ln \frac{v_3 - p(v_2)}{v_3 - p(v_3)} + \ln \frac{v_2 - p(v_1)}{v_2 - p(v_2)} = \ln \frac{4 + \sqrt{16 + \frac{24}{c}}}{\frac{3}{c}} + \ln \frac{1}{\sqrt{16 + \frac{24}{c}} - 4} = \ln \frac{\left(2 + \sqrt{4 + \frac{6}{c}}\right)^2c^2}{18}.
% \]
% When we set it to the horizon $T = \ln 2$, we derive $c = 1.2$, and thus $p(v_1) = 3$, $p(v_2) = 2$ and $p(v_3) = 9.5$. However, this set of prices is invalid since $p(v_1) > p(v_2)$.
% Now we focus on another iteration where $v^{\mathrm{min}} = v_1$ and $(g(v_1), g(v_2), g(v_3)) = (1, 1, 3)$. Since $v_1$ and $v_2$ are in the same group, we put them together and create a new instance, where $n' = 2$, $v'_1 = v_1 = 3$, $v'_2 = v_3 = 12$, $f'(v'_1) = f(v_1) + f(v_2) = \frac{2}{3}$ and $f'(v'_2) = f(v_3) = \frac{1}{3}$. The solved prices turn out to be $p'(v'_1) = 3$ and $p'(v'_2) = 7.5$, which is monotone. It gives revenue of $4.5$, and this is optimal across all iterations as demonstrated in \cref{tab:enumerate_grouping}.
\label{ex:enumerate_grouping}
\end{example}

\iffalse
\begin{proposition}
Every $t(v_i)$ and $p(v_i)$ is continuous in $c \in (0, +\infty)$.
\end{proposition}
\fi
% \begin{theorem}
% \cref{alg:enumerate_grouping} outputs an optimal solution.
% \end{theorem}
% \begin{proof}
% The succinct grouping $g^*$ of the optimal solution is visited during the enumeration. At that iteration, \cref{alg:enumerate_grouping} can find the optimal prices using \cref{lem:price_time}. \cref{lem:time_monotonicity_c} enables finding the optimal $c^*$ with binary search.
% \end{proof}

\subsection{Computationally Efficient Algorithm}
\label{subsec:opt_alg}
In this section, we improve \cref{alg:enumerate_grouping} to develop a computationally efficient algorithm for calculating the optimal pricing mechanism. Instead of solving the optimal pricing mechanism for a particular $T$, we aim to solve it for all possible $T' \geq T$ all at once. 
% For this purpose, we introduce the concept of succinct grouping functions.
% \begin{definition}[Succinct Grouping Function]
% Given a pricing vector $\p$, the \emph{succinct grouping function} $g^*_{\p} : [n] \rightarrow [n]$ is the grouping function that induces minimum number of groups, and therefore, it is a grouping function that satisfies:
% \begin{itemize}
%     \item If $i > 1$ and $p(v_i) = p(v_{i - 1})$, then $g^*_{\p}(i) = g^*_{\p}(i - 1)$.
% \end{itemize}
% \end{definition}
Intuitively, our algorithm starts from time limit $T' = \infty$ and gradually decreases $T'$ until it hits $T$. For each time limit $T'$, the algorithm maintains the grouping function corresponding to the optimal pricing function.

Inspired by \cref{lem:price_time}, given a grouping function $g$, let $p_g(\cdot; c)$ be a pricing function such that
    \begin{itemize}
        \item $p_g(v_1; c) = v_1$,
        \item For $k \in \I_g$ with $k > 1$ and $g(n) > k$, 
    \[
        p_g(v_k; c) = \frac{v_{\nxt(k)} + v_k - \sqrt{\left(v_{\nxt(k)} - v_k\right)^2 + 4 \cdot \left(v_{\nxt(k)} - v_k\right) / \left(c \cdot f_g(k)\right)}} 2;
    \]
        \item For $k > 1$ with $g(n) = k$, $p_g(v_k; c) = v_k - 1 / \left(c \cdot f_g(k)\right)$.
    \end{itemize}
%For $i \not \in \I_g$, simply set $p(v_i) = p(v_{g(i)})$. 
Moreover, for $k \in \I_g$ with $g(n) > k$, let 
\[
    c_g^*(k) = \min\{c > 0  :  p_g(v_k; c) \leq p_g(v_{\nxt(k)}; c)\}
\]
be the minimum $c$ such that the price for $v_k$ is still not larger than the price for $v_{\nxt(k)}$. Finally, let $c_g^*(k) = 0$ for $k = g(n)$. We are now ready to present our computationally efficient algorithm (see \cref{alg:poly_time}). Intuitively, when $T'$ approaches $+\infty$, the optimal pricing function $p(v_i)$ approaches $v_i$, and the seller can almost extract the full welfare $\sum_{i \in [n]} v_i f(v_i)$ as the revenue. Therefore, when starting from a large enough $T'$, each value $v_i$ forms a separate group initially. For each step, under the current grouping function $g$, we compute the maximum time limit $T'$ such that there exists two values from different groups sharing the same prices in the optimal solution. We merge these two groups and repeat the process, until $T' \leq T$. In the end, we compute the optimal solution for time limit $T$ under the final grouping function.

\begin{algorithm}
	\DontPrintSemicolon
	\ForEach{$v^{\mathrm{min}}$ being the smallest value participating in the auction} {
	   % Create a temporary copy of the instance to work on in this iteration\;
	   Rename $v_1 < v_2 < \cdots < v_n$ to be the values participating in the auction\;
	   % %$g(i) \gets i, \quad \forall i$ \tcp*{Let each value be in a separate group in the beginning.}
	   Initialize the grouping function $g$ with $g(i) = i$ for all $i \in [n]$\;
	   %\;
	   $T' \gets \infty$\;
	    \Repeat{$T' \leq T$}
	    {
	        Let $k^* \in \arg\max_{k} c_g^*(k)$\;
	        $T' \gets \sum_{i = 2}^n \ln \left(v_i - p_g\big(v_{g(i-1)};~ c_g^*(k^*)\big)\right) - \ln \left(v_i - p_g\big(v_{g(i)};~c_g^*(k^*)\big)\right)$\;
    	    \If{$T' > T$} {
    	        \ForEach{$i  :  g(i) = k^*+1$}
    	        {
    	            $g(i) \gets k^*$ \tcp*{Combine groups with index $k^*$ and $(k^* + 1)$}
    	        }
        	}
    	}
    	Solve the optimal solution for Program~\eqref{lp:4} corresponding to $g$\;

    	Calculate the revenue and record the solution\;
	}
	\Return the best recorded solution\;
	\caption{Computationally efficient algorithm for the optimal pricing mechanism}\label{alg:poly_time}
\end{algorithm}

Observe that in \cref{alg:poly_time}, there are $|V|$ possible $v^{\mathrm{min}}$ to enumerate. Moreover, there are $|V|$ groups initially, and for each repeat-until loop, the number of groups decreases by $1$. Therefore, \cref{alg:poly_time} is computationally efficient. %in time $O(|V|^2)$, and thus, it is computationally efficient.
In the rest of this subsection, we show the correctness of \cref{alg:poly_time}.
We begin with the following lemma.
\begin{lemma}\label{lem:price_monotonicity_c}
For any grouping function $g$, and for any $k \in \I_g$ with $k < g(n)$, if $c < c_g^*(k)$, then we have $p_g(v_k; c) > p_g(v_{\nxt(k)}; c)$.
\end{lemma}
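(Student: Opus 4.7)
The plan is to analyze the function $D(c) := p_g(v_{\nxt(k)}; c) - p_g(v_k; c)$ and show that its super-level set $\{c > 0 : D(c) \geq 0\}$ is of the form $[c_g^*(k), \infty)$ (or all of $(0, \infty)$ in a degenerate case). Once this is established, the conclusion $D(c) < 0$ for $c < c_g^*(k)$ is immediate from the definition of $c_g^*(k)$ as the minimum of this super-level set. As a starting observation, the closed forms in \cref{lem:price_time} show that both $p_g(v_k; c)$ and $p_g(v_{\nxt(k)}; c)$ are smooth and strictly increasing in $c$ on $(0, \infty)$, with $p_g(v_k; c) \to v_k$ and $p_g(v_{\nxt(k)}; c) \to v_{\nxt(k)}$ as $c \to \infty$; hence $D$ is continuous with $\lim_{c \to \infty} D(c) = v_{\nxt(k)} - v_k > 0$, and the super-level set is nonempty.

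The heart of the proof is to show that at any $c_0 > 0$ with $D(c_0) = 0$---equivalently $p_g(v_k; c_0) = p_g(v_{\nxt(k)}; c_0) =: p^*$---we have $D'(c_0) > 0$. Setting $A = v_k - p^* > 0$ and $B = v_{\nxt(k)} - p^* > 0$, I would implicitly differentiate the KKT identity $(v_k - p)(v_{\nxt(k)} - p) = (v_{\nxt(k)} - v_k)/(c f_g(k))$ derived in the proof of \cref{lem:price_time} to obtain
\[
    \left.\frac{d\,p_g(v_k; c)}{dc}\right|_{c_0} = \frac{AB}{c_0(A+B)}.
\]
When $\nxt(\nxt(k))$ exists, the analogous calculation applied to $p_g(v_{\nxt(k)}; c)$ with $C := v_{\nxt(\nxt(k))} - p^* > A$ yields $\frac{d\,p_g(v_{\nxt(k)}; c)}{dc}\big|_{c_0} = \frac{BC}{c_0(B+C)}$, and a short algebraic simplification produces
\[
    D'(c_0) = \frac{B^2(C - A)}{c_0(A+B)(B+C)} > 0,
\]
where positivity hinges on $C > A$, i.e., on $v_{\nxt(\nxt(k))} > v_k$. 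In the boundary case $\nxt(k) = g(n)$, the closed form $p_g(v_{\nxt(k)}; c) = v_{\nxt(k)} - 1/(c f_g(\nxt(k)))$ (with $B = 1/(c_0 f_g(\nxt(k)))$) gives $\frac{d\,p_g(v_{\nxt(k)}; c)}{dc}\big|_{c_0} = B/c_0$, leading to $D'(c_0) = B^2/(c_0(A+B)) > 0$.

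With $D'(c_0) > 0$ at every zero, $D$ can have at most one zero on $(0, \infty)$: if $c_0 < c_1$ were two consecutive zeros, then $D$ would have constant sign on $(c_0, c_1)$, yet $D'(c_0) > 0$ would force that sign to be positive near $c_0$ while $D'(c_1) > 0$ would force it to be negative near $c_1$---a contradiction. Combined with $\lim_{c \to \infty} D(c) > 0$, the super-level set therefore equals either $(0, \infty)$ or $[c^*, \infty)$ for the unique zero $c^*$, with $D < 0$ on $(0, c^*)$; in the latter case $c^*$ coincides with $c_g^*(k)$ and the lemma follows. The main technical obstacle I expect is the derivative computation at the crossing point, together with the case split between $\nxt(k) < g(n)$ and $\nxt(k) = g(n)$, each of which requires its own functional form for $p_g(v_{\nxt(k)}; c)$ and a slightly different algebraic simplification.
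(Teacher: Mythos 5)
Your proposal is correct and follows essentially the same route as the paper: differentiate the closed-form prices in $c$ (the paper records exactly your formula, $\frac{\partial p_g(v_k;c)}{\partial c}=\frac{1}{c}\bigl(\frac{1}{v_{\nxt(k)}-p_g(v_k;c)}+\frac{1}{v_k-p_g(v_k;c)}\bigr)^{-1}$, plus the last-group variant), observe that at any point where $p_g(v_k;c)=p_g(v_{\nxt(k)};c)$ the next group's price grows strictly faster, and conclude single-crossing --- you merely make explicit the final ``every zero of $D$ is a strict upcrossing, hence at most one zero'' step that the paper leaves implicit. The one detail to patch is the representative $k=1$, which the lemma allows since $1\in\I_g$: there $p_g(v_1;c)\equiv v_1$, so the KKT identity you implicitly differentiate does not define this price and your assumption $A=v_k-p^*>0$ fails at a crossing, but the conclusion is immediate because $D(c)=p_g(v_{\nxt(1)};c)-v_1$ is strictly increasing (and your $D'$ formula in fact degenerates correctly when $A=0$).
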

\begin{proof}
For $k \in \I_g$ with $k > 1$ and $k < g(n)$, we take the derivative of $p_g(v_k; c)$ with respect to $c$,
% \begin{align*}
% \frac{\partial p_g(v_k; c)}{\partial c} %&= -\frac{1}{2} \cdot \frac{\partial \sqrt{\left(v_{\nxt(k)} - v_k\right)^2 + 4 \cdot \left(v_{\nxt(k)} - v_k\right) / \left(c \cdot f_g(k)\right)}}{\partial c} \\
% = \frac{v_{\nxt(k)} - v_k}{c^2 \cdot f_g(k) \cdot \sqrt{\left(v_{\nxt(k)} - v_k\right)^2 + 4 \cdot \left(v_{\nxt(k)} - v_k\right) / \left(c \cdot f_g(k)\right)}}.
% \end{align*}
% Plug in the formula of $p_g(v_k; c)$ 
and with a few steps of algebraic manipulations, we have
\begin{align*}
\frac{\partial p_g(v_k; c)}{\partial c} 
% &= 
% \frac{v_{\nxt(k)} - v_k}{c^2 \cdot f_g(k) \cdot \big(v_{\nxt(k)} + v_k - 2 \cdot p_g(v_k; c)\big)}\\
% &= \frac{(v_{i + 1} + v_i - 2p(v_i))^2 - (v_{i + 1} - v_i)^2}{4c (v_{i + 1} + v_i - 2p(v_i))}\\
%&= \frac{4p^2(v_i) + 4v_{i + 1}v_i - 4p(v_i)(v_{i + 1} + v_i)}{4c (v_{i + 1} + v_i - 2p(v_i))}\\
% &= \frac{(v_{i + 1} - p(v_i))(v_i - p(v_i))}{c (v_{i + 1} + v_i - 2p(v_i))}\\
&= \frac{1}{c} \cdot \left(\frac{1}{v_{\nxt(k)} - p_g(v_k; c)} + \frac{1}{v_k - p_g(v_k; c)}\right)^{-1}.
\end{align*}
As for $k = g(n)$, similarly we have
\[
\frac{\partial p_g(v_k; c)}{\partial c} = \frac{1}{c} \cdot \left(\frac{1}{v_n - p_g(v_k; c)}\right)^{-1}.
\]
Finally, $\frac{\partial p_g(v_1; c)}{\partial c} = 0$. Therefore, for any $k \in \I_g$ with $k < g(n)$, when $p_g(v_k; c) = p_g(v_{\nxt(k)}; c)$ for $c = c_g^*(k)$, we have $\frac{\partial p_g(v_{\nxt(k)}; c)}{\partial c} > \frac{\partial p_g(v_k; c)}{\partial c}$, which concludes the proof.
\end{proof}

Intuitively, \cref{lem:price_monotonicity_c} demonstrates that for any grouping $g$, its pricing function $p_g(\cdot; c)$ violates the monotonicity constraints for any $c < \max_k c_g^*(k)$. Recall that by \cref{lem:time_monotonicity_c}, given a grouping function $g$, there is in fact a bijection between $c$ and the time limit $T'$. We say a grouping function $g$ is valid with time limit $T'$ if 
\[
    T^*(g) = \sum_{i = 2}^n \ln \left(v_i - p_g\big(v_{g(i-1)};~ \max_k c_g^*(k)\big)\right) - \ln \left(v_i - p_g\big(v_{g(i)};~\max_k c_g^*(k)\big)\right) \leq T',
\]
and therefore, $T^*(g)$ is the minimum time limit in which $g$ is valid. Moreover, for each grouping function $g$, denote the optimal revenue under grouping function $g$ when time limit $T' \geq T^*(g)$ by
\[ 
    \rev_g(T') = \sum_{i = 1}^n p_g\left(v_{g(i)}; c_g(T')\right) f(v_i)
\]
where $c_g(T')$ satisfies 
\[
    \sum_{i = 2}^n \ln \left(v_i - p_g\big(v_{g(i-1)};~ c_g(T')\big)\right) - \ln \left(v_i - p_g\big(v_{g(i)};~c_g(T')\big)\right) = T'.
\]
For $T' < T^*(g)$, simply set $\rev_g(T') = -\infty$. As a result, for each time limit $T'$, the optimal revenue is simply given by 
\[
    \rev^*(T') = \max_g \rev_g(T').
\]
We are now ready to prove the following key lemma that leads to the correctness of \cref{alg:poly_time}.

\begin{lemma}\label{lem:key-correctness}
    If $g^* \in \arg\max_g \rev_g(T')$ for some $T' > T^*(g^*)$, then for all $T''$ satisfying $T^*(g^*) \leq T'' < T'$, we have $g^* \in \arg\max_g \rev_g(T'')$.
\end{lemma}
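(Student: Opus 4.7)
The plan is a proof by contradiction that couples continuity of the per-grouping revenue with the uniqueness of the primal optimum from \cref{lem:price_unique}. The first step is to note that $\rev_g(T)$ is continuous on $[T^*(g),\infty)$: \cref{lem:time_monotonicity_c} implies that $c_g(\cdot)$ is the continuous inverse of a strictly increasing map, and then the closed-form expressions in \cref{lem:price_time} make $T\mapsto p_g(v_k;c_g(T))$ continuous. Since there are only finitely many groupings, $\rev^*(T):=\max_g \rev_g(T)$ is continuous too.

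Suppose for contradiction that $g^*\notin \arg\max_g \rev_g(T'')$ for some $T''\in[T^*(g^*),T')$. Consider $U:=\{T\in[T^*(g^*),T') : \rev^*(T)>\rev_{g^*}(T)\}$, which is nonempty by assumption and open by continuity. I would define $T_0:=\sup U$, and observe: (i) by openness $T_0\notin U$, so $g^*\in \arg\max$ at $T_0$; (ii) $T_0>T^*(g^*)$, since otherwise $U$ would reduce to the single point $T^*(g^*)$ and fail to be open in $[T^*(g^*),T']$. Now pick any sequence $T_n\nearrow T_0$ inside $U$ together with a maximizer $g_n\neq g^*$ at each $T_n$; finiteness of the grouping space lets me pigeonhole a subsequence on which $g_n\equiv \hat g$ for a fixed $\hat g\neq g^*$. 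Since $\hat g$ must be valid at each $T_n$, one gets $T^*(\hat g)\le T_n<T_0$, so $T_0>T^*(\hat g)$ strictly.

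Passing to the limit in $\rev_{\hat g}(T_n)>\rev_{g^*}(T_n)$ yields $\rev_{\hat g}(T_0)\ge \rev_{g^*}(T_0)=\rev^*(T_0)$, so $\hat g$ also lies in $\arg\max$ at $T_0$. Both $\p_{g^*}(\cdot;c_{g^*}(T_0))$ and $\p_{\hat g}(\cdot;c_{\hat g}(T_0))$ are thus optimal solutions of Program~(\ref{lp:3}) at time limit $T_0$; by \cref{lem:price_unique} they must coincide. From $T_0>T^*(g^*)$ and $T_0>T^*(\hat g)$, \cref{lem:price_monotonicity_c} applied in the reverse direction gives strict inter-group price monotonicity for each, so the grouping induced by each pricing is exactly $g^*$ and $\hat g$ respectively. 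Equal pricings therefore force equal induced groupings, hence $g^*=\hat g$, contradicting $\hat g\neq g^*$.

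The main obstacle is the boundary bookkeeping: one has to verify that $T_0$ strictly exceeds both $T^*(g^*)$ and $T^*(\hat g)$, so that neither $\p_{g^*}(\cdot;c_{g^*}(T_0))$ nor $\p_{\hat g}(\cdot;c_{\hat g}(T_0))$ has a pair of adjacent groups degenerately merging; only then does the uniqueness step translate ``equal pricings'' into ``equal groupings''. Everything else is a fairly routine continuity-plus-pigeonhole construction of a violating sequence.
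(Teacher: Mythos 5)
Your argument is correct and follows essentially the same route as the paper's own proof: use continuity of $\rev_g$ to locate a time (your $T_0$, the paper's crossing point $\bar T$) at which both $g^*$ and a rival grouping are simultaneously optimal, then combine the uniqueness of the optimal pricing (\cref{lem:price_unique}) with \cref{lem:price_monotonicity_c} to force the two induced groupings to coincide, a contradiction. The only cosmetic imprecision is the claim that $\rev^*$ is continuous and $U$ open --- the $-\infty$ extension makes each $\rev_g$ discontinuous at $T^*(g)$, so $U$ is only guaranteed to be open to the right --- but that one-sided openness is all your sup-and-pigeonhole argument actually uses.
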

\begin{proof}
    For the purpose of contradiction, assume that there exists $T^*(g^*) \leq T'' < T'$ such that $g^* \not \in \arg\max_g \rev_g(T'')$. It implies that there exists a grouping function $g'$ satisfying $\rev_{g'}(T'') > \rev_{g^*}(T'')$. Note that since for any $g$, $\rev_g(\tilde T)$ is continuous in $\tilde T$ whenever $\tilde T \geq T^*(g)$, there must exist a time $\bar T$ satisfying $T'' < \bar T \leq T'$ and $\rev_{g'}(\bar T) = \rev_{g^*}(\bar T)$.
    
    By \cref{lem:price_unique}, the optimal pricing function when time limit is $\bar T$ is unique, and we denote it by $\bar \p$. As a result, $g'$ is a grouping function that is consistent with $\bar \p$ such that $g'(i) \neq g'(j)$ if $\bar p(v_i) \neq \bar p(v_j)$. However, if there exists $i \neq j$ such that $\bar p(v_i) = \bar p(v_j)$ but $g'(i) \neq g'(j)$, then it implies that $T^*(g') \geq \bar T$ by \cref{lem:price_monotonicity_c}. As a result, the unique grouping function that is consistent with $\bar \p$ and also valid at $T''$ is $g_{\bar \p}$, constructed according to \cref{def:grouping}. This implies $g^* = g_{\bar \p}$ and contradicts with the fact that $\rev_{g'}(T'') > \rev_{g^*}(T'')$.
\end{proof}

\cref{lem:key-correctness} demonstrates that once we successfully identify the grouping function corresponding to the optimal pricing function for time limit $T'$, then such a grouping function will continue to be the grouping function corresponding to the optimal pricing function for time limit $T'' < T'$ until it becomes invalid. This is exactly how \cref{alg:poly_time} proceeds. The algorithm starts with time limit $T' = \infty$ in which the optimal pricing function is simply $p(v_i) = v_i$ and therefore, its corresponding grouping function is $g(i) = i$ for all $i \in [n]$. In each iteration of the repeat-until loop, for the current grouping function $g$, it essentially computes $T^*(g)$. At $T^*(g)$, the old grouping function $g$ is equivalent to a new, coarser grouping function $g'$ which groups the values sharing the same prices together, and both of $g$ and $g'$ give the same, optimal prices. $g$ is about to become invalid if we keep decreasing $T'$, and therefore \cref{alg:poly_time} replaces the current grouping by $g'$. We can then apply \cref{lem:key-correctness} again and know that $g'$ gives optimal prices until it becomes invalid. The next theorem formalized the argument above.

\begin{theorem}\label{thm:poly_time_optimal}
\cref{alg:poly_time} outputs the optimal pricing mechanism for time limit $T$.
\end{theorem}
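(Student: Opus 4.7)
The plan is to prove Theorem 4.8 by induction on iterations of the repeat-until loop in Algorithm 2, tracking how the current grouping relates to $\arg\max_g \rev_g(T')$ as $T'$ decreases continuously from $+\infty$ toward $T$. First I would dispense with the outer enumeration: the observation preceding Program (A) (any value above the participating minimum also participates) makes looping over $v^{\min}$ without loss of generality, so it suffices to argue that for a fixed $v^{\min}$ the inner loop halts with a grouping that attains $\rev^*(T)$.

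Fix $v^{\min}$ and let $g_0 = (1,2,\ldots,n)$ be the initial grouping and $g_1,g_2,\ldots$ the successively coarser groupings the algorithm produces. I would prove by induction on $s$ that $g_s \in \arg\max_g \rev_g(T')$ for every $T'$ in the interval $(T^*(g_s),\,T^*(g_{s-1})]$, with the convention $T^*(g_{-1}) := +\infty$. For the base case, the closed forms in Lemma 4.4 give $p_{g_0}(v_i;c) \to v_i$ as $c\to\infty$, so $\rev_{g_0}(T')$ approaches the full welfare $\sum_i v_i f(v_i)$, which trivially upper-bounds the optimum; continuity of $\rev_{g_0}$ in $T'$ (through $c$, by Lemma 4.4's monotone bijection) then extends $g_0$'s optimality down to $T^*(g_0)$.

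For the inductive step I would argue at the critical time $T' = T^*(g_s)$ that two things happen in concert. By Lemma 4.5, decreasing $c$ below $c_{g_s}^*(k^*)$ for $k^* \in \arg\max_k c_{g_s}^*(k)$ strictly violates $p_{g_s}(v_{k^*};c) \leq p_{g_s}(v_{\nxt(k^*)};c)$, so $g_s$ is infeasible for $T' < T^*(g_s)$. Simultaneously, by Lemma 4.3 the optimal price vector $\bar{\p}$ at $T' = T^*(g_s)$ is unique, and since $\bar p(v_{k^*}) = \bar p(v_{\nxt(k^*)})$ at this critical $c$, the canonical grouping $g_{\bar{\p}}$ of Definition 4.2 is precisely $g_s$ with the groups represented by $k^*$ and $\nxt(k^*)$ merged---exactly the update the algorithm performs. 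Invoking Lemma 4.6 with $g^* = g_{s+1}$ at time $T^*(g_s)$ then yields that $g_{s+1}$ remains optimal throughout $[T^*(g_{s+1}),\,T^*(g_s))$, closing the induction. When the loop finally exits because the freshly computed $T^*(g_s) \leq T$, the current $T$ lies in $[T^*(g_s),\,T^*(g_{s-1}))$, so solving Program (D) under $g_s$ recovers the true optimum.

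The hard part will be pinning down the transition at $T' = T^*(g_s)$ rigorously: namely, ruling out that the unique $\bar{\p}$ could have price coincidences other than $\bar p(v_{k^*}) = \bar p(v_{\nxt(k^*)})$ for $k^* \in \arg\max_k c_{g_s}^*(k)$, so that the merged grouping produced by the algorithm really equals $g_{\bar{\p}}$. Ties in $\arg\max_k c_{g_s}^*(k)$ (simultaneous monotonicity failures at disjoint pairs) can be handled either by breaking ties arbitrarily and performing successive one-pair merges at the same $T'$, or by showing that all such pairs merge jointly into the correct $g_{\bar{\p}}$; either way Lemma 4.5 guarantees no other pair is yet tight, so the algorithm's choice is faithful to the unique coarsening dictated by Lemma 4.3.
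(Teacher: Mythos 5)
Your proposal is correct and mirrors the paper's own (largely informal) argument: the paper likewise establishes the theorem by starting from the singleton grouping at large $T'$, invoking the key-correctness lemma (\cref{lem:key-correctness}) together with uniqueness (\cref{lem:price_unique}) and the monotonicity-violation lemma (\cref{lem:price_monotonicity_c}) to show each current grouping stays optimal until its critical time $T^*(g)$, at which point the merged grouping yields the same unique prices and the argument repeats until $T' \leq T$. The transition details you flag (other price coincidences, ties in $\arg\max_k c_g^*(k)$, and the base case at large finite $T'$) are exactly the points the paper also treats informally, so your induction is a faithful formalization rather than a different route.
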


\paragraph{Continuous value distributions.} \cref{alg:poly_time} takes as input a discrete value distribution. For a continuous value distribution supported on $[0, M]$, we can discretize the support, and run \cref{alg:poly_time} on the discretized distribution. This gives a fully polynomial-time approximation scheme (FPTAS) to the revenue maximization problem if we have oracle access to (inverse) CDF of the value distribution. We discuss this in detail in \cref{sec:continuous}.

\paragraph{Grouping in the middle.} Here we provide an example showing that grouping can happen in the middle of the supports and the monotonicity constraints can be binding there, as opposed to the behaviors in \citep{Shneyerov14,CEP19,wang2001optimal}. This demonstrates that general value distributions indeed pose additional challenges than restrictive ones in previous work.
\begin{example}
\label{ex:grouping_necessary}
Let $n = 4$, $(v_1, v_2, v_3, v_4) = (100, 101, 102, 103)$, and $\big(f(v_1), f(v_2), f(v_3), f(v_4)\big) = \big(\frac{1}{3} - \varepsilon, \frac{1}{3}, \varepsilon, \frac{1}{3}\big)$ for a small enough $\varepsilon > 0$. Clearly the optimal pricing curve sells to all values, i.e., $v^{\mathrm{min}} = v_1$, since any pricing curve with $v^{\mathrm{min}} \geq v_2$ cannot generate revenue more than $v_4 \cdot (1 - f(v_1)) = \frac{206}{3}$, which is less than that of simply pricing at $v_1$. Let $g(\cdot)$ be the grouping function that puts every value into a separate group, i.e., $g(i) = i$ for $i = 1, 2, 3, 4$. We calculate $p_g(v_i; c)$'s according to \cref{alg:poly_time}:
\begin{itemize}
    \item $p_g(v_1; c) = v_1 = 100$,
    \item $p_g(v_2; c) = \frac{2v_2 + 1 - \sqrt{1 + 4 / \left(c \cdot f_g(2)\right)}} 2 = 101 + \frac{1 - \sqrt{1 + 12 / c}} 2$,
    \item $p_g(v_3; c) = \frac{2v_3 + 1 - \sqrt{1 + 4 / \left(c \cdot f_g(3)\right)}} 2 = 102 + \frac{1 - \sqrt{1 + 4 / \left(c \cdot \varepsilon\right)}} 2$,
    \item $p_g(v_4; c) = v_4 - 1 / \left(c \cdot f_g(4)\right) = 103 - 3 / c$.
\end{itemize}
As $c$ goes down from $+\infty$, the first monotonicity constraint to fail is $p_g(v_2; c) \leq p_g(v_3; c)$, at $c \approx 1 / (2 \varepsilon)$. Therefore, the first merge in \cref{alg:poly_time} is on $v_2$ and $v_3$. For a $T$ slightly less than the merging point, we have $p(v_1) < p(v_2) = p(v_3) < p(v_4)$.
\end{example}

\subsection{Implications on Uniform Distributions}
\label{subsec:uniform}
In this section, we demonstrate the power of \cref{alg:poly_time} by deriving the optimal pricing curve when the value distribution is uniform; and its optimal pricing curve turns out to have a nice structure. We focus on the uniform distribution $U[0, 1]$. Using our discretization technique (discussed in \cref{sec:continuous}), we consider the discrete uniform distribution over $V = \{\varepsilon, 2 \cdot \varepsilon, \ldots, \frac{1}{\varepsilon} \cdot \varepsilon\}$, i.e., $f(i \cdot \varepsilon) = \varepsilon$, where $\varepsilon \to 0^+$.

Let $x = v_1$ be the minimum value $v^{\mathrm{min}}$ to purchase in the mechanism. Using the characterization in \cref{lem:price_time}, if a value $v_i$ with $x < v_i < 1$ forms a group on its own, then
\[
p(v_i) = \frac{v_{i + 1} + v_i - \sqrt{(v_{i + 1} - v_i)^2 + 4 \cdot \frac{v_{i + 1} - v_i}{c \cdot f(v_i)}}}{2} = v_i - \frac{-\varepsilon + \sqrt{\varepsilon^2 + \frac{4}{c}}}{2}.
\]
As a result, if $v_{i+1} < 1$ also forms a group on its own, then we simply have $p(v_{i+1}) - p(v_i) = \varepsilon$. Further, fixing $x < v_i < v_{i + 1} < 1$, if $v_i$ and $v_{i + 1}$ each forms a group on its own, then throughout \cref{alg:poly_time}, they will never be merged together, since $p(v_{i+1}) \neq p(v_i)$. Therefore, merging can only happen at the beginning and the end of the value spectrum, and thus the grouping function $g$ must satisfy $g(i) \in \{1, i, g(n)\}$. ($g(i)$ is $1$ if it was merged to the first group, is $g(n)$ if it was merged to the last group, and is $i$ if it was never merged.) Thus, the optimal pricing function $p(v)$ must have three thresholds $0< x < y < z < 1$ such that
\[
p(v) = 
\begin{cases}
\infty &\qquad \text{if } v < x \\
x &\qquad \text{if } x \leq v < y\\
v - y + x &\qquad \text{if } y \leq v < z\\
z - y + x &\qquad \text{if } z \leq v \leq 1
\end{cases}.
\]
In other words, the pricing function is flat when $v \in [x, y) \cup [z, 1]$ and it has slope $1$ when $v \in [y, z)$. Therefore, given this pricing function, the total time span is about:
\[
\sum_{i = y/\varepsilon}^{z/\varepsilon} \ln\frac{y - x + \varepsilon}{y - x} = \frac{z - y}{y - x} + O(\varepsilon),
\]
and as $\varepsilon \to 0^+$, the total revenue is
\[
(y - x) x + (z - y)\left(\frac{y + z}{2} - y + x\right) + (1 - z)(z - y + x) = x - x^2 - y + \frac{y^2}{2} + z - \frac{z^2}{2}.
\]

Optimizing it under the constraint $(z - y) / (y - x) = T$ where $T$ is the time limit, we have $x = \frac 2 {T+4}$, $y = \frac 3 {T+4}$, and $z = \frac {T+2} {T+4}$ with optimal revenue $\frac{T + 2}{2T + 8}$. Therefore, the optimal pricing function is:
\[
p(v) = 
\begin{cases}
\infty &\qquad \text{if } v < \frac{2}{T + 4} \\
\frac{2}{T + 4} &\qquad \text{if } \frac{2}{T + 4} \leq v < \frac{3}{T + 4}\\
v - \frac{1}{T + 4} &\qquad \text{if } \frac{3}{T + 4} \leq v < \frac{T + 3}{T + 4}\\
\frac{T + 2}{T + 4} &\qquad \text{if } \frac{T + 3}{T + 4} \leq v \leq 1
\end{cases}.
\]

\cref{fig:pricing_curves} plots $p(v)$ as a function of $v$ and $t$ for different time limit $T$.
As we can see in \cref{fig:p_vs_t}, the optimal pricing curve is linear in time $t$ for any time limit $T$. Combining with \cref{fig:p_vs_v}, we can observe that there are four different purchasing behaviors, depending on the buyer's value. In particular, 
\begin{itemize}
    \item if the buyer has a high value $v \in [\frac{T + 3}{T + 4}, 1]$, she will purchase at $t = 0$;
    \item if the buyer has a medium-high value $v \in [\frac{3}{T + 4}, \frac{T + 3}{T + 4})$, she will purchase at a time $t \in (0, T)$;
    \item if the buyer has a medium-low value $v \in [\frac{2}{T + 4}, \frac{3}{T + 4})$, she will purchase at a time $t = T$;
    \item if the buyer has a low value $v \in [0, \frac{2}{T + 4})$, she will not participate into the mechanism.
\end{itemize} 
% =========== figure ===============

\definecolor{c0}{rgb}{0.9,0.1,0}
\definecolor{c1}{rgb}{0.6,0.1,0.6}
\definecolor{c6}{rgb}{0.2,0.3,0.8}
\definecolor{cinf}{rgb}{0.1,0.7,0.4}
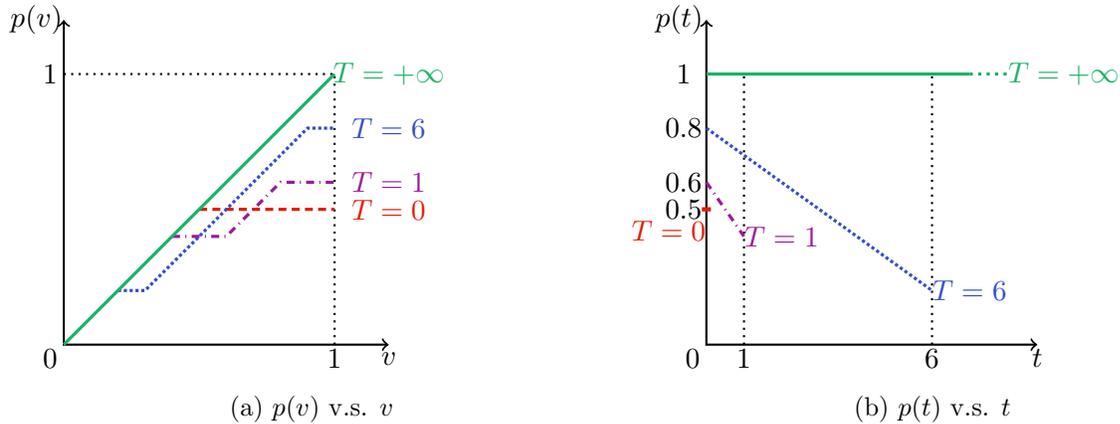
\begin{figure}[t]
\centering
\begin{subfigure}{.5\textwidth}
\begin{tikzpicture}[scale = 3.6]
\draw [thick, <->] (0,1.2) -- (0,0) -- (1.2,0);

\node at (1.2, -0.05) {$v$};
\node at (1.0, -0.05) {$1$};
\node at (-0.1, 1.2) {$p(v)$};
\node at (-0.05, 1.0) {$1$};
\node at (-0.05, -0.05) {$0$};

\node [c0] at (1.2, 0.5) {$T = 0$};
\node [c1] at (1.2, 0.6) {$T = 1$};
\node [c6] at (1.2, 0.8) {$T = 6$};
\node [cinf] at (1.2, 1.0) {$T = +\infty$};

\draw [thick, dotted] (1, 0) -- (1, 1);
\draw [thick, dotted] (0, 1) -- (1, 1);

\draw[c0, densely dashed, very thick] (0.5, 0.5) -- (1.0, 0.5);

\draw[c1, dashdotted, very thick] (0.4, 0.4) -- (0.6, 0.4) -- (0.8, 0.6) -- (1, 0.6);

\draw[c6, densely dotted, very thick] (0.2, 0.2) -- (0.3, 0.2) -- (0.9, 0.8) -- (1, 0.8);

\draw[cinf, very thick] (0.0, 0.0) -- (1.0, 1.0);

\end{tikzpicture}
\caption{$p(v)$ v.s. $v$}
\label{fig:p_vs_v}
\end{subfigure}%
\begin{subfigure}{.5\textwidth}
\begin{tikzpicture}[scale = 0.5, yscale = 7.2]
\draw [thick, <->] (0,1.2) -- (0,0) -- (8.8,0);

\node at (8.8, -0.05) {$t$};
\node at (1, -0.05) {$1$};
\node at (6, -0.05) {$6$};
\node at (-0.72, 1.2) {$p(t)$};
\node at (-0.6, 1.0) {$1$};
\node at (-0.6, 0.8) {$0.8$};
\node at (-0.6, 0.6) {$0.6$};
\node at (-0.6, 0.5) {$0.5$};
\node at (-0.36, -0.05) {$0$};

\node [c0] at (-1.0, 0.42) {$T = 0$};
\node [c1] at (2.0, 0.4) {$T = 1$};
\node [c6] at (7.0, 0.2) {$T = 6$};
\node [cinf] at (9.5, 1.0) {$T = +\infty$};

\draw [thick, dotted] (1, 0) -- (1, 1);
\draw [thick, dotted] (6, 0) -- (6, 1);

\draw[domain=-0.12:0.12, variable=\x, c0, samples=200, smooth, ultra thick] plot ({\x}, {0.5});
\draw[domain=0:1, variable=\x, c1, dashdotted, samples=200, smooth, very thick] plot ({\x}, {0.6-0.2*\x});
\draw[domain=0:6, variable=\x, c6, densely dotted, samples=200, smooth, very thick] plot ({\x}, {0.8-0.1*\x});
\draw[domain=0:7, variable=\x, cinf, samples=200, smooth, very thick] plot ({\x}, {1.0});
\draw[domain=7:8, variable=\x, cinf, samples=200, smooth, very thick, dotted] plot ({\x}, {1.0});
\end{tikzpicture}
\caption{$p(t)$ v.s. $t$}
\label{fig:p_vs_t}
\end{subfigure}
\caption{Pricing curves for uniform distribution $U[0, 1]$}
\label{fig:pricing_curves}
\end{figure}

\begin{figure}[t]
\centering
\begin{tikzpicture}[scale = 0.5, yscale = 14.4]
\draw [thick, <->] (0,0.6) -- (0,0) -- (12.8,0);

\node at (12.8, -0.03) {$T$};
\node at (-0.6, -0.025) {$0$};
\node at (1, -0.025) {$1$};
\node at (6, -0.025) {$6$};
\node at (-1.5, 0.6) {Revenue};
\node at (-0.6, 0.5) {$0.5$};
\node at (-0.9, 0.25) {$0.25$};

\node [c0] at (-1.0, 0.20) {$T = 0$};
\node [c1] at (2.0, 0.26) {$T = 1$};
\node [c6] at (7.0, 0.36) {$T = 6$};
\node [cinf] at (13.2, 0.5) {$T = +\infty$};

\draw [thick, dotted] (1, 0) -- (1, 0.5);
\draw [thick, dotted] (6, 0) -- (6, 0.5);
\draw [thick, dotted, cinf] (0, 0.5) -- (12, 0.5);

\draw[domain=0:12, variable=\x, samples=2000, smooth, very thick] plot ({\x}, {(\x + 2)/(2 * \x + 8)});

\draw[domain=-0.12:0.12, variable=\x, c0, samples=200, smooth, ultra thick] plot ({\x}, {0.25});
\draw[domain=0.29:0.31, variable=\y, c1, samples=200, smooth, ultra thick] plot (1, {\y});
\draw[domain=0.39:0.41, variable=\y, c6, samples=200, smooth, ultra thick] plot (6, {\y});
\end{tikzpicture}
\caption{Revenue v.s. $T$ for uniform distribution $U[0, 1]$}
\label{fig:revenue}
\end{figure}
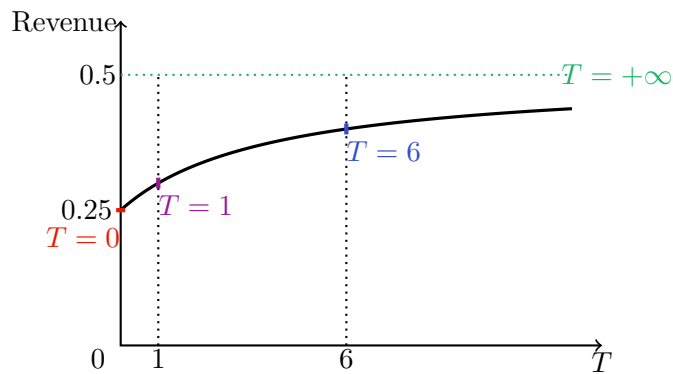

% =========== figure ===============
\cref{fig:revenue} plots the seller's optimal expected revenue in terms of the time limit $T$. Observe that if the time limit $T = 0$, the seller cannot perform price discrimination, and therefore, his best strategy is to set a fixed price of $0.5$, leading to Myerson's revenue~\citep{Myerson81}. As $T$ becomes larger and larger, the seller can increasingly exploit the impatience of the buyer and apply price discrimination to boost his revenue. In the limit case where $T = +\infty$, the seller can perfectly discriminate the buyer's values and collect the buyer's expected value as his revenue.

\section{Adaptive Lotteries}
\label{sec:adaptive}
Given that we have established the optimality of pricing mechanisms within the class of non-adaptive mechanisms (captured fully by non-adaptive sequential lottery mechanisms), a natural question is whether we can extend our result to more general classes of mechanisms, such as adaptive (sequential) lottery mechanisms. An adaptive lottery mechanism is a lottery mechanism except that the menu it posts at a timestamp can depend on the buyer's choices at previous timestamps.

In an adaptive lottery mechanism, the seller again picks a finite number $\nt$ of timestamps $t_1 \leq t_2 \leq \cdots \leq t_{\nt}$. Similar to \cref{lem:single_lottery}, without loss of generality, we assume at each timestamp, the seller provides a single lottery option of ``the item is allocated with probability $x_i$, and if the item is allocated, the payment is $p_i$''. However, the availability of any given lottery option can depend on the buyer's past choices -- if a lottery option is available, the buyer can choose it or skip; and if it is not available, the buyer has to skip (i.e., choose the null option). As explained in \cref{sec:prelim} the null option has to always be available to avoid unrealistic situations like simulating loans by the seller.
%\footnote{The skipping option has to always be available, otherwise the seller can effectively simulate loans by allocating the item upfront and charging later. The IR constraint would still be satisfied and the revenue would be unrealistically large or unbounded if the payment can be negative.}

A buyer with value $v_i$ will select a subset of lotteries: $\ell'_{i,k}$ for $k = 1, 2, \ldots, {\nt}_i$. By asking her to make this selection upfront, we get the following proposition.
\begin{proposition}
\label{prop:adaptive}
Each adaptive mechanism can be simulated with one in the following normal form: At time $t = 0$, $n$ lotteries $\ell_1, \ldots, \ell_n$ are provided, each of which has an allocation probability of $0$. Selecting each lottery $\ell_i$ disables every other $\ell_{j}$ for $j \neq i$, and exclusively opens up a sequence of later options, $\ell'_{i,k}$ for $k = 1, 2, \ldots, {\nt}_i$, each requiring the selection of all previous options.
\end{proposition}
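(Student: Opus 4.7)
The plan is to apply a revelation-principle-style argument that commits the buyer to announcing her type at $t=0$. Starting from an arbitrary adaptive lottery mechanism $M$, I would first argue that a risk-neutral buyer with value $v_i$, who sees the entire conditional menu tree at $t=0$, has an optimal strategy that reduces to a deterministic sequence of lottery selections $\ell'_{i,1}, \ell'_{i,2}, \ldots, \ell'_{i,\nt_i}$ (taken in order at their respective timestamps, until an allocation occurs). The reason is that the only randomness relevant to future choices is whether the item has been allocated, and upon allocation the mechanism ends; so conditional on no allocation yet, the buyer's strategy is purely deterministic given her value. Also, by the taxation-principle-style argument already used in \cref{lem:single_lottery}, we can assume that at each reached timestamp there is only a single non-null option to consider.

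Next, I would construct the normal-form mechanism $M'$ explicitly. At $t=0$, post $n$ entry options $\ell_1,\ldots,\ell_n$, each with allocation probability $0$ and payment $0$. Selecting $\ell_i$ disables all $\ell_j$ for $j\neq i$ and unlocks the prescribed sequence $\ell'_{i,1},\ldots,\ell'_{i,\nt_i}$ at the original timestamps with their original $(x,p)$ parameters, under the rule that $\ell'_{i,k}$ is offered only if the buyer has selected $\ell'_{i,1},\ldots,\ell'_{i,k-1}$. The null option is, as required, always available, so a buyer who enters via $\ell_i$ may stop at any prefix.

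To verify that $M'$ simulates $M$, I would check incentives type-by-type. For the truthful buyer of value $v_i$, selecting $\ell_i$ and then following the entire sequence induces exactly the same allocation-payment distribution as her optimal strategy in $M$, hence the same expected utility and the same expected payment. For a deviator of value $v_i$ who selects $\ell_j$ with $j\neq i$, every continuation available in $M'$ is a prefix of $(\ell'_{j,1},\ldots,\ell'_{j,\nt_j})$; each such prefix strategy is also feasible in $M$ (follow the $v_j$-strategy for the prefix, then choose null forever), and by the IC of $M$ is no better than her own optimal strategy in $M$, which she can replicate in $M'$ by selecting $\ell_i$. Hence $M'$ is IC, and the revenue extracted from each type is preserved, so $M'$ achieves the same expected revenue as $M$.

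The main subtle point, which I would emphasize, is the asymmetry that in $M'$ a deviator can only take prefixes of another type's sequence whereas in $M$ she could in principle skip individual options within the sequence. This asymmetry only helps the seller: it weakly weakens every deviation without touching the truthful strategy (which takes the full sequence anyway), so it cannot introduce new IC violations. A secondary but easy point is ensuring that the ``enters via $\ell_i$'' commitment is costless, which is exactly why the entry lotteries are given allocation probability $0$ and payment $0$; this keeps the new $t=0$ choice strategically equivalent to a direct-revelation message.
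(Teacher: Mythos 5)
Your proposal is correct and follows essentially the same route as the paper, which justifies \cref{prop:adaptive} only by the one-line observation that the buyer's adaptive play collapses to a fixed sequence of lottery selections that she can be asked to commit to upfront via the zero-probability entry lotteries at $t=0$. The extra detail you supply---the reduction to a deterministic selection sequence (since the only payoff-relevant randomness is allocation, which ends the game) and the check that restricting a deviator to prefixes of another type's sequence only shrinks the deviation set available in the original mechanism---is exactly the verification the paper leaves implicit.
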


\paragraph{Generality of Adaptive Lotteries.} Adaptive lotteries are without loss of generality. To see this, note that the most general class of mechanisms would allow the buyer to take an action from a set (or equivalently send a message) in each round, and make the allocation probability and payment when allocated in each round to be a function of the entire history. However, after applying \cref{prop:adaptive}, such a mechanism should not be any different from adaptive lotteries.

It turns out that adaptive lottery mechanisms can sometimes achieve much higher revenue than pricing mechanisms.

\begin{example}
Let $n = e^T$ where $T$ is a large variable, and let $M = 10n$. For each $i \in [n]$, $v_i = M^i$ and $f(v_i) = \frac{1}{M^i}$. The value distribution is essentially a discretized equal-revenue distribution.
\label{ex:superconst}
\end{example}
From \cref{thm:pricing_optimal}, we know the optimal non-adaptive lottery mechanism is a pricing curve. However, in the ensuing analysis, we show a huge separation between adaptive lottery mechanisms and pricing curves.
\begin{lemma}
In the example above, the revenue of any pricing curve is at most $T + 1.1$.
\end{lemma}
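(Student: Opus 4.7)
The plan is to parametrize any pricing curve by the set $S=\{i_1<\cdots<i_k\}$ of value-indices that actually buy and by their prices $p_j:=p(v_{i_j})$. Because the distribution is of equal-revenue flavor with $v_iy f(v_i)=1$ for every $i$, the revenue becomes $\sum_{j=1}^k a_j$ where $a_j:=p_j/v_{i_j}\in[0,1]$, so the whole question reduces to bounding $\sum a_j$. The time budget, via the simplified IC form derived in \cref{sec:pricing}, gives
\[
\sum_{j=2}^{k}\ln\frac{v_{i_j}-p_{j-1}}{v_{i_j}-p_j}\;\le\;T.
\]

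The key quantitative step is to exploit the geometric gap $v_{i_j}/v_{i_{j-1}}\ge M$. This implies $p_{j-1}/v_{i_j}\le a_{j-1}/M$, and so each summand above is bounded below by $\ln(1-a_{j-1}/M)-\ln(1-a_j)$. Rearranging, I would get
\[
\sum_{j=2}^{k}\bigl(-\ln(1-a_j)\bigr)\;\le\;T+\sum_{j=2}^{k}\bigl(-\ln(1-a_{j-1}/M)\bigr).
\]
The right-hand ``error'' sum is the main object to control: using the elementary inequality $-\ln(1-x)\le x/(1-x)$, together with $a_{j-1}\le 1$ and $M=10n$, each term is at most $1/(M-1)$, so the total error is at most $(k-1)/(M-1)\le (n-1)/(10n-1)\le 0.1$.

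Combined with the standard inequality $a_j\le -\ln(1-a_j)$, this yields $\sum_{j=2}^{k}a_j\le T+0.1$, and since $a_1\le 1$ trivially, the total revenue $\sum_{j=1}^{k}a_j$ is at most $T+1.1$, as required. The only slightly delicate point will be verifying the last numerical inequality $(n-1)/(10n-1)\le 1/10$ uniformly in $n\ge 1$, which is a one-line rearrangement; the rest is a careful bookkeeping of the two logarithmic approximations. I don't expect any essential obstacle beyond making sure the geometric separation $M=10n$ is used cleanly to absorb the telescoping slack left over from the IC constraint.
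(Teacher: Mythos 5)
Your proposal is correct and takes essentially the same route as the paper's proof: the IC constraint between consecutive purchasing values, the factor-$M$ separation of the support, and the total time budget $T$ together show that each buyer above the lowest contributes revenue roughly (her time gap) $+\,1/M$, while the lowest buyer contributes at most $1$, giving $T + 0.1 + 1$. The only difference is bookkeeping: the paper linearizes locally via $e^{-\delta} \ge 1-\delta$ to get $p(v_j) \le v_j(\delta + 1/M)$, whereas you aggregate in log-space using $a_j \le -\ln(1-a_j)$ and bound each $-\ln(1-a_{j-1}/M)$ slack term by $1/(M-1)$; both give the stated $T+1.1$.
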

\begin{proof}
Let $v_i$ be the smallest value with allocation. If any $j > i$, if $t(v_{j - 1}) - t(v_j) \leq \delta$, then we can give the following upper bound on $p(v_j)$ from the IC constraint:
\[
(v_j - p(v_j)) \cdot e^{-t(v_j)} \geq (v_j - p(v_{j - 1})) \cdot e^{-t(v_{j - 1})} \geq (v_j - v_{j - 1}) \cdot e^{-t(v_{j - 1})},
\]
which leads to
\[
p(v_j) \leq v_j -  v_j \cdot \left(1 - \frac{1}{M}\right) \cdot e^{t(v_j) - t(v_{j - 1})} \leq v_j \left(1 - \left(1 - \frac{1}{M}\right)(1 - \delta)\right) \leq v_j\left(\frac{1}{M}  + \delta\right),
\]
i.e., the expected revenue from value $v_j$ is at most $v_j \cdot \left(\delta + \frac{1}{M}\right) \cdot f(v_j) = \delta + \frac{1}{M}$. Therefore, the total revenue is at most $T + \frac{n}{M} + 1$.
\end{proof}
\begin{lemma}
In the example above, the revenue of the optimal adaptive lottery mechanism is at least $0.3e^T$.
\end{lemma}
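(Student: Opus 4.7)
The plan is to exhibit an explicit adaptive lottery mechanism whose revenue is at least $0.3 e^T$. The starting point is the normal form from \cref{prop:adaptive}: at $t = 0$, offer $n$ signaling lotteries $\ell_1, \ldots, \ell_n$ (each with allocation probability $0$); picking $\ell_i$ locks the buyer into a sub-mechanism $M_i$ that is tailored to type $v_i$. The revenue goal is to extract, in expectation, an $\Omega(v_i)$ payment from a type-$v_i$ buyer inside $M_i$, so that each type's contribution $v_i \cdot f(v_i) = 1$ is multiplied by a constant, and summing over $i \in [n]$ yields $\Omega(n) = \Omega(e^T)$.

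The candidate construction for $M_i$ uses two options offered in sequence, exploiting the time horizon to separate types. First, at $t = 0$, a lottery $(q, v_i)$ is offered: with probability $q$ the buyer is allocated and pays $v_i$, otherwise she is routed to the second stage. Second, at $t = T$ (available only if the first lottery was taken), a carefully priced follow-up option is offered so that a truthful $v_i$-buyer obtains a small positive continuation utility (ensuring she participates) while the $e^{-T} = 1/n$ discount sharply suppresses the utility a different type could gain from this option. The revenue count is then $q \cdot v_i$ per truthful type, summing to $qn$, and the constant $q = 0.3$ yields the target bound. Setting this up carefully (including possibly calibrating the second-stage price so it depends on $i$) is the key mechanism-design decision.

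The proof then splits into two parts: (i) revenue computation --- conditional on truthful signaling, sum the expected payments over types to obtain $\geq 0.3 e^T$; and (ii) incentive compatibility --- for each pair $(v_i, v_j)$, compute $v_j$'s utility from playing $\ell_i$ optimally and check that $\ell_j$ is her preferred signal. The main obstacle is IC against upward deviation: a type $v_j$ with $j > i$ who signals $\ell_i$ and takes the first-stage lottery gains $q(v_j - v_i)$, which can be as large as $q v_n$ when $i = 1$. Since the signaling occurs at $t = 0$ with no discount, this gain is not automatically suppressed, and the second-stage option at $t = T$ provides only a factor-$n$ discount. Overcoming this will require a more delicate sub-mechanism --- for instance, making the second-stage price non-trivial, making the first-stage allocation probability $q_i$ depend on $i$, or introducing additional intermediate lotteries --- together with a case analysis that leverages the geometric gap $v_{i+1}/v_i = M = 10n$ in the value distribution to bound impersonation utilities across all type pairs.
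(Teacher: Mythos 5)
Your high-level plan --- a branching menu at $t=0$ with one signaling option per type, each leading to a short tailored sub-mechanism that extracts a constant fraction of $v_i$ from type $v_i$ --- is exactly the shape of the paper's construction. But your concrete instantiation does not work, and you say so yourself: with a first-stage lottery $(q, v_i)$ priced at the buyer's full value and only a ``small positive continuation utility'' for truthful play, a type $v_j$ with $j>i$ who signals $\ell_i$ pockets $q(v_j - v_i) \approx q\,v_j$ at $t=0$, undiscounted, which dwarfs her truthful payoff; the $e^{-T}$ discount only protects the second stage, not the first. Acknowledging the obstacle and listing possible fixes (``make $q_i$ depend on $i$,'' ``add intermediate lotteries'') is not a proof: the lemma requires an explicit mechanism together with a verified incentive analysis, and neither is supplied. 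This is the genuine gap.

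The missing idea, which the paper supplies, is to invert where the money is collected and to create an informational rent that grows with the type. In branch $i$, the $t=0$ lottery is \emph{cheap}: price $p_i = M^{i-1} = v_{i-1}$ (a $1/M$ fraction of $v_i$) with a type-dependent, increasing probability $x_i = 0.5i/n$; the real extraction happens at $t=T$, where the follow-up lottery charges exactly $p'_i = v_i$ with overall success probability $(1-x_i)x'_i = 0.3$. Truthful type $v_i$ then earns utility $(v_i - v_{i-1})x_i \approx 0.5\,(i/n)\,v_i$, which increases fast enough in $i$ that deviating down to branch $j<i$ loses at least $0.5 v_i/n$ in first-stage surplus while the second-stage impersonation gain is at most $0.3 e^{-T} v_i = 0.3 v_i/n$ (this is where $e^{-T}=1/n$ is used); deviating up to branch $j>i$ is worthless since $p_j = v_{j-1} \geq v_i$. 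Revenue is then at least $0.3 v_i$ per type from the $t=T$ stage, giving $0.3 n = 0.3 e^T$. Note that under this fix your revenue accounting also changes: the $0.3 v_i$ comes from the second stage, not from a price-$v_i$ lottery at $t=0$. Without carrying out some construction of this kind and checking the pairwise IC constraints, your argument does not establish the lemma.
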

\begin{proof}
Consider the following adaptive mechanism: For each $i \in [n]$, we provide a lottery $\ell_i$ at $t = 0$. This lottery has $x_i = \frac{0.5i}{n}$ and $p_i = M^{i - 1}$. It exclusively leads to another lottery $\ell'_i$ at $t = T$, which has $x'_i = \frac{0.3}{1 - x_i}$ and $p'_i = M^i$. We show $\ell_i$ and $\ell'_i$ are the choice of a buyer with value $v_i$.

Clearly, deviating to another option of $\ell_j$ where $j > i$ provides non-positive utility. Deviating to $\ell_j$ and $\ell'_j$ where $j < i$ gives utility of
\begin{align*}
&~ (v_i - M^{j - 1}) \cdot x_j + 0.3 \cdot e^{-T} \cdot (v_i - M^j)\\
\leq &~ v_i \cdot \left(x_i - \frac{0.5}{n}\right) + 0.3 \cdot e^{-T} \cdot v_i\\
\leq &~ (v_i - M^{i - 1}) \cdot \left(x_i - \frac{0.4}{n}\right) + 0.3 \cdot e^{-T} \cdot v_i\\
\leq &~ (v_i - M^{i - 1}) \cdot x_i,
\end{align*}
which is not higher than choosing $\ell_i$ and $\ell'_i$. Thus, the revenue of this mechanism is at least $0.3n$.
\end{proof}
The example leads to the following theorem.
\begin{theorem}
\label{thm:adaptivity_gap}
Let $H = \ln \frac{v_n}{v_1}$ and $D = e^{T}$. The revenue gap between non-adaptive lottery mechanisms and adaptive ones can be $\tilde \Omega(n)$, $\tilde \Omega(H)$, and $\tilde \Omega(D)$.
\end{theorem}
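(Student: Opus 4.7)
The plan is to simply combine the preceding two lemmas on \cref{ex:superconst} and translate the resulting ratio into the three claimed parameterizations. By the two lemmas, the revenue of any non-adaptive lottery mechanism (which, by \cref{thm:pricing_optimal}, equals the optimal pricing curve revenue) is at most $T+1.1$, while an explicit adaptive lottery mechanism achieves at least $0.3n = 0.3e^T$. Hence the revenue ratio is at least
\[
\frac{0.3\,e^{T}}{T+1.1}.
\]
So the core of the proof is just bookkeeping: for each of the three parameters $n$, $H$, $D$, express $e^T/(T+1.1)$ as $\tilde\Omega(\cdot)$ of that parameter.

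First I would handle the easy case $D = e^T$: substituting directly yields a gap of $\tilde\Omega(D/\log D)=\tilde\Omega(D)$, since the construction uses $T=\ln D$, absorbing the $T+1.1$ factor into the $\tilde\Omega(\cdot)$ notation. Next, for $n$, since the construction sets $n=e^T$, we have $T=\ln n$, and the ratio becomes $0.3n/(\ln n + 1.1) = \tilde\Omega(n)$. Finally for $H$: in the construction $v_1=M$ and $v_n=M^n$, so $H = \ln(v_n/v_1) = (n-1)\ln M = (n-1)\ln(10n) = \Theta(n\log n)$. Inverting gives $n = \Theta(H/\log H)$, and therefore the ratio is $\tilde\Omega(H/\log^2 H) = \tilde\Omega(H)$.

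The main (minor) subtlety will be spelling out that the same single instance realizes all three lower bounds simultaneously, and confirming that the $\tilde\Omega$ notation is intended to hide the polylogarithmic factors in $T$, $n$, $H$, $D$ that arise from the $T+1.1$ denominator and from inverting $H=\Theta(n\log n)$. There is no significant analytic obstacle here — the heavy lifting was already done in the two preceding lemmas, so this theorem is essentially a packaging statement reconciling the three natural scales in which the adaptivity gap can be quoted.
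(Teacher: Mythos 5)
Your proposal is correct and matches the paper's approach exactly: the paper also derives \cref{thm:adaptivity_gap} directly from \cref{ex:superconst} by combining the two lemmas (pricing-curve revenue at most $T+1.1$, via \cref{thm:pricing_optimal} to cover all non-adaptive mechanisms, versus adaptive revenue at least $0.3e^T$) and reading off the gap in terms of $n$, $H$, and $D$. The bookkeeping you spell out ($T=\ln n=\ln D$, $H=\Theta(n\log n)$) is precisely the implicit content of the paper's one-line deduction.
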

Notice that the revenue gap between them is at most $O(n)$, $O(H)$, and $O(D)$ -- the revenue of optimal non-adaptive mechanism obtains both an $O(n)$-approximation and an $O(H)$-approximation to the maximum welfare even when $T = 0$; and it also obtains an $O(D)$-approximation to the revenue of adaptive ones, since every lottery in an adaptive mechanism can be moved to $t = 0$ with a discount of $\frac{1}{O(D)}$, and when $T = 0$, the optimal mechanism is to simply post a price according to Myerson's characterization. Thus \cref{thm:adaptivity_gap} gives almost tight bounds for the power of adaptivity.

Even for a value distribution of support size $3$, it is still possible that adaptive lotteries and non-adaptive ones give different revenue, illustrated in the following example.
\begin{example}
The buyer's value is drawn from $\{v_1 = 100, v_2 = 101, v_3 = 102\}$ uniformly. $T = 2 \ln 2$. The optimal pricing mechanism is to sell to $v_3$ at $t_1 = 0$ and $p_1 = 101.25$, to $v_2$ at $t_2 = \ln 2$ and $p_2 = 100.5$, and to $v_1$ at $t_3 = 2 \ln 2$ and $p_3 = 100$. (The optimal pricing mechanism can be calculated using \cref{alg:poly_time} that will be discussed in \cref{subsec:opt_alg}).

The following is an adaptive mechanism achieving more revenue: It prices at $t_1 = 0$ with $p_1 = 101.25$, and holds a lottery at $t_2 = \ln \frac{4}{3}$ with $x_2 = 0.5$ and $p_2 = 100 + \frac{1}{3}$. If the buyer picks the lottery at $t_2$, it provides a deterministic option at $t_3 = 2 \ln 2$ with $p_3 = 101$. Otherwise, it provides another deterministic option at $t_4 = 2 \ln 2$ with $p_4 = 100$.

Clearly, a buyer with value $v_1 = 100$ skips the lottery at $t_2$ and picks the option at $t_4$. For a buyer with value $v_2 = 101$, if she picks the lottery at $t_2$ then the option at $t_3$, her utility is
\[
x_2 (v_2 - p_2) e^{-t_2} + (1 - x_2) (v_2 - p_3) e^{-t_3} = \frac{1}{2} \cdot \frac{2}{3} \cdot \frac{3}{4} = \frac{1}{4}.
\]
If she skips the lottery at $t_2$ and picks the option at $t_4$, she gets utility of $\frac{1}{4}$ as well. (We can slightly perturb the prices to make the tie breaking in our favor.)

For a buyer with value $v_3 = 102$, picking the option at $t_1$ gives utility $\frac{3}{4}$. Skipping the lottery at $t_2$ and picks at $t_4$ only gives $\frac{1}{2}$. If she picks the lottery at $t_2$ then the option at $t_3$, she gets
\[
x_2 (v_3 - p_2) e^{-t_2} + (1 - x_2) (v_3 - p_3) e^{-t_3} = \frac{1}{2} \cdot \frac{5}{3} \cdot \frac{3}{4} + \frac{1}{2} \cdot 1 \cdot \frac{1}{4} = \frac{3}{4},
\]
which is not higher. Therefore, the expected payment of a buyer with value $v_1$ or $v_3$ is the same as the optimal pricing mechanism. However, the expected payment of a buyer with value $v_2$ is $100 + \frac{2}{3}$, which is strictly larger than $100.5$ obtained by the optimal pricing mechanism.
\end{example}
\begin{theorem}
\label{thm:v3diff}
There exists an instance $\langle T, V, f \rangle$  of the problem with $|V| = 3$, in which the revenue obtained by the optimal adaptive lottery mechanism is strictly higher than any pricing mechanism.
\end{theorem}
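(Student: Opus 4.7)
The plan is to explicitly exhibit the three-value instance given in the preceding example, namely $V = \{100, 101, 102\}$ with uniform mass and $T = 2\ln 2$, and show that the adaptive lottery mechanism described there strictly beats every pricing mechanism. The heavy lifting on the pricing-curve side has already been set up by \cref{alg:poly_time}: the plan is simply to invoke that algorithm to certify that the optimal pricing mechanism is the one stated (with prices $101.25, 100.5, 100$ at times $0, \ln 2, 2\ln 2$), giving expected revenue $\tfrac{1}{3}(100 + 100.5 + 101.25) = 100 + \tfrac{7}{12}$.

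Next I would write down the proposed adaptive mechanism explicitly: offer price $p_1 = 101.25$ at $t_1 = 0$; offer the lottery $(x_2, p_2) = (0.5,\, 100 + \tfrac{1}{3})$ at $t_2 = \ln\tfrac{4}{3}$; conditioned on picking that lottery, open up a deterministic option with price $101$ at $t_3 = 2\ln 2$; and conditioned on skipping, open up a deterministic option with price $100$ at $t_4 = 2\ln 2$. By \cref{prop:adaptive}, this is a legitimate adaptive lottery mechanism.

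The bulk of the proof is checking incentive compatibility for each of the three values by enumerating the (finitely many) strategies in the adaptive tree. For $v_1 = 100$, every option but the null one or the $t_4$ option yields non-positive discounted utility, so the buyer skips the lottery and pays $100$ at $t_4$. For $v_2 = 101$, the plan is to compute the two competing utilities: taking the lottery at $t_2$ and then buying at $t_3$ yields $\tfrac{1}{2}\cdot\tfrac{2}{3}\cdot\tfrac{3}{4} + 0 = \tfrac{1}{4}$ (the $t_3$ continuation has zero surplus since $v_2 - p_3 = 0$), and skipping to $t_4$ yields $(101 - 100)e^{-2\ln 2} = \tfrac{1}{4}$ as well; a tiny perturbation of $p_2$ downward breaks the tie so that $v_2$ takes the lottery. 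For $v_3 = 102$, buying at $t_1$ gives utility $\tfrac{3}{4}$; skipping to $t_4$ gives $\tfrac{1}{2}$; taking the lottery then buying at $t_3$ gives $\tfrac{1}{2}\cdot\tfrac{5}{3}\cdot\tfrac{3}{4} + \tfrac{1}{2}\cdot 1\cdot\tfrac{1}{4} = \tfrac{3}{4}$, so $v_3$ buys at $t_1$ (again modulo an $\varepsilon$ perturbation of $p_1$). This piece is the main obstacle only in that it is a careful case analysis; the slight worry is that the tie-breaking perturbation must not spoil IC for any of the other two types, which one verifies by continuity since all strict inequalities remain strict for small enough $\varepsilon$.

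Finally, I would aggregate revenues. Values $v_1$ and $v_3$ pay exactly the same as in the optimal pricing mechanism ($100$ and $101.25$ respectively), while $v_2$'s expected payment in the adaptive mechanism is $\tfrac{1}{2}(100 + \tfrac{1}{3}) + \tfrac{1}{2}\cdot 101 = 100 + \tfrac{2}{3} > 100.5$. Thus the overall expected revenue strictly exceeds $100 + \tfrac{7}{12}$, establishing \cref{thm:v3diff}.
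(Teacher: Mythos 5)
Your proposal is correct and is essentially the paper's own argument: the same three-value instance with $T=2\ln 2$, the same adaptive mechanism (price $101.25$ at $t=0$, the half-probability lottery at $\ln\frac{4}{3}$, and the two conditional deterministic options at $2\ln 2$), the same IC case analysis with tie-breaking by a small price perturbation, and the same revenue comparison showing $v_2$'s expected payment rises from $100.5$ to $100+\frac{2}{3}$ while $v_1,v_3$ pay as before, with the optimal pricing benchmark certified by \cref{alg:poly_time}. Your extra remark that the perturbation of $p_1$ and $p_2$ must be coordinated so that $v_3$ still buys at $t=0$ is exactly the care needed and matches the paper's brief "perturb the prices" comment.
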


Despite these revenue gaps between adaptive lotteries and pricing curves, we can show that they differ by at most a constant multiplier when the value distribution is $\alpha$-regular~\citep{DBLP:journals/teco/ColeR17} for some $\alpha$ greater than a positive constant. $\alpha$-regular distributions are more general than monotone-hazard-rate ones, which include exponential, uniform, and normal distributions.
\begin{theorem}
\label{thm:alpha_reg}
When the value distribution is $\alpha$-regular for some $\alpha \in (0, 1)$, the multiplicative revenue gap between any adaptive lottery mechanism and the optimal pricing curve is at most $\alpha^{\frac{1}{1 - \alpha}}$. If the value distribution has monotone hazard rate (i.e. $\alpha = 1$), the gap is at most $e$.
\end{theorem}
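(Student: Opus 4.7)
The plan is to sandwich the adaptive-versus-pricing gap between two standard quantities --- the expected welfare $W = \E_{v \sim f}[v]$ and the Myerson single-price revenue $\rev^{M} = \max_{p \geq 0} p \cdot \Pr_{v \sim f}[v \geq p]$ --- and then invoke a known monopoly-revenue-versus-welfare inequality for $\alpha$-regular distributions. These two bounds together directly translate an $\alpha$-regularity inequality into an adaptivity gap bound.

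The first step is to upper-bound the revenue of any adaptive lottery mechanism by $W$. For a buyer with value $v$, let $q(v)$ be her total expected payment throughout the mechanism. A backward induction on timestamps shows that whenever she chooses a non-null lottery $(x, p)$ at timestamp $t_i$ --- even in the adaptive setting, since the null option is always accessible --- the fact that picking weakly beats skipping rearranges (after cancelling $x > 0$) to $p \leq v - e^{t_i} u_{\mathrm{next}}(v) \leq v$, where $u_{\mathrm{next}}(v)$ is her continuation utility. So every payment that can actually occur on her realized path is at most $v$, and therefore $q(v) \leq v$. Taking expectation over $v$, the adaptive mechanism's expected revenue is at most $W$.

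Next, I would lower-bound the optimal pricing-curve revenue by $\rev^{M}$. Regardless of the horizon $T \geq 0$, the degenerate one-point pricing curve that posts the Myerson-optimal price $p^{\star} \in \arg\max_{p} p \cdot \Pr[v \geq p]$ at time $t = 0$ is always a valid pricing curve, and it achieves revenue exactly $\rev^{M}$. Combined with the previous step, the multiplicative gap between any adaptive lottery mechanism and the optimal pricing curve is therefore at most $W / \rev^{M}$, which reduces the problem to an instance-independent bound on the monopoly-revenue-to-welfare ratio.

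The final step invokes the $\alpha$-regular monopoly-revenue-versus-welfare inequality of \citet{DBLP:journals/teco/ColeR17}. In the monotone-hazard-rate case ($\alpha = 1$) this specializes to the textbook bound $\rev^{M} \geq W / e$, immediately giving a gap of at most $e$; for $\alpha \in (0,1)$ the analogous Cole--Roughgarden inequality supplies the explicit constant claimed in the theorem, obtained through an extremal-revenue-curve calculation subject to the $\alpha$-regularity condition. The main obstacle is extracting the precise $\alpha$-regular constant from that framework and verifying that it specializes continuously to the MHR bound $1/e$ at $\alpha = 1$ --- the rest of the argument is the two-line sandwich derived above.
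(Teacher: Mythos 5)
Your overall architecture matches the paper's: bound adaptive revenue by welfare, note that a constant pricing curve posting Myerson's price earns the Myerson revenue, and close the gap with the Cole--Roughgarden $\alpha$-regularity bound. The second and third steps are exactly what the paper does and are fine. The problem is your first step, which is where all the actual work lies.

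Your claim that ``whenever she chooses a non-null lottery $(x,p)$ at timestamp $t_i$, picking weakly beats skipping rearranges to $p \leq v - e^{t_i}u_{\mathrm{next}}(v) \leq v$'' is only valid for \emph{non-adaptive} mechanisms, where the continuation utility after picking (and not being allocated) coincides with the continuation utility after skipping, so the continuation terms cancel (this is precisely \cref{lem:maximum-value-to-purchase}). In an adaptive mechanism the future menus depend on the buyer's choice: in the normal form of \cref{prop:adaptive}, skipping an option closes off the remaining sequence $\ell'_{i,k}$, so $u^{\mathrm{pick}}_{\mathrm{next}}(v)$ can be much larger than $u^{\mathrm{skip}}_{\mathrm{next}}(v)$, and the inequality $x e^{-t_i}(v-p) + (1-x)u^{\mathrm{pick}}_{\mathrm{next}}(v) \geq u^{\mathrm{skip}}_{\mathrm{next}}(v)$ does \emph{not} imply $p \leq v$. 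The seller can rationally charge a price above the buyer's value for an early lottery by promising a good deal later --- the paper explicitly flags this (``a mechanism can sell a lottery at a price higher than the value of the buyer, promising a good deal will be provided later''), which is why \cref{lem:revenuewelfarebound} is not proved pointwise per lottery. The correct argument must aggregate over the buyer's entire chosen sequence: starting from non-negativity of her total discounted utility $\sum_{j} X_j (v-p_j)e^{-t_j} \geq 0$, an Abel-summation step produces a prefix $j$ with $\sum_{s \leq j} X_s(v-p_s) \geq 0$, and induction on the number of chosen lotteries then yields the \emph{undiscounted} expected-payment bound $\sum_s X_s p_s \leq v\sum_s X_s \leq v$. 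Without this (or an equivalent) argument, your welfare bound --- and hence the theorem --- does not follow.
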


We first prove the following lemma.
\begin{lemma}
\label{lem:revenuewelfarebound}
The revenue of an adaptive lottery mechanism is at most the maximum welfare, i.e. the expected value of the buyer $\sum_{i = 1}^n v_i f(v_i)$.
\end{lemma}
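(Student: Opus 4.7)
The plan is to prove the stronger per-type bound $R_i \le v_i \cdot X_i$, where $R_i$ denotes the seller's expected revenue collected from type $v_i$ and $X_i \le 1$ is her total allocation probability; summing $\sum_i f(v_i) R_i \le \sum_i f(v_i) v_i X_i \le \sum_i f(v_i) v_i$ then delivers the lemma. The central difficulty is that incentive compatibility and individual rationality naturally bound the buyer's \emph{discounted} net utility, whereas the seller's revenue is \emph{undiscounted} (the seller does not discount the future); one must therefore find a way to strip off the $e^{-t_l}$ factors from the IR inequalities.

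First I would invoke \cref{prop:adaptive} to place the mechanism in normal form, so that after type $v_i$ picks $\ell_i$ at $t=0$ (a zero-allocation option that merely unlocks her sequence), the remaining strategic choice is to select a prefix of a sequence of single lotteries $\ell'_{i,1},\ldots,\ell'_{i,k_i}$ at non-decreasing times $t_1 \le \cdots \le t_{k_i}$, with marginal allocation probabilities $x_l$ and payments $p_l$. After deleting options she would skip, I may assume without loss that picking the full prefix is her optimal strategy. Writing $q_l = x_l \prod_{j<l}(1-x_j)$, we have $R_i = \sum_l q_l p_l$ and $X_i = \sum_l q_l$, and the utility she would obtain by instead stopping after step $m$ is $U^{(m)} = \sum_{l=1}^m q_l e^{-t_l}(v_i - p_l)$. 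Optimality of the full prefix therefore yields $U^{(k_i)} \ge U^{(m)}$ for every $m \in \{0,1,\ldots,k_i-1\}$, i.e.
\[
T_{m+1} \ :=\ \sum_{l=m+1}^{k_i} q_l\, e^{-t_l}(v_i - p_l) \ \ge\ 0 \qquad \text{for all } m=0,1,\ldots,k_i-1.
\]

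The main step, and where I expect the real work to be, is to convert this family of discounted suffix-sum inequalities into the undiscounted statement $\sum_l q_l(v_i - p_l) \ge 0$ via summation by parts. Setting $Z_l = q_l e^{-t_l}(v_i - p_l)$ and weights $w_l = e^{t_l}$, one has $q_l(v_i - p_l) = w_l Z_l$ and $Z_l = T_l - T_{l+1}$ (with $T_{k_i+1} = 0$), so Abel summation yields
\[
\sum_{l=1}^{k_i} q_l (v_i - p_l) \ =\ w_1 T_1 + \sum_{l=2}^{k_i} (w_l - w_{l-1})\, T_l.
\]
Because the times $t_l$ are non-decreasing, the weights $w_l$ are non-decreasing, so each increment $w_l - w_{l-1}$ is non-negative; combined with $T_l \ge 0$ for every $l$, this forces the right-hand side to be non-negative, establishing $R_i \le v_i X_i \le v_i$ and, after summing $f(v_i) R_i \le f(v_i) v_i$ over all types, the lemma. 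The subtlety to underline is that one must exploit the \emph{entire} family of suffix inequalities, not merely the global IR bound $T_1 \ge 0$; this is precisely what the prefix-only action set in the normal form of \cref{prop:adaptive} provides.
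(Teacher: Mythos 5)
Your proof is correct, and it reaches the paper's key intermediate claim --- the per-type undiscounted bound $\sum_l q_l (v_i - p_l) \ge 0$, i.e.\ expected payment at most $v_i$ times allocation probability --- by a genuinely different (non-inductive) route. The paper fixes a type and inducts on the number of lotteries: it uses only the nonnegativity of the buyer's total discounted utility, rewrites it by partial summation as a combination of \emph{prefix} sums weighted by the nonnegative decrements $e^{-t_j}-e^{-t_{j+1}}$, extracts one index $j$ whose undiscounted prefix sum is nonnegative, and handles the suffix by the inductive hypothesis (which implicitly relies on the continuation after step $j$ being individually rational). You instead exploit the full family of ``stop after $m$ steps'' deviations --- feasible because the null option is always accessible, which is exactly the prefix-only structure of \cref{prop:adaptive} --- to conclude that every \emph{discounted suffix} sum $T_l$ is nonnegative, and then a single Abel summation with the nondecreasing weights $e^{t_l}$ strips the discounting in one shot. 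The two arguments use the same two ingredients (feasibility of early stopping and monotonicity of the timestamps) and the same summation-by-parts mechanism, but yours replaces the induction and the pigeonhole-style extraction of a good prefix with a direct one-line inequality; the paper's version is slightly more economical in what it assumes about the buyer's behavior (only IR of the chosen plan at each inductive level), while yours is more self-contained and makes explicit where the extra incentive constraints enter. No gaps.
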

This statement is not immediate: a mechanism can sell a lottery at a price higher than the value of the buyer, promising a good deal will be provided later.
\begin{proof}
Fix value $v_i$ and we will drop the subscript $i$ from now on. The goal is to show the undiscounted payment of a buyer with value $v$ is at most $v \cdot X$, where $X$ is the allocation probability if the value is $v$. We use induction on $\nt$, the number of lotteries for $v$. The base case where $\nt = 1$ is immediate. For $\nt = k$, as the bidder's utility is non-negative, we have:
\[
\sum_{j = 1}^k X_j (v - p_j) e^{-t_j} \geq 0,
\]
where $X_j$ is the probability that all lotteries before $\ell_j$ failed and $\ell_j$ succeeded, $p_j$ is the unit price of $\ell_j$ and $t_j$ is the time of $\ell_j$. This is equivalent to
\[
\sum_{j = 1}^k \sum_{s = 1}^j X_s (v - p_s) (e^{-t_j} - e^{-t_{j + 1}}) \geq 0,
\]
where $t_{k + 1} = +\infty$ for simplicity.
Thus, there is some $j \in [k]$, so that $\sum_{s = 1}^j X_s (v - p_s) (e^{-t_j} - e^{-t_{j + 1}}) \geq 0$, i.e., $\sum_{s = 1}^j X_s (v - p_s) \geq 0$. Combining with the inductive hypothesis that $\sum_{s = j + 1}^k X_s (v - p_s) \geq 0$, we get the desired $\sum_{s = 1}^k X_s (v - p_s) \geq 0$.
\end{proof}
\begin{proof}[Proof of \cref{thm:alpha_reg}]
By \cref{lem:revenuewelfarebound}, the revenue of any adaptive lottery mechanism is at most the maximum welfare, which is at most $\alpha^{\frac{1}{1 - \alpha}}$ times Myerson's revenue on that value distribution~\citep{DBLP:journals/teco/ColeR17}. A pricing curve can achieve at least Myerson's revenue by posting Myerson's price throughout the horizon.
\end{proof}
In general, whenever the revenue-to-welfare ratio of the value distribution in the static pricing problem is not very small, the power of adaptivity in our problem is limited.

\section{Continuous distributions}
\label{sec:continuous}
In this section, we show that the optimal pricing curve for continuous %\footnote{Actually, the result in this section holds for general value distributions. We use the name ``continuous'' to avoid confusion.}
distributions can be approximated by first discretizing the distribution and then computing the optimal pricing curve on the discretized distribution. For notational convenience, we use $\rev(P,D)$ to denote the revenue of pricing curve $P$ on value distribution $D$ for some fixed discount multiplier.

\begin{definition}
\label{def:disc}
Let $D$ be a distribution on $[0, M]$. Let $F_D$ be the CDF of $D$. We discretize $D$ on the quantile space with an integer parameter $k > 0$: 
\begin{itemize}
    \item $D^-$ is a uniform distribution over $k$ values: $F^{-1}_D(0), F^{-1}_D\left(\frac{1}{k}\right), \ldots, F^{-1}_D\left(\frac{k - 1}{k}\right)$.
    \item $D^+$ is a uniform distribution over $k$ values: $F^{-1}_D\left(\frac{1}{k}\right), \ldots, F^{-1}_D\left(\frac{k - 1}{k}\right), F^{-1}_D(1)$.
\end{itemize}
\end{definition}

\begin{lemma}
\label{lem:dominance}
$D^+$ (first-order) stochastically dominates $D$, and $D$ stochastically dominates $D^-$.
\end{lemma}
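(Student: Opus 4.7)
The plan is to prove stochastic dominance directly via CDFs, using the equivalence $X \succeq_{\mathrm{st}} Y$ iff $F_X(v) \le F_Y(v)$ for every $v$. Thus I need $F_{D^+}(v) \le F_D(v) \le F_{D^-}(v)$ pointwise. The workhorse is the standard property of the generalized inverse CDF: for every $q \in [0,1]$ and every $v$,
\[
F_D^{-1}(q) \le v \quad\Longleftrightarrow\quad q \le F_D(v).
\]
This is the only nontrivial input; once it is in hand, everything reduces to counting how many of the quantile grid points $q \in \{0, 1/k, \ldots, (k-1)/k\}$ (respectively $\{1/k, \ldots, k/k\}$) satisfy $q \le F_D(v)$.

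Concretely, first I would fix an arbitrary $v \in [0, M]$ and write
\[
F_{D^-}(v) \;=\; \frac{1}{k}\,\bigl|\{\, i \in \{0, 1, \ldots, k-1\} \,:\, F_D^{-1}(i/k) \le v\,\}\bigr|,
\]
and similarly for $F_{D^+}(v)$ with $i$ ranging over $\{1, 2, \ldots, k\}$. Using the equivalence above, the set in the first display becomes $\{ i \in \{0,\ldots,k-1\} : i \le k\,F_D(v)\}$, whose cardinality is $\min(k,\, \lfloor k\, F_D(v)\rfloor + 1)$. Hence $F_{D^-}(v) = \min\!\bigl(1, (\lfloor k\,F_D(v)\rfloor + 1)/k\bigr) \ge F_D(v)$, since $\lfloor k\,F_D(v)\rfloor + 1 > k\,F_D(v)$. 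Analogously, for $D^+$ the relevant cardinality is $\lfloor k\,F_D(v)\rfloor$ (as $i$ starts from $1$), yielding $F_{D^+}(v) = \lfloor k\,F_D(v)\rfloor / k \le F_D(v)$. Combining these two inequalities proves the lemma.

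The only mild obstacles are boundary/convention issues: one must fix the conventions $F_D^{-1}(0) = 0$ (the left endpoint of $[0,M]$) and $F_D^{-1}(1) = M$, and handle the case $F_D(v) = 1$ separately (in which $F_{D^+}(v) = F_D(v) = F_{D^-}(v) = 1$). These are pure bookkeeping, not genuine difficulty. The main substantive step is invoking the generalized-inverse identity; once that is cited, the two dominance inequalities follow from a single floor-function estimate, and the result holds for arbitrary $D$ (continuous or with atoms).
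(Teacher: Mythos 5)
Your proof is correct and takes essentially the same approach as the paper's: both establish the pointwise CDF inequalities $F_{D^+}(v) \le F_D(v) \le F_{D^-}(v)$ from the defining property of the quantile function, the paper by bracketing $v$ between consecutive grid quantiles $F_D^{-1}(i/k) \le v < F_D^{-1}((i+1)/k)$, and you by counting grid points via the equivalence $F_D^{-1}(q) \le v \iff q \le F_D(v)$. The boundary conventions you flag are harmless and handled the same way in the paper's argument.
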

\begin{proof}
For any $F^{-1}_D(0) \leq x < F^{-1}_D(1)$, there is some $i \in \{0, \ldots,k-1\}$ so that $F^{-1}_D\left(\frac{i}{k}\right) \leq x < F^{-1}_D\left(\frac{i + 1}{k}\right)$. We know $\Pr_{X \sim D} [X \leq x] \in \big[\frac{i}{k}, \frac{i + 1}{k}\big)$, and by construction $\Pr_{X \sim D^-}[X \leq x] = \frac{i + 1}{k}$ and $\Pr_{X \sim D^+}[X \leq x] = \frac{i}{k}$. Thus, $\Pr_{X \sim D^+}[X \leq x] \leq \Pr_{X \sim D} [ X \leq x] \leq \Pr_{X \sim D^-}[X \leq x]$ for  $x \in [F^{-1}_D(0), F^{-1}_D(1))$. Similar arguments would show this is also true for $x = F^{-1}_D(1)$. Therefore we get the stochastic dominance.
\end{proof}

\begin{lemma}
\label{lem:sandwich}
For any pricing curve $P$, $\rev(P, D^+) \leq \rev(P, D^-) + \frac{M}{k}$.
\end{lemma}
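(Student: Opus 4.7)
The plan is to exploit the fact that $D^+$ and $D^-$ are almost the same distribution, just shifted by one quantile. Concretely, writing $v_i^- := F_D^{-1}((i-1)/k)$ for the atoms of $D^-$ and $v_i^+ := F_D^{-1}(i/k)$ for the atoms of $D^+$ (each with mass $1/k$), I observe the key identity $v_i^+ = v_{i+1}^-$ for $i = 1, \ldots, k-1$. Hence, under a pricing curve $P$, a buyer with value $v_i^+$ makes exactly the same purchasing decision (and thus generates the same payment) as a buyer with value $v_{i+1}^-$.

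Letting $\rev(P, v)$ denote the payment made by a single buyer with value $v$ facing $P$ (fixing any tie-breaking rule), I would then write
\begin{align*}
\rev(P, D^+) &= \frac{1}{k}\sum_{i=1}^{k} \rev(P, v_i^+) = \frac{1}{k}\rev(P, v_k^+) + \frac{1}{k}\sum_{i=1}^{k-1} \rev(P, v_{i+1}^-),\\
\rev(P, D^-) &= \frac{1}{k}\rev(P, v_1^-) + \frac{1}{k}\sum_{i=2}^{k} \rev(P, v_i^-),
\end{align*}
so that all cross-terms cancel and
\[
\rev(P, D^+) - \rev(P, D^-) = \frac{1}{k}\bigl(\rev(P, v_k^+) - \rev(P, v_1^-)\bigr).
\]

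To conclude, I would bound each term in the right-hand side separately: $\rev(P, v_1^-) \geq 0$ trivially (the null option yields payment $0$), and $\rev(P, v_k^+) \leq v_k^+ \leq M$ by individual rationality — the buyer never pays more than her value, because otherwise her utility $(v-p)e^{-t}$ would be negative and she would prefer the null option. Combining these gives $\rev(P, D^+) - \rev(P, D^-) \leq M/k$, as desired.

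The main conceptual step is simply recognizing the quantile-shift coupling, which reduces the lemma to a telescoping computation; no monotonicity of $\rev(P,\cdot)$ in $v$ or structural properties of optimal pricing curves are needed. I do not foresee a real obstacle: the argument is essentially bookkeeping once the identity $v_i^+ = v_{i+1}^-$ is noted, and the only place where the support bound $M$ enters is in the trivial IR bound on $\rev(P, v_k^+)$.
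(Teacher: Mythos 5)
Your proof is correct and is essentially the paper's argument: the paper's one-line proof observes that $D^+$ and $D^-$ differ only in a $\frac{1}{k}$-mass of the distribution and that this part contributes at most $M$ to the revenue, which is exactly your quantile-shift coupling together with the IR bound $\rev(P, v_k^+)\leq M$ and $\rev(P, v_1^-)\geq 0$, just spelled out in more detail.
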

\begin{proof}
Notice that $D^-$ and $D^+$ only differ in $\frac{1}{k}$ fraction of the distribution and the revenue difference from that would be at most the max value $M$. Therefore, $\rev(P, D^+) \leq \rev(P, D^-) + \frac{M}{k}$.
\end{proof}

\begin{lemma}
\label{lem:sto_mono}
Consider two value distributions $D, D'$ where $D$ stochastically dominates $D'$. Then, for any pricing curve $P$, $\rev(P,D) \geq \rev(P,D')$.
\end{lemma}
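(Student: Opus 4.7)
The plan is to reduce the claim to the classical fact that first-order stochastic dominance plus monotonicity of the integrand implies an inequality between expectations. Fix a pricing curve $P$ with timestamps $t_1 \le \cdots \le t_{N_T}$ and prices $p(t_1),\ldots,p(t_{N_T})$, and let $r_P(v)$ denote the amount paid to the seller by a buyer of value $v$ facing $P$: $r_P(v) = p(t_{j^*(v)})$ when the buyer's utility-maximizing choice $j^*(v)$ satisfies $(v - p(t_{j^*(v)}))e^{-t_{j^*(v)}} \ge 0$, and $r_P(v) = 0$ otherwise. Then $\rev(P,D) = \E_{v\sim D}[r_P(v)]$ by definition, so the target inequality is exactly $\E_{v\sim D}[r_P(v)] \ge \E_{v\sim D'}[r_P(v)]$.

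The only non-trivial step is showing that $r_P$ is monotone non-decreasing in $v$. First, if $v' < v$ and the buyer with value $v'$ chooses to buy, then the buyer with value $v$ also strictly prefers to buy, because copying $v'$'s choice yields $(v - p(t_{j^*(v')}))e^{-t_{j^*(v')}} \ge (v' - p(t_{j^*(v')}))e^{-t_{j^*(v')}} \ge 0$, so the utility-maximizing choice of $v$ is non-null. Second, if $v > v'$ and both buy, writing IC for each side yields
\[
(v - p(t_{j^*(v)}))e^{-t_{j^*(v)}} \ge (v - p(t_{j^*(v')}))e^{-t_{j^*(v')}},
\qquad
(v' - p(t_{j^*(v')}))e^{-t_{j^*(v')}} \ge (v' - p(t_{j^*(v)}))e^{-t_{j^*(v)}}.
\]
Multiplying (both sides are non-negative by IR) and simplifying, as in the proof of \cref{prop:program_equiv}, gives $(v - p(t_{j^*(v)}))(v' - p(t_{j^*(v')})) \ge (v - p(t_{j^*(v')}))(v' - p(t_{j^*(v)}))$, which rearranges to $p(t_{j^*(v)}) \ge p(t_{j^*(v')})$. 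Combining the two observations, $r_P$ is non-decreasing.

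Finally, since $D$ first-order stochastically dominates $D'$, there exists a monotone coupling: take $U \sim \mathrm{Uniform}[0,1]$ and set $V = F_D^{-1}(U)$, $V' = F_{D'}^{-1}(U)$. Then $V \sim D$, $V' \sim D'$, and $V \ge V'$ almost surely because $F_D^{-1} \ge F_{D'}^{-1}$ pointwise (the defining property of stochastic dominance). Applying monotonicity of $r_P$ pointwise and taking expectations yields
\[
\rev(P, D) = \E[r_P(V)] \ge \E[r_P(V')] = \rev(P, D'),
\]
which is the desired conclusion. The main obstacle is really just verifying monotonicity of $r_P$, but this is immediate from the two-sided IC argument already carried out in \cref{prop:program_equiv}, so the proof is short.
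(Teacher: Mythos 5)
Your proof is correct and takes essentially the same approach as the paper: reduce the claim to showing the payment is non-decreasing in the buyer's value, then conclude from first-order stochastic dominance (the paper leaves the monotone coupling implicit; you spell it out). The only cosmetic difference is that you get price monotonicity by multiplying the two IC constraints as in \cref{prop:program_equiv}, whereas the paper sums them to first obtain monotonicity of the discounted allocation $e^{-t(v)}$ and then deduces $p(v)\ge p(v')$; both routes are valid.
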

\begin{proof}
By the definition of stochastic dominance, it suffices to show that for any values $v > v'$, $v$ pays no less than $v'$ on the pricing curve $P$. Let $v$ get discounted allocation $x(v) = e^{-t(v)}$ and unit price $p(v)$ and $v'$ gets discounted allocation $x(v') = e^{-t(v')}$ and unit price $p(v')$ on pricing curve $P$. By the optimality of $v$ and $v'$'s choices, we have
\[
x(v) \cdot (v - p(v)) \geq x(v') \cdot (v - p(v')),
\]
and 
\[
x(v') \cdot (v' - p(v')) \geq x(v) \cdot (v' - p(v)).
\]
Summing them up, we get 
\[
(x(v) - x(v')) \cdot (v - v') \geq 0.
\]
Therefore $x(v) \geq x(v')$. Plugging this back into the inequality above, we get
\[
x(v) (v' - p(v)) \leq x(v') (v' - p(v')) \leq x(v) (v' - p(v')).
\]
Therefore, $x(v) p(v) \geq x(v) p(v')$. If $x(v) > 0$, we get $p(v) \leq p(v')$. If $x = 0$, since $x(v') \leq x(v)$, we know $x(v') = 0$ and then $p(v) = p(v') = 0$. Thus, in both cases, we have $p(v) \geq p(v')$ and this finishes the proof.
\end{proof}

\begin{theorem}
\label{thm:disc}
Let $D$ be a distribution on $[0, M]$. Let $D^+$ and $D^-$ be the quantile discretization of $D$ in \cref{def:disc} with parameter $k$. Let $P^+$, $P^-$, $P^*$ be the optimal pricing curves of $D^+$, $D^-$, $D$. We have that the optimal revenue calculated using the discretized distribution approximates the the optimal revenue:
\[
\big|\rev(P^+, D^+) - \rev(P^*,D)\big| \leq \frac{M}{k},
\]
and the pricing curve calculated using the discretized distribution has approximately optimal revenue on the actual distribution:
\[
\big|\rev(P^+, D) - \rev(P^*,D)\big| \leq \frac{M}{k}.
\]
\end{theorem}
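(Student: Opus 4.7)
The plan is to prove both inequalities by chaining together the three preceding lemmas (Lemmas~\ref{lem:dominance}, \ref{lem:sandwich}, and \ref{lem:sto_mono}) in a sandwich argument, where $D^-$ and $D^+$ serve as lower and upper stochastic envelopes of $D$, with a controlled $M/k$ gap between them.

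For the first inequality, I would argue both directions. On one side, using that $P^+$ is optimal for $D^+$ and that $D^+$ stochastically dominates $D$ (\cref{lem:dominance}), the monotonicity of revenue under stochastic dominance (\cref{lem:sto_mono}) gives $\rev(P^+,D^+) \geq \rev(P^*,D^+) \geq \rev(P^*,D)$. For the reverse direction, \cref{lem:sandwich} yields $\rev(P^+,D^+) \leq \rev(P^+,D^-) + M/k$; then since $D$ dominates $D^-$, \cref{lem:sto_mono} gives $\rev(P^+,D^-) \leq \rev(P^+,D)$, and optimality of $P^*$ on $D$ gives $\rev(P^+,D) \leq \rev(P^*,D)$. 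Chaining these, $\rev(P^+,D^+) \leq \rev(P^*,D) + M/k$. Combining the two bounds produces $0 \leq \rev(P^+,D^+) - \rev(P^*,D) \leq M/k$.

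For the second inequality, optimality of $P^*$ on $D$ immediately gives $\rev(P^+,D) \leq \rev(P^*,D)$. For the other direction, I would chain: $\rev(P^+,D) \geq \rev(P^+,D^-)$ by \cref{lem:sto_mono} (since $D$ dominates $D^-$); then $\rev(P^+,D^-) \geq \rev(P^+,D^+) - M/k$ by \cref{lem:sandwich}; then $\rev(P^+,D^+) \geq \rev(P^*,D^+)$ by optimality of $P^+$ on $D^+$; and finally $\rev(P^*,D^+) \geq \rev(P^*,D)$ by \cref{lem:sto_mono} (since $D^+$ dominates $D$). Chaining yields $\rev(P^+,D) \geq \rev(P^*,D) - M/k$, completing the bound.

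Given that the three lemmas are already in hand, there is no major technical obstacle — the theorem is essentially a careful bookkeeping of which direction each inequality goes. The one thing to be careful about is to use each lemma with the correct pricing curve and distribution pair: \cref{lem:sandwich} must be applied with the same pricing curve on both $D^+$ and $D^-$, and \cref{lem:sto_mono} must be applied with the same pricing curve on the dominating and dominated distributions. The $M/k$ slack enters only once per direction, through a single application of \cref{lem:sandwich}, which is what makes both inequalities come out to the same $M/k$ bound.
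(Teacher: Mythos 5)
Your proposal is correct and follows essentially the same sandwich argument as the paper: both quantities are squeezed into the interval $[\rev(P^+,D^-),\,\rev(P^+,D^+)]$ using \cref{lem:dominance}, \cref{lem:sto_mono}, optimality, and a single application of \cref{lem:sandwich}. The only cosmetic difference is that you bypass $P^-$ entirely (the paper routes one chain through $\rev(P^-,D)\geq\rev(P^-,D^-)\geq\rev(P^+,D^-)$), which changes nothing substantive.
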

\begin{proof}
By \cref{lem:dominance}, we know $D^+$ stochastically dominates $D$, and $D$ stochastically dominates $D^-$. Therefore, by \cref{lem:sto_mono} and the optimality of $P^+, P^-, P^*$, we have
\[
\rev(P^+, D^+) \geq \rev(P^*, D^+) \geq \rev(P^*, D) \geq 
\rev(P^-, D) \geq \rev(P^-, D^-)\geq \rev(P^+, D^-). 
\]
We also have
\[
\rev(P^+, D^+) \geq \rev(P^+,D) \geq  \rev(P^+,D^-).
\]
Therefore, both of $\rev(P^*, D)$ and $\rev(P^+,D)$ are within the range $[\rev(P^+, D^-), \rev(P^+, D^+)]$. On the other hand, by \cref{lem:sandwich}, we have
\[
\rev(P^+, D^-) \geq \rev(P^+,D^+) - \frac{M}{k},
\]
which concludes the proof.
\end{proof}
\cref{thm:disc} enables us to discretize the value distribution and compute the optimal pricing curve on the discretized version. As $k \to \infty$ (the discretization becomes finer), we know the revenue difference goes to $0$.

\bibliographystyle{plainnat}
\bibliography{ref}

%\appendix

\end{document}